\documentclass{article}
\usepackage[preprint]{neurips_2026}

\usepackage[american]{babel}

\usepackage{natbib} 
\usepackage{mathtools} 
\usepackage{booktabs} 
\usepackage{subfigure}
\usepackage{subcaption}
\usepackage{amsmath, empheq}
\usepackage{amssymb}
\usepackage{mathtools}
\usepackage{amsthm}
\usepackage{tabularx}
\usepackage{array}
\usepackage{longtable}
\usepackage{xcolor}
\usepackage{enumitem}
\usepackage{stackengine}
\usepackage{todonotes}
\usepackage{makecell}
\usepackage{titletoc}
\usepackage{array} 
\usepackage{makecell}
\usepackage[ruled, vlined, linesnumbered]{algorithm2e}
\usepackage{hyperref}

\newcommand{\ignore}[1]{}

\newtheorem{theorem}{Theorem}[section]
\newtheorem{proposition}[theorem]{Proposition}
\newtheorem{lemma}[theorem]{Lemma}
\newtheorem{corollary}[theorem]{Corollary}
\newtheorem{definition}[theorem]{Definition}
\newtheorem{assumption}[theorem]{Assumption}

\DeclareMathOperator*{\argmax}{arg\,max}

\newcommand{\scen}[1]{\texttt{#1}}
\newcommand{\indep}{\mathrel{\perp\!\!\!\perp}}

\title{Detecting Changes in Causal Dependence \\ with Kernels and Copulas}

\author{%
    Shakeel Gavioli-Akilagun\thanks{Corresponding author.} \\
    Department of Decision Analytics and Operations \\
    City University of Hong Kong \\
    Hong Kong, China \\
    \texttt{sgavioli@cityu.edu.hk} 
  \And
  Kieran Wood \\
    Oxford-Man Institute of Quantitative Finance \\
    University of Oxford \\
    Oxford, United Kingdom \\
  \texttt{kieran.wood@eng.ox.ac.uk}
  \And 
  Francesco Quinzan \\
The Department of Engineering Science \\
    University of Oxford \\
    Oxford, United Kingdom \\
    \texttt{francesco.quinzan@eng.ox.ac.uk}
}

\begin{document}

\maketitle

\begin{abstract}
We propose a framework for determining whether the causal dependence of an outcome $Y$ on a covariate $X$ changes at a given time point, given confounders $\boldsymbol{Z}$. For instance, in financial markets, the effect of a market indicator on asset returns may causally change over time. While many existing measures of association can be used to detect changes in joint and marginal distributions, in the absence of strong assumptions on the data generating process none are suitable for detecting changes in the causal mechanism or in the strength of causal relationship. In this work we approach the problem from a fully non-parametric perspective, and treat the causal mechanism as well as the distribution of the data as unknown. We introduce a quantity based on the integrated difference between kernel mean embeddings of certain conditionals copula, which is provably equal to zero if the causal dependence does not change and strictly positive else. A near-linear time estimator for the quantity is proposed, with rates of convergence explicitly spelled out. Extensive experiments demonstrate that the proposed statistic achieves high accuracy on multiple synthetic and real-world datasets. We additionally show how the proposed statistic can be used for change point detection when the goal is to detect changes in causal dependence occurring at an unknown times. 

\end{abstract}

\section{Introduction}
\label{section_introduction}
Modern machine learning systems are often deployed in environments whose underlying structure evolves over time. As a result, relationships that were once reliable for prediction or decision-making may cease to hold after a structural change. For example, in financial markets, the relationship between a market indicator and an asset return may change across regimes \citep{gu2020empirical}; in recommendation or control systems, signals that were previously informative may lose their relevance as user behavior or system dynamics shift \citep{DBLP:conf/aistats/CheungSZ19}. Detecting such changes is crucial, as failing to do so can lead to degraded performance or incorrect downstream decisions.

This challenge has motivated a large body of work on change point detection, which aims to identify time points at which the data-generating process changes \cite{Basseville1993}. Classical and modern approaches typically focus on detecting shifts in marginal distributions, moments, or measures of statistical association, and have been successfully applied in a wide range of settings (e.g., \citet{f6bf4042-8440-3ce2-87e3-78f31cedac7c,DBLP:conf/nips/HarchaouiBM08,DBLP:conf/icml/HockingRB15}). However, many real-world applications are not only concerned with \emph{whether} the data distribution changes, but with \emph{how} the relationship between variables changes. This distinction motivates the study of causal change points, where the object of interest is not a change in the joint distribution, but a change in the causal mechanism linking variables.

In this work we study the problem of determining whether the causal mechanism linking an outcome variable $Y$ and a covariate $X$ has changed at a given point in time, in the presence of confounders $\boldsymbol{Z}$. In keeping with the modern literature on non-parametric change point detection \citep{mcgonigle2025nonparametric, matteson2014nonparametric}, we allow both the joint and marginal distributions of the data to change. Since it is often unrealistic to assume knowledge of the causal mechanism \citep{azadkia2021fast}, we approach the problem from a fully non-parametric perspective, and treat both the causal mechanism and the distribution of the data as unknown. Consequently, we introduce a new kernel based statistic whose population counterpart is equal to zero if the causal dependence of $Y$ on $X$ given $Z$ does not change, and is strictly positive else.

\subsection{Related work}

There exists a rich literature on causal structure learning. Existing approaches mainly rely on learning structural equation models (\citealt{buhlmann2014cam, imoto2001estimation} -- SEMs), see Section~\ref{section: causal structure learning} for a precise definition, or computing measures of (conditional) statistical association \cite{spirtes1991algorithm, pearl1995theory}. However, both approaches are inadequate for the problem at hand, for several reasons: first, the causal effect of $X$ on $Y$ may change due to a change in the marginal distribution of the confounders even though the structural equations have not changed; second, many modern measures of statistical association \cite{szekely2014partial, zhenpartial2022}, though fully non-parametric, are not invariant to the marginal distributions of the data, and can therefore confuse a change in the marginal distributions with a change in the causal mechanism; finally, some measures \cite{chatterjee2021new, azadkia2021simple} are non-negative, and are therefore blind to changes where the magnitude of the causal effect is unaltered while the sign changes.

Previous works \cite{ghoshal2019direct, wang2018direct} have tackled the problem of detecting differences between causal graphs (see Section~\ref{section: causal structure learning}) given samples from each graph under the assumption that the data are generated by a linear SEM. \cite{chen2024identifying, saeed2020causal} consider the same problems for two or more related datasets. \cite{chen2023iscan, li2023kernel} extend the previous approaches to non-linear SEMs driven by additive noise. A handful of recent works study change point detection from a causal perspective, where the goal is to recover unknown points in time where the causal mechanism changes. \cite{huang2024} propose a method for recovering causal change point locations when the data are generated by linear SEMs. \cite{xu2025quickest} propose an algorithm for sequentially observed data, but require that the data be jointly and marginally Gaussian. \cite{DBLP:conf/aistats/GaoAYRK25} combine constraint-based causal discovery with conditional distribution testing, and therefore do not impose any structure on the causal mechanism. However, their methodology does not allow for changes in the marginal distribution of the data. While these works represent important first steps, they either rely strong assumptions on the SEMs or detect changes only after conditioning on estimated parent sets. Therefore, causal change point detection in general non-linear settings remains important an open problem.

\subsection{Notation}


Throughout the paper we make use of the following notations: for $n \in \mathbb{N}$ we write $[n] = \{ 1, \dots, n \}$; $\mathcal M_1^+\!\left( \mathbb{R}^d \right)$ denotes the set of all Borel probability measures on $\mathbb{R}^d$; for $\mu \in \mathcal{M}_+^1(\mathbb{R}^d)$  we write $\operatorname{supp} (\mu) = \overline{ \{ A \in \sigma (\mathbb{R}^d) \mid \mu (A) > 0\}}$ for its support, where $\overline{A}$ denotes the closure of a set $A$ and $\sigma (\mathbb{R}^d)$ is the Borel $\sigma$-algebra on $\mathbb{R}^d$; for two measures $\mu$ and $\nu$ we write $\mu \gg \nu$ if $\nu (A) = 0$ for every set $A$ for which $\mu(A) = 0$, and we write $\mu \sim \nu$ if $\mu \gg \nu$ and $\nu \gg \mu$; for an event $E$ we put $\boldsymbol{1}_{\{ E \}}$ for the indicator function taking value $1$ if $E$ occurs and $0$ else; for sequences $\{ a_n \}_{n \in \mathbb{N}}$ and $\{ b_n \}_{n \in \mathbb{N}}$ we write $a_n = \mathcal{O}(b_n)$ if $|a_n| \leq C |b_n|$ for some absolute $C>0$ and all $n$, and we write $a_n \asymp b_n$ if $a_n = \mathcal{O}(b_n)$ and $b_n = \mathcal{O} (a_n)$; for two random variables $W_1,W_1$ we denote their joint, marginal, and conditional probabilities by $\mathbb{P}_{W_1,W_2}$, $\mathbb{P}_{W_1}$, and $\mathbb{P}_{W_1 | W_2 = w}$ for $w \in \operatorname{supp} (\mathbb{P}_{W_2})$, respectively.

\section{Preliminaries} \label{section: preliminaries}

In this section we provide brief overviews of causal structure learning, copula theory, and kernel mean embeddings. Consequently, we introduce several definitions which will be of use throughout the remainder of the paper. 

\subsection{Causal Structure learning} \label{section: causal structure learning}

We briefly recall important concepts from causal structure learning. The following definition formalizes the notion of causal dependence. 

\begin{definition}[Causal model]
The triple $M = \left < \boldsymbol{U}, \boldsymbol{W}, \boldsymbol{f} \right >$, with $ \boldsymbol{W} = \{W_1, \dots, W_d \}$ being observable variables, $\boldsymbol{U} = \{ U_1, \dots, U_d \}$ begin unobserved exogenous noise, and $\boldsymbol{f} = \{ f_1, \dots, f_j \}$ being a set of functions is called a causal model if it holds that $W_i = f_i \left ( \text{PA}_i, U_i \right )$ for each $\in [d]$, where $\text{PA}_i \subseteq \boldsymbol{W} \setminus W_i$ is the smallest subset of $\boldsymbol{W}$ such that each $W_i$ is independent of $W$'s not in $\text{AP}_i$ given $W$'s in $\text{PA}_i$.  
\label{definition: causal model}
\end{definition}

The $f$'s in Definition~\ref{definition: causal model} are often referred to as structural equations \citep{pearlj}. Every causal model $M$ can be associated with a so called causal graph having vertices $\{ 1, \dots, d\}$, constructed by drawing 
an edge from each vertex $i \in [d]$ to the elements of $\text{PA}_i$. 

Structural causal models provide formal semantics for reasoning about how a system would respond to deliberate changes of the data generation process -- formally, interventions. In this work, we consider \emph{perfect} (or \emph{hard}) interventions \citep{pearlj}; a perfect intervention on a variable $W_j$ sets it to a fixed value $w_j$ and replaces the corresponding structural equation with the constant assignment $W_j := w_j$. This operation is denoted by $do(W_j = w_j)$, and the resulting distribution is written as $P(\boldsymbol{W} \mid do(W_j = w_j))$. Importantly, this post-interventional distribution generally differs from the conditional distribution $P(\boldsymbol{W} \mid W_j = w_j)$, since intervening removes all incoming causal influences into $W_j$, whereas conditioning does not. In general, interventional distributions cannot be expressed in terms of 
observational conditional distributions. The do-calculus introduced by 
Pearl provides graphical rules that characterize when such a 
reduction is possible. We refer the reader to \citet{pearlj} for a discussion of these graphical rules.

\subsection{Conditional copulas}

In this section we introduce the copula and conditional copula. Given a random vector $\boldsymbol{W} = (W_1, \dots, W_d)^\top \in \mathbb{R}^d$ with continuous marginals $F_{W_i} (x) = \mathbb{P} (W_i \leq x)$ for each $i \in [d]$ we recall that the associated copula $\mathcal{C} (\cdot)$ is defined as the distribution function of the random vector
\begin{equation*}
    \boldsymbol{U} = \left ( F_1 (W_1), \dots, F_d (W_d) \right )^\top.  
\end{equation*}
It can be readily seen that $\boldsymbol{U}$ has standard Uniform marginals, and the copula is therefore invariant to the marginal distribution of the $W$'s. By Sklar's theorem \citep{sklar1959fonctions, nelsen2006introduction} $\mathcal{C} (\cdot)$ is unique under the assumption of continuous marginals, and moreover
$\boldsymbol{W}$'s distribution function can be decomposed as a composition of the marginals and the copula via
\begin{align}
    & F_{\boldsymbol{W}} \left ( x_1, \dots, x_d \right ) = \mathcal{C} \left ( F_{W_1} (x_1), \dots, F_{W_d} (x_d) \right ), \quad \forall x_i \in \mathbb{R}, i \in [d]. 
    \label{equation: copula}
\end{align}
Indeed, the copula characterizes all dependencies between the components of $\boldsymbol{W}$ \citep{nelsen2006introduction}.  

The following definition introduces the conditional copula, which was first studied by \cite{gijbels2011conditional} and \cite{veraverbeke2011estimation}. 
\begin{definition}[Conditional copula]
Given a random vector $\boldsymbol{W} = (W_1, W_2, \boldsymbol{W}_3^\top)^\top$ with continuous marginals, $W_1, W_2 \in \mathbb{R}$, and $\boldsymbol{W}_3 \in \mathbb{R}^d$ for each $\boldsymbol{w} \in \operatorname{supp} (\mathbb{P}_{\boldsymbol{W}_3})$ there is a unique distribution function $\mathcal{C}_{\boldsymbol{w}} \left ( \cdot \right )$, with standard uniform marginals, for which the distribution function of $(W_1, W_2)^\top$ conditional on  $\boldsymbol{W}_3 = \boldsymbol{w}$ can be decomposed as 
\begin{align*}
& F_{W_1, W_2 \mid \boldsymbol{W}_3 = \boldsymbol{w}} (x_1, x_2) \hspace{0.5em} = \mathcal{C}_{\boldsymbol{w}} \left ( F_{W_1 \mid \boldsymbol{W}_3 = \boldsymbol{w}} (x_1), F_{W_2 \mid \boldsymbol{W}_3 = \boldsymbol{w}} (x_2) \right ), \hspace{0.5em} \forall x_1, x_2 \in \mathbb{R}. 
\end{align*}
$\mathcal{C}_{\boldsymbol{w}} (\cdot)$ is called the conditional copula of the random vector $(W_1, W_2)^\top$ given $\boldsymbol{W}_3 = \boldsymbol{w}$. 
\label{definition: conditional copula}
\end{definition}

Similar to \eqref{equation: copula}, the copula $\mathcal{C}_{\boldsymbol{w}} (\cdot)$ in Definition~\ref{definition: conditional copula} fully characterizes the conditional dependence structure between the random variables $W_1$ and $W_2$ conditional on $\boldsymbol{W}_3 = \boldsymbol{w}$. 

\subsection{Kernel Mean Embeddings}

Finally, we recall the definition of the kernel mean embedding (KME) of conditional and unconditional probability measures. The following definition describes the main concepts pertaining KMEs and refer the reader to \cite{Muandet17:KME} and \cite{klebanov2020rigorous} for details. 

\begin{definition}[Kernel Mean Embedding]
\label{def:KME}
Let $K : \mathbb{R}^d \times \mathbb{R}^d \to \mathbb{R}$ be a positive definite kernel\footnote{Kernels may of course be defined on more general spaces. Here we state the definition in terms of $\mathbb{R}^d$ as the present work only considers Euclidean data.}, and let $\mathcal{H}_K$ be the corresponding reproducing kernel Hilbert space (RKHS) of functions on. To any $\mathbb{P} \in \mathcal{M}_1^+$, one can associate the kernel mean embedding 
\begin{equation}
    \mu_K(\mathbb{P}) = \int_{\mathbb{R}^d} K(\cdot, \boldsymbol{w}) \mathrm{d} \mathbb{P} (\boldsymbol{w}) \in \mathcal{H}_K,
    \label{equation: KME}
\end{equation}
where the integral is meant in Bochner's sense \citep[Chapter~II.2]{diestel77vector}. Similarly given two random variables $W_1, W_2$ and some $w \in \operatorname{supp} (\mathbb{P}_{W_2})$, one can embed the conditional distribution $\mathbb{P}_{W_1 \mid W_2 = w}$ in into $\mathcal{H}_K$ in an identical fashion, and the resulting object is referred to as the conditional mean embedding. 
\end{definition}

Continuity and boundedness of the kernel are sufficient conditions for the existence of \eqref{equation: KME}. When the map $\mu_K : \mathcal{M}_1^+(\mathbb{R}^d) \mapsto \mathcal{H}_K$ is invective the kernel is said to be characteristic, and for two random variables $W_1, W_2$ one has that $\mathbb{P}_{W_1} \neq \mathbb{P}_{W_2} \iff \mu_K (\mathbb{P}_{W_1}) \neq \mu_K (\mathbb{P}_{W_2})$. Many commonly used kernels on $\mathbb{R}^d$ enjoy this property, including the Gaussian, Laplacian, $B_{2l+1}$-spline, and inverse multiquadratic kernels \citep{sriperumbudur2010hilbert}. 

\section{Methodology} \label{section: methodology}

In this section we formally state the problem of detecting a generic change in the causal relationship between two variables $X$ and $Y$ given confounders $\boldsymbol{Z}$. Consequently we introduce a statistic for resolving this problem, show how it may be estimated in a finite sample setting, and establish its consistency and rate of convergence of the estimator. 

\subsection{Problem statement} \label{section: problem statement}

We study the following problem. Given a sample $\{ (X_i, Y_i, \boldsymbol{Z}_i) : i \in [n] \}$ with $X,Y \in \mathbb{R}$ and $\boldsymbol{Z} \in \mathbb{R}^d$, and an integer $\eta \in [n-1]$, we seek to determine whether the causal dependence of $Y$ on $X$ given $\boldsymbol{Z}$ is the same for data indexed on or before $\eta$ as for data indexed after $\eta$. Denote by $\mathbb{P}_{X,Y,\boldsymbol{Z}}$ and $\mathbb{P}_{X',Y',\boldsymbol{Z}'}$ the distribution of a triple $\{ X_i, Y_i, \boldsymbol{Z}_i \}$ with $i \leq \eta$ and $i > \eta$ respectively. In the language of Section~\ref{section: causal structure learning}, our goal is to determine if it holds that 
\begin{align}
& \mathbb{P}_{X,Y,\boldsymbol{Z}} (Y = y \mid do(X = x, \boldsymbol{Z} = \boldsymbol{z})) = \mathbb{P}_{X',Y',\boldsymbol{Z}'} (Y' = y \mid do(X' = x, \boldsymbol{Z}' = \boldsymbol{z})).  
\label{equation: do equivalence}
\end{align}

As mentioned in Section \ref{section: causal structure learning}, reasoning about \eqref{equation: do equivalence} is non-trivial. This is because in general, it is not possible to estimate interventional distributions as in \eqref{equation: do equivalence} from observational samples. To circumvent this problem, 
we impose the following assumptions:

\begin{assumption}[Regression; $Y$ parent set; equal support]
It holds that (i) $Y \not \in \text{PA}_X \cup \text{PA}_{\boldsymbol{Z}}$, (ii) $ \mathsf{Pa}_Y \subseteq X \cup \boldsymbol{Z}$.\footnote{Note that if Y has no causal parents, then the parent set is the empt set and the assumption automatically holds.}, and (iii) $\mathbb{P}_{\boldsymbol{Z}} \sim \mathbb{P}_{\boldsymbol{Z}'} $. 
\label{assumption: regression; parent; support}
\end{assumption}

\begin{assumption}[Causal sufficiency]
There are no hidden confounders that are causing more than one variable in $\{ X_i, Y_i, \boldsymbol{Z}_i \}$, for all $i\in [n]$.
\label{assumption: causal sufficiency}
\end{assumption}

The above assumptions are quite mild, and aligned to prior work; see \citep{DBLP:journals/jmlr/Spirtes10,DBLP:conf/icml/QuinzanSJRB23,DBLP:journals/corr/abs-2311-06012}. Part (i) of Assumption~\ref{assumption: regression; parent; support} states that $Y$ does not cause $X$ or $\boldsymbol{Z}$, and part (ii) states that the causal parents of $Y$ must be included in the variables $X\cup \boldsymbol Z$. Part (iii) of Assumption~\ref{assumption: regression; parent; support} is satisfied for a wide range of scenarios; notably, the assumption is satisfied when both $\mathbb{P}_{\boldsymbol{Z}}$ and $\mathbb{P}_{\boldsymbol{Z}'}$ have full support $\mathbb{R}^d$. Note that if one could find a set $A \in \operatorname{supp}(P_{\boldsymbol{Z}})$ with $A \not\in \operatorname{supp}(\mathbb{P}_{\boldsymbol{Z'}})$ it would not be meaningful to ask whether the causal effect of $X$ on $Y$ has changed given that the confounders take values in $A$. Finally, Assumption~\ref{assumption: causal sufficiency} is also common in the causal inference literature; see for instance \citep{DBLP:journals/jmlr/Spirtes10}. 

We remark that the first two parts of Assumption~\ref{assumption: regression; parent; support} are necessary to ensure that the interventional quantity of interest can be expressed in terms of observational conditional distributions (see Lemma \ref{theorem: copula causality}). If these assumptions fail, then the interventional distribution is not uniquely determined by the observational distribution, making \eqref{equation: do equivalence} impossible to verify from observational data. Counterexamples are provided in Appendix \ref{appendix: assumptions neessity}.

\subsection{Causal change detection using kernels and copulas}

We now explain how the testing problem implied by \eqref{equation: do equivalence} may be resolved using conditional copulas and kernel mean embeddings introduced in Section~\ref{section: preliminaries}. The following lemma demonstrates that, under the assumptions stated in Section~\ref{section: problem statement}, when reasoning about cause-effect relationships it is sufficient to study the associated family of conditional copulas. 
\begin{theorem}[Causal reasoning through copulas]
Let $\mathcal{C}_{Z,Y \mid \boldsymbol{Z} = z}$ be the conditional copula of the random vector $(X,Y)^\top$ conditional on $\boldsymbol{Z} = \boldsymbol{z}$, and let $\mathcal{C}_{X'.Y' \mid \boldsymbol{Z'} = \boldsymbol{z}}$ be defined analogously. Grant Assumption \ref{assumption: regression; parent; support} and \ref{assumption: causal sufficiency} hold. Then, $\mathbb{P}_{X,Y,\boldsymbol{Z}} (Y = y \mid do(X = x, \boldsymbol{Z} = \boldsymbol{z}))  =  \mathbb{P}_{X',Y',\boldsymbol{Z}'} (Y' = y \mid do(X' = x, \boldsymbol{Z}' = \boldsymbol{z}))$ for all $y$ and $\boldsymbol{z}$ implies that for all $(u,v) \in [0, 1]^2$ and all $\boldsymbol{z} \in \operatorname{supp} (\mathbb{P}_Z)$ it holds that $\mathcal{C}_{X, Y \mid \boldsymbol{Z = z} } \left ( u, v \right ) = \mathcal{C}_{X', Y' \mid \boldsymbol{Z' = z}} \left ( u, v \right )$. 
\label{theorem: copula causality}
\end{theorem}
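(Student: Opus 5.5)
The plan has two stages. First, identify the interventional distributions in \eqref{equation: do equivalence} with observational conditionals. Second, pass from equality of these conditionals to equality of the conditional copulas of Definition~\ref{definition: conditional copula}.

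\textbf{Identification.} Under Assumption~\ref{assumption: causal sufficiency} there are no hidden common causes. Part (i) of Assumption~\ref{assumption: regression; parent; support} says $Y$ is not a parent of $X$ or $\boldsymbol{Z}$, and part (ii) says $\text{PA}_Y \subseteq X \cup \boldsymbol{Z}$. Hence intervening on all of $(X,\boldsymbol{Z})$ replaces their structural equations while leaving $Y = f_Y(\text{PA}_Y, U_Y)$ untouched. Since $U_Y$ is independent of $(X,\boldsymbol{Z})$ by causal sufficiency, we obtain
\[
\mathbb{P}(Y \le y \mid do(X=x,\boldsymbol{Z}=\boldsymbol{z})) = \mathbb{P}(Y \le y \mid X=x, \boldsymbol{Z}=\boldsymbol{z}).
\]
This is the ``action/observation exchange'' rule of do-calculus, and it can be checked directly from the structural equations. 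The same holds for the primed variables. Part (iii), $\mathbb{P}_{\boldsymbol{Z}} \sim \mathbb{P}_{\boldsymbol{Z}'}$, ensures both conditionals are defined on a common set of $\boldsymbol{z}$, so the hypothesis gives $F_{Y\mid X=x,\boldsymbol{Z}=\boldsymbol{z}} = F_{Y'\mid X'=x,\boldsymbol{Z}'=\boldsymbol{z}}$ for all $x$ and all $\boldsymbol{z}\in\operatorname{supp}(\mathbb{P}_{\boldsymbol{Z}})$. This holds up to null sets, which we handle by choosing regular versions.

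\textbf{From conditionals to copulas.} Write $U = F_{X\mid\boldsymbol{z}}(X)$. By the conditional Sklar decomposition,
\[
\mathcal{C}_{X,Y\mid\boldsymbol{Z}=\boldsymbol{z}}(u,v) = \int_0^u F_{Y\mid X = F^{-1}_{X\mid\boldsymbol{z}}(s),\,\boldsymbol{Z}=\boldsymbol{z}}\bigl(F^{-1}_{Y\mid\boldsymbol{z}}(v)\bigr)\,\mathrm{d}s,
\]
with the analogous formula in the primed regime. The kernel of $Y$ given $(X,\boldsymbol{Z})$ is shared across regimes by the first stage. So the copulas agree once the rank transform $F_{X\mid\boldsymbol{z}}$ and the marginal $F_{Y\mid\boldsymbol{z}}$ are matched across regimes. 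The latter is itself obtained by integrating the common kernel against the law of $X\mid\boldsymbol{Z}=\boldsymbol{z}$.

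\textbf{Main obstacle.} The key step is controlling the dependence on the conditional law of $X$ given $\boldsymbol{Z}$. Equality of the $Y$-kernels alone does not obviously force equality of the copulas when $\mathbb{P}_{X\mid\boldsymbol{Z}}$ changes, because the quantile map $F^{-1}_{X\mid\boldsymbol{z}}$ enters the integrand. I would first try to show that the copula depends on $\mathbb{P}_{X\mid\boldsymbol{Z}=\boldsymbol{z}}$ only through the rank variable $U$, which is uniform in both regimes by continuity. For this I would argue that the interventional statement, taken over the whole structural model (with $f_X$ held fixed across regimes by modularity), pins down the joint law of $(U, Y)$ given $\boldsymbol{Z}=\boldsymbol{z}$. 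If that fails in general, the fallback is to make explicit the additional modularity requirement that the mechanism generating $X$ from $\boldsymbol{Z}$ be invariant up to a monotone reparametrisation. Under that requirement the quantile maps cancel in the displayed integral, and the conclusion follows for all $(u,v)\in[0,1]^2$.
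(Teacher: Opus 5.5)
Your identification step is the same as the paper's proof. The paper invokes Rule~2 of the do-calculus (via $\text{PA}_Y \subseteq X \cup \boldsymbol{Z}$) to get $\mathbb{P}(Y \mid do(X=x,\boldsymbol{Z}=\boldsymbol{z})) = \mathbb{P}(Y \mid X=x,\boldsymbol{Z}=\boldsymbol{z})$, and then simply asserts that the claim ``follows by the definition of conditional copulas''. You are right that this last step is where the difficulty lies. However, neither of your remedies closes it, and under the stated hypotheses it cannot be closed. The hypothesis fixes only the kernel $K_{\boldsymbol z}(x,y) = \mathbb{P}(Y \le y \mid X=x, \boldsymbol Z = \boldsymbol z)$. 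Your own integral representation shows that the copula also depends on $\mathbb{P}_{X\mid \boldsymbol Z=\boldsymbol z}$, both through $F^{-1}_{X\mid\boldsymbol z}$ and through $F_{Y\mid\boldsymbol z}$. Nothing in Assumptions~\ref{assumption: regression; parent; support} and~\ref{assumption: causal sufficiency} relates $\mathbb{P}_{X\mid\boldsymbol Z}$ to $\mathbb{P}_{X'\mid\boldsymbol Z'}$. Holding ``$f_X$ fixed by modularity'' is also not available: the two regimes are different causal models, the hypothesis concerns only the mechanism of $Y$, and the paper explicitly allows joint and marginal laws to change. So your route (a) has nothing to draw on.

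Here is a concrete counterexample. Let $\boldsymbol Z, \boldsymbol Z' \sim \mathcal{N}(0,1)$ be independent of all other variables. Set $Y = X + \varepsilon$ and $Y' = X' + \varepsilon'$ with $\varepsilon,\varepsilon' \sim \mathcal{N}(0,1)$, and take $X \sim \mathcal{N}(0,1)$ but $X' = 10\,\tilde X$ with $\tilde X \sim \mathcal{N}(0,1)$. All assumptions hold, and both interventional laws equal $\Phi(y-x)$. Yet the conditional copulas are Gaussian with correlations $1/\sqrt{2}$ and $10/\sqrt{101}$, so they differ.

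Since $x \mapsto 10x$ is a strictly increasing reparametrisation of the $X$-mechanism, this also defeats your fallback. If $F_{X'\mid\boldsymbol z} = F_{X\mid\boldsymbol z} \circ h^{-1}$, the common kernel is evaluated at $h(F^{-1}_{X\mid\boldsymbol z}(s))$ rather than at $F^{-1}_{X\mid\boldsymbol z}(s)$. The two copulas then agree only if $K_{\boldsymbol z}(h(x), m(y)) = K_{\boldsymbol z}(x,y)$ for a single increasing $m$. That fails here, because $\Phi(m(y)-10x) = \Phi(y-x)$ for all $x$ is impossible.

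What your formula actually needs is one of two things. Either exact invariance, $F_{X\mid\boldsymbol Z=\boldsymbol z} = F_{X'\mid\boldsymbol Z'=\boldsymbol z}$ for $\mathbb{P}_{\boldsymbol Z}$-a.e.\ $\boldsymbol z$; then $F_{Y\mid\boldsymbol z} = F_{Y'\mid\boldsymbol z}$ as well and the integrals coincide. Or conditional independence of $Y$ and $X$ given $\boldsymbol Z$ in both regimes. So the statement as written is false. The paper's one-line conclusion has exactly the gap you identified, and a correct version must add an assumption of this kind.
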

The proof, given in Appendix \ref{appeendix : copula causality}, uses the structural assumptions together with the do-calculus to establish that the interventional quantity can be reduced to an observational conditional expectation. In light of Theorem~\ref{theorem: copula causality}, determining whether \eqref{equation: do equivalence} holds is equivalent to testing apart the following hypotheses:  
\begin{align}
& H_0: \mathcal{C}_{X, Y \mid \boldsymbol{Z = z}} \left ( u, v \right ) = \mathcal{C}_{X', Y' \mid \boldsymbol{Z'} = \boldsymbol{z} } \left ( u, v \right ),  \quad  \forall (u,v) \in [0,1]^2, \boldsymbol{z} \in \operatorname{supp}(\mathbb{P}_{\boldsymbol{Z}}) \nonumber \\
& H_1: \exists (u,v) \in  [0,1]^2, \boldsymbol{z} \in \mathbb{R}^p \text{ s.t. } \mathcal{C}_{X, Y \mid \boldsymbol{Z} = \boldsymbol{z}} \left ( u, v\right ) \neq \mathcal{C}_{X', Y' \mid \boldsymbol{Z'} = \boldsymbol{z} } \left ( u, v \right ).
\label{equation: dependence testing}
\end{align}

Consequently, we propose to compare the kernel mean embeddings of the conditional copulas of the $X,Y$-s and $X',Y'$-s conditional on the events $\{ \boldsymbol{Z} =\boldsymbol{z} \}$ and $\{ \boldsymbol{Z'} = \boldsymbol{z} \}$ for all $\boldsymbol{z}$-s in the sample space of the $\boldsymbol{Z},\boldsymbol{Z'}$-s. For some non-negative measure $\Lambda$ such that $\Lambda \sim \mathbb{P}_{\boldsymbol{Z}} \mathbb{P}_{\boldsymbol{Z'}}$, and some positive definite kernel $K : [0,1]^2 \times [0,1]^2 \mapsto \mathbb{R}$, introduce the quantity
\begin{align}
& Q_{X,Y,Z}^{(\Lambda)} \coloneqq Q_{X,Y,Z} = \int_{\mathbb{R}^d} \left \| \mu_{K} \left ( \mathcal{C}_{X,Y \mid \boldsymbol{Z} = \boldsymbol{z}} \right ) - \mu_{K} \left ( \mathcal{C}_{X', Y' \mid \boldsymbol{Z'} = \boldsymbol{z}} \right ) \right \|_{\mathcal{H}_K}^2 \mathrm{d} \Lambda (\boldsymbol{z}), \label{equation: conditional copula MMD}
\end{align}
where $\| \cdot \|_{\mathcal{H}_K}$ is the Hilbert space norm induced by the kernel $K$. Observe that \eqref{equation: conditional copula MMD} can also be interpreted as an integral over maximum mean discrepancies\footnote{For two probability measures $\mathbb{P}$ and $\mathbb{Q}$ defined on some space $\mathcal{X}$, and a positive definite kernel $K : \mathcal{X} \times \mathcal{X} \mapsto \mathbb{R}$, their maximum mean discrepancy (induced by $K$) is given by $\operatorname{MMD}_K[\mathbb{P}, \mathbb{Q}] = \| \mu_K (\mathbb{P}) - \mu_K (\mathbb{Q}) \|_{\mathcal{H}_K}$.} \cite[MMD]{gretton2012mmd} between the conditional copulas of $(X,Y) | \boldsymbol{Z} = \boldsymbol{z}$ and $(X',Y') | \boldsymbol{Z'} = \boldsymbol{z}$. 

The following lemma establishes that $Q_{X,Y,Z}$ separates the null and alternative hypotheses in \eqref{equation: dependence testing} and is therefore appropriate for resolving \eqref{equation: do equivalence}.
\begin{proposition}[Consistency of $Q_{X,Y,Z}$]
Grant Assumption \ref{assumption: regression; parent; support} and \ref{assumption: causal sufficiency} hold. For any characteristic kernel $K$ and any positive measure $\Lambda$ for which $\mathbb{P}_{\boldsymbol{Z}} \ll \Lambda$ it holds that (i) if there exists a $\mathbb{P}_{\boldsymbol{Z}}$-measurable set $A \subseteq \mathbb{R}^d$ such that \eqref{equation: do equivalence} does not hold for some $x$ and all $\boldsymbol{z} \in A$ then $Q_{X,Y,Z} > 0$, and (ii) if \eqref{equation: do equivalence} holds for all $\boldsymbol{z} \in \operatorname{supp} (\mathbb{P}_{\boldsymbol{Z}})$ and all $x$ then $Q_{X,Y,Z} = 0$. 
\label{proposition: Q separation}
\end{proposition}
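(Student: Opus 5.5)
We reduce both parts to statements about the integrand of \eqref{equation: conditional copula MMD}. Write
\[
g(\boldsymbol{z}) = \bigl\| \mu_K(\mathcal{C}_{X,Y\mid \boldsymbol{Z}=\boldsymbol{z}}) - \mu_K(\mathcal{C}_{X',Y'\mid \boldsymbol{Z}'=\boldsymbol{z}}) \bigr\|_{\mathcal{H}_K}^2 \geq 0,
\]
so that $Q_{X,Y,Z} = \int g \,\mathrm{d}\Lambda$. Since $K$ is characteristic, $g(\boldsymbol{z}) = 0$ if and only if the two conditional copulas coincide as distributions on $[0,1]^2$, i.e.\ as distribution functions at every $(u,v)$.

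\textbf{Part (ii).} Assume \eqref{equation: do equivalence} holds for all $x$ and all $\boldsymbol{z} \in \operatorname{supp}(\mathbb{P}_{\boldsymbol{Z}})$. By Theorem~\ref{theorem: copula causality}, the copulas then agree for every $\boldsymbol{z}$ in the support, so $g = 0$ on $\operatorname{supp}(\mathbb{P}_{\boldsymbol{Z}})$. By Assumption~\ref{assumption: regression; parent; support}(iii), $\mathbb{P}_{\boldsymbol{Z}} \sim \mathbb{P}_{\boldsymbol{Z}'}$. Because $\Lambda$ is chosen with $\Lambda \sim \mathbb{P}_{\boldsymbol{Z}}\mathbb{P}_{\boldsymbol{Z}'}$ (as in the construction of the statistic), the complement of the support is $\Lambda$-null. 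Hence $\int g \,\mathrm{d}\Lambda = 0$. We will also need measurability of $g$ in $\boldsymbol{z}$, which we take from the regular-conditional-probability construction of the conditional copulas.

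\textbf{Part (i).} We argue by the converse of Theorem~\ref{theorem: copula causality}. Under Assumptions~\ref{assumption: regression; parent; support}(i)--(ii) and \ref{assumption: causal sufficiency}, the back-door/do-calculus step used in that proof gives
\[
\mathbb{P}(Y = y \mid do(X=x,\boldsymbol{Z}=\boldsymbol{z})) = \mathbb{P}_{Y\mid X=x,\boldsymbol{Z}=\boldsymbol{z}}(y),
\]
since $\text{PA}_Y \subseteq \{X\} \cup \boldsymbol{Z}$ and $Y$ is not an ancestor of $X$ or $\boldsymbol{Z}$. If \eqref{equation: do equivalence} fails at some $x$ for all $\boldsymbol{z} \in A$, then the conditional laws of $Y\mid X=x,\boldsymbol{Z}=\boldsymbol{z}$ and $Y'\mid X'=x,\boldsymbol{Z}'=\boldsymbol{z}$ differ on $A$.

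The intended next step is to conclude that the conditional copulas differ for $\boldsymbol{z} \in A$. The plan is to express the conditional law through the copula. For continuous marginals,
\[
F_{Y\mid X=x,\boldsymbol{Z}=\boldsymbol{z}}(y) = \partial_u \mathcal{C}_{X,Y\mid \boldsymbol{z}}\bigl(u, F_{Y\mid \boldsymbol{z}}(y)\bigr)\big|_{u=F_{X\mid\boldsymbol{z}}(x)}.
\]
Equal copulas, together with the marginal conditional laws, then pin down the conditional law of $Y$ given $(X,\boldsymbol{Z})$. Once the copulas are shown to differ on $A$, we get $g>0$ on $A$. Then $\mathbb{P}_{\boldsymbol{Z}}(A)>0$ and $\mathbb{P}_{\boldsymbol{Z}} \ll \Lambda$ give $\Lambda(A)>0$, and so $Q_{X,Y,Z} \geq \int_A g \,\mathrm{d}\Lambda > 0$.

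\textbf{Main obstacle.} The identity above shows that a change in the conditional law of $Y$ can also come from a change in the marginal conditionals $F_{X\mid \boldsymbol{z}}$ or $F_{Y\mid \boldsymbol{z}}$ with the copula unchanged. So the copula-difference step in part (i) is exactly where the argument can break. We would try to close the gap by working in the rank-transformed (copula) scale, interpreting \eqref{equation: do equivalence} on the quantile levels $u = F_{X\mid\boldsymbol{z}}(x)$, under which the copula partial derivative determines the conditional law. If the statement is meant literally in the original scale, part (i) needs that additional identification (or an assumption that the conditional marginals are invariant), and we would flag it explicitly.
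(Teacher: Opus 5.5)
Your outline matches the paper's proof. The integrand $g$ is nonnegative, and $g(\boldsymbol{z})=0$ iff the conditional copulas coincide because $K$ is characteristic; then $\Lambda(A)>0$ follows from $\mathbb{P}_{\boldsymbol{Z}}(A)>0$ and $\mathbb{P}_{\boldsymbol{Z}}\ll\Lambda$. You are also right that part (ii) needs $\Lambda\ll\mathbb{P}_{\boldsymbol{Z}}$, which the stated hypothesis does not give; you correctly take it from the definition of $Q_{X,Y,Z}$.

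The step you could not complete in part (i) is exactly the step the paper's proof leaves unjustified. It says ``by Theorem~\ref{theorem: copula causality} the two measures are not identical''. That theorem only asserts that \eqref{equation: do equivalence} implies equal copulas, and its contrapositive runs the other way. Your obstacle is not a weakness of your route: part (i) is false as stated.

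Take $d=1$, $Z,Z'\sim\mathcal{N}(0,1)$, $X=Z+\xi$, $X'=Z'+\xi'$, $Y=X+\varepsilon$ and $Y'=2X'+2\varepsilon'$, with independent standard normal noises, so $Y'$ is just a rescaled copy of $Y$. Assumptions~\ref{assumption: regression; parent; support} and~\ref{assumption: causal sufficiency} hold. Both conditional copulas are Gaussian with correlation $1/\sqrt{2}$ for every $z$, so $Q_{X,Y,Z}=0$. Yet the interventional laws are $\mathcal{N}(x,1)$ and $\mathcal{N}(2x,4)$ for every $(x,z)$. The paper's own null scenarios apply increasing transforms to $Y$ after the split, which confirms the statistic targets something coarser than \eqref{equation: do equivalence}.

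In addition, ``for some $x$'' must become ``for a set of $x$ of positive $\mathbb{P}_{X\mid\boldsymbol{Z}=\boldsymbol{z}}$-measure''. Changing the mechanism at a single $x$ does not change the joint law, hence not the copula.

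The same identification problem breaks part (ii), which you, like the paper, inherit from Theorem~\ref{theorem: copula causality}. The conditional copula is a functional of the joint law $\mathbb{P}_{X\mid\boldsymbol{z}}\otimes\mathbb{P}_{Y\mid X,\boldsymbol{z}}$, so it can change when only $\mathbb{P}_{X\mid\boldsymbol{z}}$ changes. Keep $Y=X+\varepsilon$ in both segments but set $X'=Z'+2\xi'$. Then \eqref{equation: do equivalence} holds for all $(x,z)$, while the copula correlations are $1/\sqrt{2}$ and $2/\sqrt{5}$, so $Q_{X,Y,Z}>0$. The paper's proof of that theorem stops at $\mathbb{P}(Y\mid do(X,\boldsymbol{Z}))=\mathbb{P}(Y\mid X,\boldsymbol{Z})$ and asserts the rest ``by definition''.

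Of your two repairs, the quantile-scale reading of \eqref{equation: do equivalence} is equivalent to copula equality itself, since the copula is recovered by integrating its $u$-derivative. It therefore proves a different, essentially tautological statement. The second repair is the right one. If $F_{X\mid\boldsymbol{Z}=\boldsymbol{z}}$ and $F_{Y\mid\boldsymbol{Z}=\boldsymbol{z}}$ agree across segments, Sklar's theorem makes copula equality equivalent to equality of the conditional joint law of $(X,Y)$. With a common $X$-conditional, that is in turn equivalent to $\mathbb{P}_{Y\mid X=x,\boldsymbol{Z}=\boldsymbol{z}}=\mathbb{P}_{Y'\mid X'=x,\boldsymbol{Z}'=\boldsymbol{z}}$ for $\mathbb{P}_{X\mid\boldsymbol{Z}=\boldsymbol{z}}$-a.e.\ $x$. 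Your argument then proves both parts under three conditions:
\begin{itemize}
\item the invariance assumption above (only its $X$-part is needed for (ii));
\item $\Lambda\sim\mathbb{P}_{\boldsymbol{Z}}$;
\item the positive-measure version of the hypothesis in (i).
\end{itemize}
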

Proposition~\ref{proposition: Q separation} holds for any positive measure whose support contains the supports of  $\mathbb{P}_{\boldsymbol{Z}}, 
\mathbb{P}_{\boldsymbol{Z}'}$, which paves the way for choosing $\Lambda$ depending on the structure of the problem at hand. However, in the absence of additional information a natural candidate is $\Lambda = \frac{1}{2} \mathbb{P}_Z + \frac{1}{2} \mathbb{P}_{Z'}$, and in the sequel we stick to this choice. 

\subsection{Estimation via $k_n$-nearest neighbours} \label{section: estimation}

In this section we explain how \eqref{equation: conditional copula MMD} may be estimated from a finite sample. The main difficulty in obtaining an estimator arises from the need to estimate kernel mean embeddings of conditional copulas for all $\boldsymbol{z}$'s in $\operatorname{supp}(\mathbb{P}_{\boldsymbol{Z}})$. To that end, observe that $Q_{X,Y,Z}$ may be equivalently written as
\begin{align}
Q_{X,Y,Z} & = \int_{\mathbb{R}^d} \mathbb{E}_{U_1, U_2 \sim \mathcal{C}_{X,Y \mid \boldsymbol{Z} = \boldsymbol{z}} }\left [ K \left ( U_1, U_2 \right ) \right ] \mathrm{d} \Lambda (\boldsymbol{z})  + \int_{\mathbb{R}^d} \mathbb{E}_{U_1', U_2' \sim \mathcal{C}_{X',Y' \mid \boldsymbol{Z'} = \boldsymbol{z}}} \left [ K \left ( U_1' , U_2' \right ) \right ] \mathrm{d} \Lambda (\boldsymbol{z}) \nonumber \\
& \quad - 2 \int_{\mathbb{R}^d} \mathbb{E}_{\substack{U \sim \mathcal{C}_{X,Y \mid \boldsymbol{Z} = \boldsymbol{z}}, \hspace{0.5em} U' \sim \mathcal{C}_{X', Y' \mid \boldsymbol{Z'} = \boldsymbol{z}}}} \left [ K \left ( U, U' \right ) \right ] \mathrm{d} \Lambda (\boldsymbol{z}).
\label{equation: Q expansion}
\end{align}

At a high level, we approximated the inner expectations by U-statistics computed over $k_n$-nearest neighbour--based pseudo-samples, which act as local draws from the conditional copula at a fixed $\boldsymbol{z}$. Denote the samples indexed respectively before or after $\eta$ by $\{ ( X_i, Y_i, \boldsymbol{Z}_i ) :i \in [\eta] \}$ and $\{ ( X_i', Y_i', \boldsymbol{Z}_i' ) :i \in [n - \eta] \}$. For each $\boldsymbol{z} \in \mathbb{R}^d$ let $\mathcal{N} ( {\boldsymbol{z}} )$ and $\mathcal{N}' ( {\boldsymbol{z}} )$ denote the indices of the $k_n$ nearest elements of $\{ \boldsymbol{Z}_1, \dots, \boldsymbol{Z}_\eta \}$ and $\{ \boldsymbol{Z}_1', \dots, \boldsymbol{Z}_{n-\eta}' \}$ to $\boldsymbol{z}$. Therefore, introduce the quantities
\begin{equation*}
\hat{U}_{X_j}^{\boldsymbol{Z}_i} = \frac{1}{k_n} \sum_{l \in \mathcal{N}(\boldsymbol{Z}_i)} \mathbf{1}_{\left \{ X_l \leq X_j \right \}}, \quad \hat{U}_{Y_j}^{\boldsymbol{Z}_i} =  \frac{1}{k_n} \sum_{l \in \mathcal{N} (\boldsymbol{Z}_i)} \mathbf{1}_{\left \{ Y_l \leq Y_j \right \}}, \quad \hat{U}_{X_j, Y_j}^{\boldsymbol{Z}_i} = \left ( \hat{U}_{X_j}^{\boldsymbol{Z}_i}, \hat{U}_{Y_j}^{\boldsymbol{Z}_i} \right )^\top, 
\end{equation*}
and let $\hat{U}_{X_j', Y_j'}^{\boldsymbol{Z}_i'}$, $\hat{U}_{X_j, Y_j}^{\boldsymbol{Z}_i'}$, and $\hat{U}_{X_j', Y_j'}^{\boldsymbol{Z}_i}$ be defined analogously. In the sequel we will also write 
\begin{equation*}
    \hat{\mathbb{K}}^{\boldsymbol{Z}_i}_{X_j, Y_j, X_k, Y_k} = K ( \hat{U}_{X_j, Y_j}^{\boldsymbol{Z}_i}, \hat{U}_{X_k, Y_k}^{\boldsymbol{Z}_i} ). 
\end{equation*}
With this notation in place, introduce the following quantities

\begin{align*}
& \hat{T}_1 =  \frac{1}{k_n (k_n-1)} \Big ( \frac{1}{\eta} \sum_{i_1 =1}^{\eta} \sum_{\substack{i_2, i_3 \in \mathcal{N} (\boldsymbol{Z}_{i_1})\\ i_2 < i_3}} \hat{\mathbb{K}}_{X_{i_2}, Y_{i_2}, X_{i_3}, Y_{i_3}}^{\boldsymbol{Z}_{i_1}} + \frac{1}{n - \eta} \sum_{i_1 =1}^{n - \eta} \sum_{\substack{i_2, i_3 \in \mathcal{N} (\boldsymbol{Z}_{i_1}') \\ i_2 < i_3}} \hat{\mathbb{K}}_{X_{i_2}, Y_{i_2}, X_{i_3}, Y_{i_3}}^{\boldsymbol{Z}_{i_1}'} \Big ), \\
& \hat{T}_2 = \frac{1}{k_n(k_n-1)} \Big ( \frac{1}{\eta} \sum_{i_1 =1}^{\eta} \sum_{\substack{i_2, i_3 \in \mathcal{N}'(\boldsymbol{Z}_{i_1}) \\ i_2 < i_3}} \hat{\mathbb{K}}_{X_{i_2}', Y_{i_2}', X_{i_3}', Y_{i_3}'}^{\boldsymbol{Z}_{i_1}} + \frac{1}{n - \eta} \sum_{i_1 =1}^{n - \eta} \sum_{\substack{i_2, i_3 \in \mathcal{N}'(\boldsymbol{Z}_{i_1}') \\ i_2 < i_3}} \hat{\mathbb{K}}_{X_{i_2}', Y_{i_2}', X_{i_3}', Y_{i_3}'}^{\boldsymbol{Z}_{i_1}'} \Big ) , \\
& \hat{T}_3 = \frac{1}{k_n^2} \Big ( \frac{1}{\eta} \sum_{i_1=1}^{\eta} \sum_{\substack{i_2 \in \mathcal{N} (\boldsymbol{Z}_{i_1}) \\ i_3 \in \mathcal{N}' (\boldsymbol{Z}_{i_1})}} \hat{\mathbb{K}}_{X_{i_2}, Y_{i_2}, X_{i_3}', Y_{i_3}'}^{\boldsymbol{Z}_{i_1}} + \frac{1}{n - \eta} \sum_{i_1=1}^{n - \eta} \sum_{\substack{i_2 \in \mathcal{N} (\boldsymbol{Z}_{i_1}') \\ i_3 \in \mathcal{N}'(\boldsymbol{Z}_{i_1}')}} \hat{\mathbb{K}}_{X_{i_2}, Y_{i_2}, X_{i_3}', Y_{i_3}'}^{\boldsymbol{Z}_{i_1}'} \Big ), 
\end{align*}
$\hat{T}_1$, $\hat{T}_2$, and $\hat{T}_3$ are natural plug-in estimators for each of the three terms appearing in \eqref{equation: Q expansion}. With these quantities in place \eqref{equation: conditional copula MMD} can be approximated via $\hat{Q}_{X,Y,Z} = \hat{T}_1 + \hat{T}_2 - \hat{T}_3. $
Regarding the time complexity of the proposed estimator, the setting where $\eta \asymp n$, the cost of computing $\hat{Q}$ is of the order $\mathcal{O} ( n \times \max \{ k_n \log (n),  k_n^2 \})$. This follows from the fact that each required $k_n$-nearest neighbour graph can be computed in $\mathcal{O} (n k_n \log (n))$ time \cite{friedman1977algorithm}. 

\subsection{Consistency and rates}

In this section we establish the consistency and rate of convergence of the nearest neighbour estimator proposed in the previous section. The following assumptions will be required. 

\begin{assumption}[Proportional measures]
$\forall A \in \operatorname{supp} (\mathbb{P}_{\boldsymbol{Z}})$ it holds that $\mathbb{P}(\boldsymbol{Z} \in A) \asymp \mathbb{P} (\boldsymbol{Z}' \in A)$. 
\end{assumption}

\begin{assumption}[Sub-Weibull]
There are constants $\boldsymbol{z}_*,\boldsymbol{z}_*' \in \mathbb{R}^d$ and $\alpha, d_0, T,C >0$ such (i) $\mathbb{P}\left( W \geq t\right) \lesssim e^{-C t^\alpha}$ for $W \in \{ \| \boldsymbol{Z} - \boldsymbol{z}_* \|_2, \| \boldsymbol{Z}' - \boldsymbol{z}_*' \|_2 \} $, and (ii) $\mathbb{P}\left( W \leq r\right) \gtrsim r^{d_0}$ for all $W \in \{ \| \boldsymbol{Z} - \boldsymbol{z} \|_2 : \| \boldsymbol{z} - \boldsymbol{z}_* \|_2 \leq T \} \cup \{ \| \boldsymbol{Z}' - \boldsymbol{z}' \|_2 : \| \boldsymbol{z}' - \boldsymbol{z}_*' \|_2 \leq T \}$.  
\label{assumption: Weibull tails}
\end{assumption}

\begin{assumption}[Lipschitz marginals]
There is a $\beta > 0$ such for any $\boldsymbol{z}_1, \boldsymbol{z}_2 \in \mathbb{R}^d$ it holds that (i) $\| F_{W | \boldsymbol{Z = z_1}} - F_{W | \boldsymbol{Z = z_2}} \|_\infty \lesssim \| \boldsymbol{z_1} - \boldsymbol{z_2}
\|_2^\beta$ for $W \in \{ X, Y \}$ and (ii) $\| F_{W' | \boldsymbol{Z' = z_1}} - F_{W' | \boldsymbol{Z' = z_2}} \|_\infty \lesssim \| \boldsymbol{z_1} - \boldsymbol{z_2}
\|_2^\beta$ for $W' \in \{ X', Y' \}$. 
\label{assumption: Lipschitz marginals}
\end{assumption}

Assumption \ref{assumption: Weibull tails} is needed to control the bias of the nearest neighbour estimator (see Lemma~\ref{lemma: knn expectation bound}), and Assumption \ref{assumption: Lipschitz marginals} is standard when estimating conditional associations in a non-parametric fashion (confer \cite{huang2022kernel} and \cite{azadkia2021simple}). With these assumptions in place, we have the following result. 

\begin{theorem}[rate of convergence]
Grant Assumptions \ref{assumption: regression; parent; support} and \ref{assumption: causal sufficiency}, as well as Assumptions \ref{assumption: Weibull tails} and \ref{assumption: Lipschitz marginals} hold. Assume moreover that the kernel $K$ is continuous with bounded first derivative, $\min(\eta, n - \eta) \asymp n$, and $1 \leq k_n \leq n / 2$. It holds that
\begin{equation*}
\left | \hat{Q}_{Z,Y,Z} - Q_{X,Y,Z} \right | = \mathcal{O}_{\mathbb{P}} \left ( \frac{k_n}{\sqrt{n}} +  \frac{1}{k_n^4} + \left(\frac{k_n}{n}\right)^{\beta / d_0} + \frac{(\log n)^{\beta / \alpha}}{n^2} \right ). 
\end{equation*}
\label{theorem: rate of convergence}
\end{theorem}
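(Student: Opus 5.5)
We prove Theorem~\ref{theorem: rate of convergence} one term at a time. Write $Q_{X,Y,Z} = T_1 + T_2 - 2T_3$ following \eqref{equation: Q expansion}, with $\Lambda = \frac12\mathbb{P}_{\boldsymbol{Z}} + \frac12 \mathbb{P}_{\boldsymbol{Z}'}$, and bound $|\hat T_j - c_j T_j|$ for $j = 1,2,3$, where $c_j$ is the appropriate constant. The factor $2$ on the cross term is absorbed into the normalisation of $\hat T_3$.

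For each $j$, insert two oracle quantities between $\hat T_j$ and $T_j$.
\begin{itemize}
\item $\tilde T_j$ is $\hat T_j$ with the estimated ranks $\hat U^{\boldsymbol{Z}_i}_{X_l}$ replaced by the true conditional probability integral transforms $F_{X\mid \boldsymbol{Z}=\boldsymbol{Z}_l}(X_l)$ (and similarly for $Y$, $X'$, $Y'$).
\item $\bar T_j$ is the same average with the $k_n$-nearest-neighbour pseudo-samples replaced by genuine draws from $\mathcal{C}_{X,Y\mid \boldsymbol{Z}=\boldsymbol{Z}_i}$.
\end{itemize}
The triangle inequality then gives three errors: a rank-estimation error $|\hat T_j-\tilde T_j|$, a nearest-neighbour bias $|\tilde T_j-\bar T_j|$, and a stochastic error $|\bar T_j - T_j|$.

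\emph{Rank-estimation error.} Since $K$ has bounded first derivative, $|\hat{\mathbb K} - \tilde{\mathbb K}|$ is at most a constant times $\max_{l\in\mathcal N(\boldsymbol{Z}_i)}|\hat U^{\boldsymbol{Z}_i}_{X_l} - F_{X\mid \boldsymbol{Z}=\boldsymbol{Z}_l}(X_l)|$ plus the analogous $Y$ term. We split this difference in two parts.
\begin{itemize}
\item The empirical-CDF fluctuation of $k_n$ conditionally (almost) independent draws. We would control it by DKW conditional on the neighbour set; we expect this, together with the U-statistic normalisation, to give the polynomial $1/k_n$-type term.
\item The drift of $F_{X\mid \boldsymbol{Z}=\boldsymbol{z}}$ across the neighbourhood. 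By Assumption~\ref{assumption: Lipschitz marginals} it is bounded by $R_{k_n}(\boldsymbol{Z}_i)^\beta$, where $R_{k_n}$ is the $k_n$-NN radius.
\end{itemize}

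\emph{Nearest-neighbour bias.} This step compares the joint law of pseudo-samples at neighbours $\boldsymbol{Z}_l$ with the copula at $\boldsymbol{Z}_i$. We would again reduce it to $\mathbb{E}[R_{k_n}(\boldsymbol{Z}_i)^\beta]$ using the Lipschitz assumption and the continuity of copulas in their arguments. Lemma~\ref{lemma: knn expectation bound} is invoked here: under Assumption~\ref{assumption: Weibull tails} it gives $\mathbb{E} R_{k_n}^\beta \lesssim (k_n/n)^{\beta/d_0}$ on the bulk $\{\|\boldsymbol{z}-\boldsymbol{z}_*\|\le T\}$. On the tail region, the sub-Weibull tails bound the radius by $(\log n)^{1/\alpha}$, and this produces the $(\log n)^{\beta/\alpha}/n^2$ remainder. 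The same argument applies to the cross sample, where $\mathcal N'(\boldsymbol{Z}_i)$ has $\boldsymbol{Z}$-points as centres; the proportional-measure and equal-support assumptions ensure the small-ball bound transfers.

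\emph{Stochastic error.} $\bar T_j$ is an average over $i_1$ of bounded local U-statistics. Changing one sample point affects at most $O(k_n)$ neighbourhoods, because a point is a $k_n$-NN of at most $O(k_n)$ others by the standard cone-covering bound. Each affected term moves by $O(1/n)$, so a bounded-differences (McDiarmid/Efron--Stein) argument gives variance $O(k_n^2/n)$, hence the $k_n/\sqrt n$ term. Finally, $\mathbb{E}\bar T_j$ differs from $T_j$ only through the $\Lambda$-integration, which the $1/\eta$ and $1/(n-\eta)$ averages reproduce exactly because $\min(\eta,n-\eta)\asymp n$.

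The main obstacle is the nearest-neighbour bias step. The pseudo-samples in $\mathcal N(\boldsymbol{Z}_i)$ are neither independent nor identically distributed, and their ranks are computed within the same neighbourhood. Showing that the conditional law of the transformed pairs is close to $\mathcal C_{X,Y\mid \boldsymbol{Z}=\boldsymbol{Z}_i}$ at the claimed rate therefore needs a careful conditioning argument on the NN graph, in the style of Azadkia--Chatterjee. I would first try conditioning on all $\boldsymbol{Z}$'s, so that the $(X_l,Y_l)$ become independent with laws indexed by $\boldsymbol{Z}_l$, and then bound the discrepancy term by term. I am not confident this yields the stated $k_n^{-4}$ exponent. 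A cruder bound of order $k_n^{-1/2}$ is what the argument naturally gives, so the sharper term would need a higher-order expansion of $K$ around the true ranks.
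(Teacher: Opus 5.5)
\textbf{The gap you flag is real and cannot be closed, because the $k_n^{-4}$ term is false.} No higher-order expansion will produce $k_n^{-4}$: $\hat{Q}_{X,Y,Z}$ has a bias of exact order $1/k_n$. The reason is that within a neighbourhood two ranks are never equal, whereas across the two samples they coincide with probability $1/k_n$. Here is an explicit example.

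Take any $\boldsymbol{Z}$ satisfying Assumption~\ref{assumption: Weibull tails}, independent of $(X,Y)$, with $X \indep Y$ continuous and the same joint law before and after $\eta=\lfloor n/2\rfloor$. Let $K(\boldsymbol{u},\boldsymbol{u}')=g(u_1-u_1')\,g(u_2-u_2')$ with $g(t)=e^{-\gamma t^2}$. All hypotheses hold (Assumption~\ref{assumption: Lipschitz marginals} for every $\beta>0$, since the conditional marginals do not depend on $\boldsymbol{z}$), and $Q_{X,Y,Z}=0$. Given the $\boldsymbol{Z}$'s, the $X$- and $Y$-ranks inside each neighbourhood are independent uniform permutations of $[k_n]$, independently across the two samples (post-change points are ranked within $\mathcal{N}'(\cdot)$). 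Set $c_k=k^{-2}\sum_{r,r'=1}^{k}g((r-r')/k)$ and $w_k=\{k(k-1)\}^{-1}\sum_{r\neq r'}g((r-r')/k)=(kc_k-1)/(k-1)$. Then $\mathbb{E}\hat{T}_1=\mathbb{E}\hat{T}_2=w_{k_n}^2$ and $\mathbb{E}\hat{T}_3=2c_{k_n}^2$, hence
\begin{equation*}
\mathbb{E}\hat{Q}_{X,Y,Z}=2\bigl(w_{k_n}^2-c_{k_n}^2\bigr)=-\frac{2(1-c_{k_n})(w_{k_n}+c_{k_n})}{k_n-1}\asymp-\frac{1}{k_n},
\end{equation*}
since $c_k,w_k\to\int_0^1\!\int_0^1 g(u-u')\,\mathrm{d}u\,\mathrm{d}u'\in(0,1)$. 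An Efron--Stein version of your fluctuation step gives $\hat{Q}_{X,Y,Z}-\mathbb{E}\hat{Q}_{X,Y,Z}=\mathcal{O}_{\mathbb{P}}(k_n/\sqrt{n})$. So with $k_n\asymp n^{1/10}$ we get $|\hat{Q}_{X,Y,Z}-Q_{X,Y,Z}|\gtrsim n^{-1/10}$ with probability tending to one. The theorem, with say $\beta=4d_0$, claims $\mathcal{O}_{\mathbb{P}}(n^{-2/5})$.

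The correct target is a $1/k_n$ term, and your rank step can deliver it. Expand around the transform built from $k_n^{-1}\sum_{l\in\mathcal{N}(\boldsymbol{Z}_{i_1})}F_{X\mid\boldsymbol{Z}=\boldsymbol{Z}_l}$ and keep the linear Taylor term signed: its conditional mean given the $\boldsymbol{Z}$'s is $\mathcal{O}(1/k_n)$, with only the summands $l\in\{i_2,i_3\}$ contributing. Then bound the quadratic remainder, which needs bounded second derivatives of $K$. Your $k_n^{-1/2}$ comes only from taking absolute values in the linear term.

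\textbf{You are not missing an idea from the paper.} Its proof uses the same three-step split, with intermediate transforms $k_n^{-1}\sum_{l}F_{X\mid\boldsymbol{Z}=\boldsymbol{Z}_l}(X_j)$ and $F_{X\mid\boldsymbol{Z}=\boldsymbol{Z}_i}(X_j)$. It handles fluctuations by edge-colouring $\mathcal{G}_n^2$, Vizing's theorem and Hoeffding within colour classes, and it bounds the conditional-mean terms by boundedness alone, $L^{(l)}\lesssim|\mathcal{C}_l|/(\eta k_n^2)$. The $k_n^{-4}$ comes from summing these with $\chi(\mathcal{G}_n^2)\lesssim k_n^2$ and $|\mathcal{C}_l|=\mathcal{O}(\eta/k_n^2)$. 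But the colour classes must partition all $\asymp\eta k_n^2$ summands, whereas those two bounds together cover only $\mathcal{O}(\eta)$. Hence $\sum_l|\mathcal{C}_l|/(\eta k_n^2)\asymp 1$, and the bias is never controlled.

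\textbf{Two further steps of yours need repair.}
\begin{enumerate}
\item Assumption~\ref{assumption: Lipschitz marginals} constrains only $\boldsymbol{z}\mapsto F_{X\mid\boldsymbol{Z}=\boldsymbol{z}},F_{Y\mid\boldsymbol{Z}=\boldsymbol{z}}$. Lipschitz continuity of a copula in $(u,v)$ says nothing about how $\mathcal{C}_{X,Y\mid\boldsymbol{Z}=\boldsymbol{z}}$ varies with $\boldsymbol{z}$. Comparing your draws at $\boldsymbol{Z}_l$ with draws at $\boldsymbol{Z}_i$ therefore needs an extra hypothesis, such as a $\beta$-H\"older bound on $\boldsymbol{z}\mapsto\mathcal{C}_{X,Y\mid\boldsymbol{Z}=\boldsymbol{z}}$ in total variation. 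The paper's $\check{V}_{1,1}\to V^*_{1,1}$ step silently needs the same.
\item The cone-covering bound on in-degrees holds only within one sample. For the cross-sample neighbourhoods $\mathcal{N}'(\boldsymbol{Z}_i)$ and $\mathcal{N}(\boldsymbol{Z}_i')$, a single point can be a neighbour of arbitrarily many centres, so worst-case bounded differences fail. Use Efron--Stein with a moment bound on the in-degree instead. Like the cross-sample radius bound, this relies on the proportional-measures assumption, which the theorem does not list.
\end{enumerate}
With $k_n^{-4}$ replaced by $k_n^{-1}$ and these hypotheses added, your outline is a workable route to the corrected bound.
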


Note that $d_0$ in Assumption \ref{assumption: Weibull tails} captures the intrinsic dimensionality of the $\boldsymbol{Z}, \boldsymbol{Z}'$'s, and in practice may be significantly smaller than $d$, the ambient dimension. Theorem \ref{theorem: rate of convergence} therefore reveals that the rate of convergence adapts to the intrinsic dimensionality of the data. In the setting where $d_0 \leq 2 \beta$, choosing $k_n \asymp n^{d_0/ (2\beta + d_0)}$ yields a rate of convergence of the order $n^{-\beta / (2 \beta + d_0)}$ which matches Stone's optimal rate of convergence \citep{stone1980optimal} for conditional density estimation for distributions with bounded $\beta$-th derivative. When $d_0 > 2 \beta$ choosing $k_n \asymp n^{1/10}$ yields a rate of convergence of the order $n^{-2/5}$, which this time matches Stone's optimal rate for conditional density estimation when the first two partial derivatives are bounded. 

\subsection{Inference} \label{section: inference}

While we were unable to establish the limiting distribution of $\hat{Q}_{X,Y,Z}$ under the null of no change in the causal mechanism, the estimator proposed in Section~\ref{section: estimation} can nevertheless be used for inference via a simple permutation test. For some $B \in \mathbb{N}$ let $\hat{Q}_{X,Y,Z}^{(1)}, \dots, \hat{Q}_{X,Y,Z}^{(B)}$ be computed using independent permutations of the observed sample $\{ (X_i, Y_i, \boldsymbol{Z}_i) : i \in [n] \}$. Then, a $p$-value may be computed via:  
\begin{equation}
    p_{\text{perm}} = \frac{1}{B+1} \sum_{b=1}^B \boldsymbol{1}_{\left \{ \hat{Q}_{X,Y,Z} > \hat{Q}_{X,Y,Z}^{(b)} \right \}}. 
\label{equation: perm p-val}
\end{equation}

Since $\hat{Q}_{X,Y,Z}$ exhibits near-linear time complexity, it is generally not too expensive to compute \eqref{equation: perm p-val} in practice. Note that $\hat{Q}_{X,Y,Z}$ can be interpreted as a weighted average of $n$ integral probability metrics, and there is a rich literature on permutation test for integral probability metrics; see \cite{praestgaard1995permutation}. The additional simulation study in Section \ref{section: behaviour under the null} of the appendix demonstrates that \eqref{equation: perm p-val} generates valid p-values for a wide range of data generating mechanisms.

\section{Simulation studies} \label{section: simulation studies}

In this section we evaluate the performance of our proposed statistic on synthetic data. In the sequel all kernel based methods discussed, including our proposed statistic, make use of the Gaussian kernel $K(\boldsymbol{w}_1, \boldsymbol{w}_2) = e^{- \gamma \| \boldsymbol{w}_1 - \boldsymbol{w}_2^2 \|_2}$ with bandwidth $\gamma$ chosen using the median heuristic \citep{garreau2017large}, and when computing our statistic we set the number of nearest neighbours to $30$. Our experiments can be reproduced using code available via \emph{Anonymous GitHub} \href{https://anonymous.4open.science/r/ccd-kern-copulas-1C5E}{here}.

\subsection{Competing methods} \label{section: competing methods}

To our knowledge, no two sample statistic has so far been put forward in the literature for the generic problem of detecting changes in causal dependence in a fully non-parametric fashion. We therefore compare our proposed methodology against modern (and classical) non-parametric one sample measures of conditional dependence. Letting $T_\bullet$ represent a generic one sample statistic, we construct a two sample statistic via $Q \coloneqq |T_{X, Y, \boldsymbol{Z}} - T_{X', Y', \boldsymbol{Z}'}|$, so that large values of $Q$ are indicative of a change across segments. We compare with the rank and distance-based measures of conditional dependence of \citeauthor{azadkia2021simple} (\scen{XI\_AC}) and \citeauthor{sze2007dcor,sze2014pdcor} (\scen{PDCORR}), the Hilbert-Schmidt Conditional Independence Criterion (\scen{CHSIC}) of \citep{fukumizu2007kernel}, the graph-based kernel partial correlation of coefficient (\scen{KPCGRAPH}) of \citeauthor{huang2022kernel}, and classical partial rank correlations \citep[\scen{PSPEAR} and \scen{PKENDALL}]{kendall1942partial}.

\subsection{Simulation results}

We assess the detection power of our proposed statistic and its competitors by computing the area under the curve (AUC) based on $500$ Monte Carlo replications, for sample paths of different data generating mechanisms with $400$ observations from both the pre and post-change distribution. We consider a range of data generating mechanisms, with including linear and non-linear dependence of the conditional mean of $Y$ as well as dependence in higher order moments; the data generating mechanisms are described in detail in Section~\ref{section: data generating mechanisms} of the appendix. The results are presented in Table~\ref{section: competing methods}, which shows that our proposed statistic is highly competitive against the state of the art, in particular when the change affects higher order moments of the response $Y$.

\begin{table*}[h]
\centering
\setlength{\tabcolsep}{4pt}
\renewcommand{\arraystretch}{1.15}
\caption{Synthetic positive (change) scenarios grouped by change type. AUC shown for multiple methods; p-values shown as median p-value with rejection count (k/R) for $\hat{Q}_{X,Y,Z}^{(\Lambda)}$.}
\label{tab:scenarios_positive:POOL_PERM}
\resizebox{\textwidth}{!}{%
\footnotesize
\begin{tabular}{@{} l ccccccc c @{}}
\toprule
\textbf{Test name} & \multicolumn{7}{c}{\textbf{AUC (mean (sd))}} & \makecell[c]{\textbf{p-value}\\\textbf{(k/R)}} \\
\cmidrule(lr){2-8}
 & $\hat{Q}_{X,Y,Z}$ & \scen{XI\_AC} & \scen{CHSIC} & \scen{PDCORR} & \scen{PSPEAR} & \scen{PKENDALL} & \scen{KPCGRAPH} & $\hat{Q}_{X,Y,Z}$ \\
\midrule
\multicolumn{9}{@{}l@{}}{\textbf{Causal graph edge structure: linear edge on (01), nonlinear edge on (02), new driver (03), edge off (04), collinear $X$ (05)}} \\
\scen{PMB01} & 0.947 (0.006) & 0.776 (0.017) & 1.000 (0.000) & 1.000 (0.000) & 1.000 (0.000) & 1.000 (0.000) & 0.547 (0.011) & 0.042 (26/50) \\
\scen{PMB02} & 0.845 (0.010) & 0.510 (0.012) & 0.977 (0.004) & 0.912 (0.011) & 0.961 (0.008) & 0.961 (0.007) & 0.508 (0.002) & 0.027 (30/50) \\
\scen{PMB03} & 0.825 (0.011) & 0.644 (0.015) & 1.000 (0.000) & 1.000 (0.000) & 0.995 (0.002) & 0.998 (0.001) & 0.546 (0.010) & 0.189 (9/50) \\
\scen{PMB04} & 1.000 (0.000) & 0.633 (0.016) & 1.000 (0.000) & 1.000 (0.000) & 1.000 (0.000) & 1.000 (0.000) & 0.508 (0.005) & 0.003 (42/50) \\
\scen{PMB05} & 1.000 (0.000) & 0.838 (0.013) & 1.000 (0.000) & 1.000 (0.000) & 1.000 (0.000) & 1.000 (0.000) & 0.651 (0.018) & 0.002 (50/50) \\
\midrule
\multicolumn{9}{@{}l@{}}{\textbf{Effect: sign flip (01; Gauss 06), strength shift (02; +$Z$ 03; +distract 04; +$Z$ unscaled 05), sparse $X$ (07), Poisson (08)}} \\
\scen{PEF01} & 1.000 (0.000) & 0.488 (0.014) & 1.000 (0.000) & 0.957 (0.007) & 1.000 (0.000) & 1.000 (0.000) & 0.509 (0.008) & 0.002 (50/50) \\
\scen{PEF02} & 0.855 (0.010) & 0.670 (0.018) & 1.000 (0.000) & 0.998 (0.001) & 0.986 (0.003) & 0.997 (0.001) & 0.508 (0.007) & 0.150 (10/50) \\
\scen{PEF03} & 0.910 (0.008) & 0.709 (0.017) & 1.000 (0.000) & 0.998 (0.001) & 0.992 (0.003) & 0.997 (0.001) & 0.518 (0.005) & 0.020 (36/50) \\
\scen{PEF04} & 0.910 (0.008) & 0.709 (0.017) & 1.000 (0.000) & 0.998 (0.001) & 0.992 (0.003) & 0.997 (0.001) & 0.518 (0.005) & 0.020 (36/50) \\
\scen{PEF05} & 0.910 (0.008) & 0.709 (0.017) & 1.000 (0.000) & 0.998 (0.001) & 0.992 (0.003) & 0.997 (0.001) & 0.518 (0.005) & 0.020 (36/50) \\
\scen{PEF06} & 1.000 (0.000) & 0.481 (0.016) & 1.000 (0.000) & 1.000 (0.000) & 1.000 (0.000) & 1.000 (0.000) & 0.508 (0.012) & 0.002 (50/50) \\
\scen{PEF07} & 1.000 (0.000) & 0.810 (0.016) & 1.000 (0.000) & 1.000 (0.000) & 0.995 (0.001) & 1.000 (0.000) & 0.778 (0.016) & 0.002 (50/50) \\
\scen{PEF08} & 0.946 (0.006) & 0.536 (0.017) & 1.000 (0.000) & 0.967 (0.006) & 0.927 (0.009) & 0.996 (0.001) & 0.531 (0.016) & 0.013 (39/50) \\
\midrule
\multicolumn{9}{@{}l@{}}{\textbf{Nonlinear mechanism: shape change (01), interaction edge $X$-$Z$ (02), tail-gated edge on (03), quadratic flip (04), high-freq sine (05)}} \\
\scen{PNL01} & 1.000 (0.000) & 0.647 (0.017) & 0.994 (0.002) & 0.993 (0.002) & 1.000 (0.000) & 1.000 (0.000) & 0.539 (0.013) & 0.002 (50/50) \\
\scen{PNL02} & 1.000 (0.000) & 0.532 (0.012) & 0.551 (0.013) & 0.994 (0.001) & 0.921 (0.010) & 0.933 (0.009) & 0.506 (0.007) & 0.003 (47/50) \\
\scen{PNL03} & 1.000 (0.000) & 0.595 (0.013) & 0.561 (0.012) & 1.000 (0.000) & 0.996 (0.001) & 0.992 (0.003) & 0.542 (0.012) & 0.004 (40/50) \\
\scen{PNL04} & 1.000 (0.000) & 0.490 (0.015) & 0.503 (0.004) & 0.510 (0.009) & 0.486 (0.017) & 0.511 (0.019) & 0.491 (0.008) & 0.002 (50/50) \\
\scen{PNL05} & 0.998 (0.001) & 0.996 (0.002) & 0.480 (0.013) & 0.627 (0.016) & 0.491 (0.016) & 1.000 (0.000) & 0.994 (0.002) & 0.002 (50/50) \\
\midrule
\multicolumn{9}{@{}l@{}}{\textbf{Conditional noise-shape changes: cond.\ skew (01), cond.\ tails (02), cond.\ mixture (03)}} \\
\scen{PNM01} & 1.000 (0.000) & 0.538 (0.012) & 0.682 (0.016) & 0.865 (0.012) & 0.532 (0.009) & 0.496 (0.013) & 0.556 (0.012) & 0.002 (50/50) \\
\scen{PNM02} & 0.894 (0.009) & 0.542 (0.018) & 0.544 (0.017) & 0.878 (0.011) & 0.542 (0.017) & 0.470 (0.017) & 0.591 (0.016) & 0.119 (12/50) \\
\scen{PNM03} & 0.678 (0.015) & 0.479 (0.016) & 0.607 (0.017) & 0.981 (0.005) & 0.662 (0.017) & 0.996 (0.001) & 0.706 (0.016) & 0.257 (9/50) \\
\midrule
\multicolumn{9}{@{}l@{}}{\textbf{Volatility / variance regimes: cond.\ heteroskedasticity (01), volatility clustering (02), global noise scale (03)}} \\
\scen{PVR01} & 0.759 (0.012) & 0.624 (0.017) & 0.568 (0.015) & 0.837 (0.015) & 0.832 (0.015) & 0.833 (0.015) & 0.527 (0.012) & 0.022 (31/50) \\
\scen{PVR02} & 1.000 (0.000) & 0.621 (0.015) & 0.950 (0.008) & 1.000 (0.000) & 0.997 (0.001) & 1.000 (0.000) & 0.628 (0.017) & 0.002 (50/50) \\
\scen{PVR03} & 1.000 (0.000) & 0.615 (0.015) & 0.959 (0.008) & 1.000 (0.000) & 0.995 (0.002) & 1.000 (0.000) & 0.623 (0.016) & 0.002 (50/50) \\
\bottomrule
\end{tabular}%
}
\end{table*}

\section{Real data example: application to causal change point detection} \label{section: real data examples}

In this section we show the practical value of our statistic by analysing real world financial data. We also illustrate how our statistic may be used in change point scenarios, where the location(s) at which the causal mechanism changes are unknown. 

\subsection{Causal change point detection} \label{section: causal change point detection}

We briefly introduce the notion of a causal change point. Given a data sequence $\{ (X_i, Y_i, \boldsymbol{Z}_i) : i \in [n] \}$, and putting putting $\mathbb{P}_i \coloneq \mathbb{P}_{X_i, Y_i, \boldsymbol{Z}_i}$ for short, we call an integer $\eta \in [n-1]$ a causal change point if it holds that
\begin{align*}
& \mathbb{P}_{\eta} (Y_\eta = y \mid do(X_\eta = x, \boldsymbol{Z}_\eta = \boldsymbol{z})) \neq \mathbb{P}_{X\eta+1} (Y_{\eta+1} = y \mid do(X_{\eta+1} = x, \boldsymbol{Z}_{\eta+1} = \boldsymbol{z})),
\end{align*}
meaning that the causal mechanism changes between times $\eta$ and $\eta+1$. Note that the causal change point differs from the usual definition of a change point, where the joint distribution changes \citep{brodsky2013nonparametric}, since the joint distribution of the data may change without affecting the causal mechanism. The number of such change points, as well as their locations, is unknown. 

Next we explain how the statistic introduced in Section~\ref{section: estimation} can be used to recover the unknown locations of the causal change points. The main idea is to compute the statistic over a sliding window of width $W < n$. Having thus obtained a sequence of statistics, we assign the first candidate change point to the arg-max of the sequence, then eliminate all statistics computed at a distance $W$ or less from this point. We then recursively search for the next candidate change point in this fashion until fewer than $W$ points are left; this step is inspired by \cite[Section 3]{eichinger2018mosum}. Then, for each candidate change point location we compute a p-value using \eqref{equation: perm p-val}, and retain only those locations whose p-value is smaller than some threshold $\bar{p} \in (0,1)$. This produce is summarized in Algorithm~\ref{algorithm: multiple change point detection} in the appendix, where we write $\hat{Q}_{X,Y,Z}^{(i)}$ for the statistic $\hat{Q}_{X, Y, \boldsymbol{Z}}$ computed using the samples $\{ (X_j, Y_j, \boldsymbol{Z}_j) : i-W < j \leq i \}$ and $\{ (X_j, Y_j, \boldsymbol{Z}_j) : i < j \leq i + W \}$. In practice one may choose $\bar{p}$ is a data driven way to account for multiple testing, for example using the Benjamini–Yekutieli procedure \citep{benjamini2001control}. 

\subsection{Analysis of crude oil prices}

We use Algorithm~\ref{algorithm: multiple change point detection} to detect changes in causal dependence between 10-Year U.S. Treasury Yields ($Y$) and Crude Oil prices ($X$), while accounting for inflation and real activity proxies ($\boldsymbol{Z}$). The data ranges from 1960 to 1990 and have monthly frequency. We pre-process the data to ensure stationarity (see Section~\ref{section: pre-processing} of the appendix for precise steps) then run Algorithm~\ref{algorithm: multiple change point detection} with a window size of $W = 25$ and thresholds $\bar{p} = 0.1$ and $\bar{p} = 0.05$. All tuning parameters for the statistic $\hat{Q}_{X,Y,Z}$ are set as in Section~\ref{section: simulation studies}.

Our data example is motivated by the oil--macro literature \cite{hamilton1983} and the monetary reinterpretation of stagflation dynamics \cite{barsky2001}. Algorithm~\ref{algorithm: multiple change point detection} detects a prominent change aligning with the 1973–74 Organization of the Petroleum Exporting Countries (OPEC) embargo, which is consistent with a major restructuring of macro–financial transmission during the first oil shock. The method also flags a change in 1970 and another around 1979. The second change is consistent with the second oil shock window Volcker policy announcements. In contrast, earlier episodes often associated with macro stress (e.g., the 1966 credit crunch and the 1971 Nixon shock) do not register as causal breaks, reinforcing that the detected changepoints reflect genuine changes in the underlying linkage rather than broad market turbulence. A plot showing values of the test statistic along with estimated change point locations and associated p-values is given in Figure~\ref{fig:S1_inflation}.

\begin{figure*}[t]
    \centering
    \includegraphics[width=\textwidth]{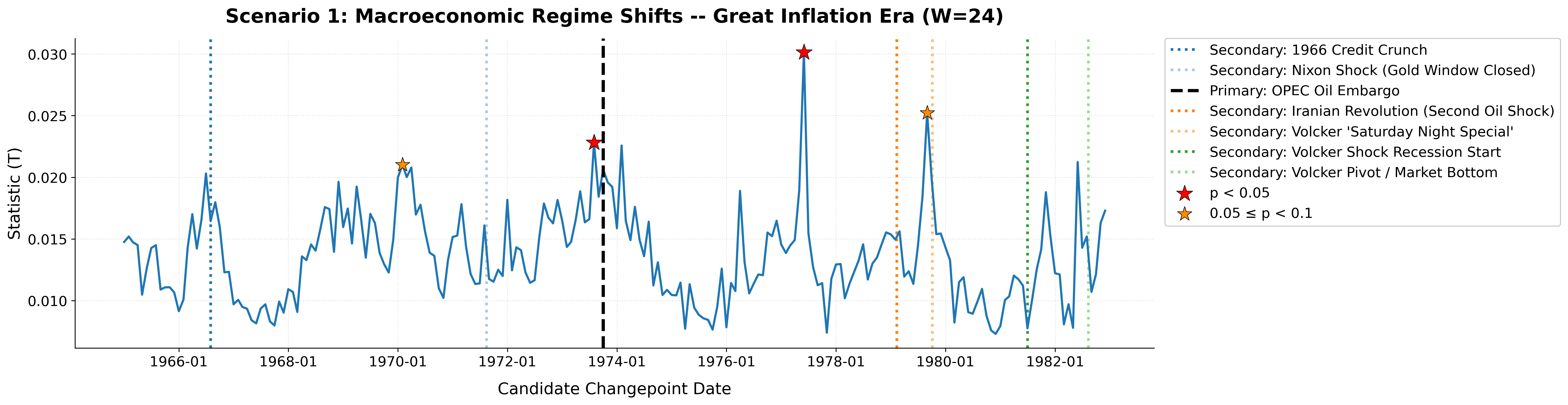}
    \caption{Stars represent estimated p-values, and dashed lines represent historically important events.}
    \label{fig:S1_inflation}
\end{figure*}

\section{Discussion}
\label{discussion_section}

We proposed a fully non-parametric framework for detecting changes in causal dependence using conditional copulas and kernel mean embeddings (Theorem~\ref{theorem: copula causality} and Proposition~\ref{proposition: Q separation}). Empirically, the method performs strongly across a wide range of settings (Table~\ref{tab:scenarios_positive:POOL_PERM} and Fig. \ref{fig:S1_inflation}), including challenging non-linear and noise-driven scenarios. The real data analysis further demonstrates that the method identifies economically meaningful changepoints aligned with known events. At the same time, several limitations remain. First, the method relies on standard causal assumptions such as causal sufficiency and correct specification of $\boldsymbol{Z}$; violations may lead to spurious detections or missed changes. Second, the $k_n$-nearest neighbour estimation can degrade in high-dimensional settings, which is reflected in reduced power in more complex scenarios (e.g.\ \scen{PNM03}, \scen{PVR01}). Third, inference relies on permutation testing, since the null distribution of the statistic is unknown. Finally, while the method detects whether a change occurs, it does not directly characterize the nature of the change. Overall, our results show that detecting changes in causal dependence is feasible in a fully non-parametric setting. This capability is particularly relevant in high-stakes domains such as finance or policy analysis, where shifts in underlying mechanisms can materially affect decisions. 


\bibliography{bib/kernal-causal-references}

@article{Muandet17:KME,
  year = {2017},
  volume = {10},
  journal = {Foundations and Trends in Machine Learning},
  title = {Kernel Mean Embedding of Distributions: A Review and Beyond},
  number = {1-2},
  pages = {1--141},
  author = {Krikamol Muandet and Kenji Fukumizu and Bharath Sriperumbudur and Bernhard Sch\"olkopf}
}

@book{pearlj,
	author = {Judea Pearl},
	title = {Causality: Models, Reasoning and Inference},
	publisher = {Cambridge University Press},
	year = {2000}
}

@inproceedings{DBLP:conf/aistats/GaoAYRK25,
  author       = {Shanyun Gao and
                  Raghavendra Addanki and
                  Tong Yu and
                  Ryan A. Rossi and
                  Murat Kocaoglu},
  title        = {Causal Discovery-Driven Change Point Detection in Time Series},
  booktitle    = {Proc. of {AISTATS}},
  volume       = {258},
  pages        = {3754--3762},
  year         = {2025},
}

@article{gu2020empirical,
  title={Empirical asset pricing via machine learning},
  author={Gu, Shihao and Kelly, Bryan and Xiu, Dacheng},
  journal={The Review of Financial Studies},
  volume={33},
  number={5},
  pages={2223--2273},
  year={2020},
}

@inproceedings{DBLP:conf/aistats/CheungSZ19,
  author       = {Wang Chi Cheung and
                  David Simchi{-}Levi and
                  Ruihao Zhu},
  title        = {Learning to Optimize under Non-Stationarity},
  booktitle    = {Proc. of {AISTATS}},
  volume       = {89},
  pages        = {1079--1087},
  year         = {2019},
}

@book{Basseville1993,
author = {Basseville, M. and Nikiforov, I.},
publisher = {Prentice Hall Englewood Cliffs},
title = {{Detection of abrupt changes: theory and application}},
volume = {104},
year = {1993}
}

@inproceedings{DBLP:conf/nips/HarchaouiBM08,
  author       = {Za{\"{\i}}d Harchaoui and
                  Francis R. Bach and
                  Eric Moulines},
  title        = {Kernel Change-point Analysis},
  booktitle    = {Proc. Of NeurIPS},
  pages        = {609--616},
  year         = {2008},
}

@inproceedings{DBLP:conf/icml/HockingRB15,
  author       = {Toby Hocking and
                  Guillem Rigaill and
                  Guillaume Bourque},
  title        = {PeakSeg: constrained optimal segmentation and supervised penalty learning
                  for peak detection in count data},
  booktitle    = {Proc. of ICML},
  volume       = {37},
  pages        = {324--332},
  year         = {2015},
}

@article{f6bf4042-8440-3ce2-87e3-78f31cedac7c,
 author = {R. Killick and P. Fearnhead and I. A. Eckley},
 journal = {Journal of the American Statistical Association},
 number = {500},
 pages = {1590--1598},
 title = {Optimal Detection of Changepoints With a Linear Computational Cost},
 volume = {107},
 year = {2012}
}

@article{DBLP:journals/corr/abs-2311-06012,
  author       = {Emmanouil Angelis and
                  Francesco Quinzan and
                  Ashkan Soleymani and
                  Patrick Jaillet and
                  Stefan Bauer},
  title        = {Double Machine Learning Based Structure Identification from Temporal
                  Data},
  journal      = {Trans. Mach. Learn. Res.},
  volume       = {2025},
  year         = {2025},
}

@article{huang2024,
      title={Causal Change Point Detection and Localization}, 
      author={Shimeng Huang and Jonas Peters and Niklas Pfister},
      year={2024},
    journal = {CoRR},
      volume={abs/2403.12677},
}

@article{mcgonigle2025nonparametric,
  title={Nonparametric data segmentation in multivariate time series via joint characteristic functions},
  author={McGonigle, Euan T and Cho, Haeran},
  journal={Biometrika},
  pages={asaf024},
  year={2025},
  publisher={Oxford University Press}
}

@article{fukumizu2007kernel,
  title={Kernel measures of conditional dependence},
  author={Fukumizu, Kenji and Gretton, Arthur and Sun, Xiaohai and Sch{\"o}lkopf, Bernhard},
  journal={Advances in neural information processing systems},
  volume={20},
  year={2007}
}

@article{azadkia2021fast,
  title={A fast non-parametric approach for local causal structure learning},
  author={Azadkia, Mona and Taeb, Armeen and B{\"u}hlmann, Peter},
  journal={arXiv preprint arXiv:2111.14969},
  year={2021}
}

@article{azadkia2021simple,
  title={A simple measure of conditional dependence},
  author={Azadkia, Mona and Chatterjee, Sourav},
  journal={The Annals of Statistics},
  volume={49},
  number={6},
  pages={3070--3102},
  year={2021},
  publisher={JSTOR}
}

@article{gijbels2011conditional,
  title={Conditional copulas, association measures and their applications},
  author={Gijbels, Ir{\`e}ne and Veraverbeke, No{\"e}l and Omelka, Marel},
  journal={Computational Statistics \& Data Analysis},
  volume={55},
  number={5},
  pages={1919--1932},
  year={2011},
  publisher={Elsevier}
}

@article{veraverbeke2011estimation,
  title={Estimation of a conditional copula and association measures},
  author={Veraverbeke, No{\"e}l and Omelka, Marek and Gijbels, Irene},
  journal={Scandinavian Journal of Statistics},
  volume={38},
  number={4},
  pages={766--780},
  year={2011},
  publisher={Wiley Online Library}
}

@inproceedings{sklar1959fonctions,
  title={Fonctions de r{\'e}partition {\`a} n dimensions et leurs marges},
  author={Sklar, M},
  booktitle={Annales de l'ISUP},
  volume={8},
  number={3},
  pages={229--231},
  year={1959}
}

@article {zhenpartial2022,
    AUTHOR = {Huang, Zhen and Deb, Nabarun and Sen, Bodhisattva},
     TITLE = {Kernel partial correlation coefficient---a measure of
              conditional dependence},
   JOURNAL = {J. Mach. Learn. Res.},
  FJOURNAL = {Journal of Machine Learning Research (JMLR)},
    VOLUME = {23},
      YEAR = {2022},
     PAGES = {Paper No. [216], 58},
      ISSN = {1532-4435},
   MRCLASS = {62H20 (62G05 62G20)},
  MRNUMBER = {4577169},
MRREVIEWER = {Martynas Manstavi\v{c}ius},
}

@article{xu2025quickest,
  title={Quickest Causal Change Point Detection by Adaptive Intervention},
  author={Xu, Haijie and Zhang, Chen},
  journal={arXiv preprint arXiv:2506.07760},
  year={2025}
}

@article{chatterjee2021new,
  title   = {A New Coefficient of Correlation},
  author  = {Chatterjee, Sourav},
  journal = {Journal of the American Statistical Association},
  volume  = {116},
  number  = {536},
  pages   = {2009--2022},
  year    = {2021}
}

@article{sze2007dcor,
  title   = {Measuring and Testing Dependence by Correlation of Distances},
  author  = {Sz{\'e}kely, G{\'a}bor J. and Rizzo, Maria L. and Bakirov, Nail K.},
  journal = {The Annals of Statistics},
  volume  = {35},
  number  = {6},
  pages   = {2769--2794},
  year    = {2007}
}

@article{sze2014pdcor,
  title   = {Partial Distance Correlation with Methods for Dissimilarities},
  author  = {Sz{\'e}kely, G{\'a}bor J. and Rizzo, Maria L.},
  journal = {The Annals of Statistics},
  volume  = {42},
  number  = {6},
  pages   = {2382--2412},
  year    = {2014},
  doi     = {10.1214/14-AOS1255}
}

@article{gretton2012mmd,
  title   = {A Kernel Two-Sample Test},
  author  = {Gretton, Arthur and Borgwardt, Karsten M. and Rasch, Malte J. and Sch{\"o}lkopf, Bernhard and Smola, Alex J.},
  journal = {Journal of Machine Learning Research},
  volume  = {13},
  pages   = {723--773},
  year    = {2012}
}

@article{kendall1942partial,
  title={Partial rank correlation},
  author={Kendall, Maurice G},
  journal={Biometrika},
  volume={32},
  number={3/4},
  pages={277--283},
  year={1942},
  publisher={JSTOR}
}

@book{nelsen2006introduction,
  title={An introduction to copulas},
  author={Nelsen, Roger B},
  year={2006},
  publisher={Springer}
}

@article{buhlmann2014cam,
  title={CAM: Causal additive models, high-dimensional order search and penalized regression},
  journal={Ann. Statist.},
  volume={42},
  pages={2526-2556},
  author={B{\"u}hlmann, Peter and Peters, Jonas and Ernest, Jan},
  year={2014}
}

@incollection{imoto2001estimation,
  title={Estimation of genetic networks and functional structures between genes by using Bayesian networks and nonparametric regression},
  author={Imoto, Seiya and Goto, Takao and Miyano, Satoru},
  booktitle={Biocomputing 2002},
  pages={175--186},
  year={2001},
  publisher={World Scientific}
}

@article{spirtes1991algorithm,
  title={An algorithm for fast recovery of sparse causal graphs},
  author={Spirtes, Peter and Glymour, Clark},
  journal={Social science computer review},
  volume={9},
  number={1},
  pages={62--72},
  year={1991},
  publisher={Sage Publications Sage CA: Thousand Oaks, CA}
}

@incollection{pearl1995theory,
  title={A theory of inferred causation},
  author={Pearl, Judea and Verma, Thomas S},
  booktitle={Studies in Logic and the Foundations of Mathematics},
  volume={134},
  pages={789--811},
  year={1995},
  publisher={Elsevier}
}

@article{szekely2014partial,
  title={Partial distance correlation with methods for dissimilarities},
  journal={Ann. Statist.},
  volume={42},
  pages={2382-2412},
  author={Sz{\'e}kely, G{\'a}bor J and Rizzo, Maria L},
  year={2014}
}

@article{chen2023iscan,
  title={iscan: Identifying causal mechanism shifts among nonlinear additive noise models},
  author={Chen, Tianyu and Bello, Kevin and Aragam, Bryon and Ravikumar, Pradeep},
  journal={Advances in Neural Information Processing Systems},
  volume={36},
  pages={44671--44706},
  year={2023}
}

@article{matteson2014nonparametric,
  title={A nonparametric approach for multiple change point analysis of multivariate data},
  author={Matteson, David S and James, Nicholas A},
  journal={Journal of the American Statistical Association},
  volume={109},
  number={505},
  pages={334--345},
  year={2014},
  publisher={Taylor \& Francis}
}

@inproceedings{DBLP:conf/icml/QuinzanSJRB23,
  author       = {Francesco Quinzan and
                  Ashkan Soleymani and
                  Patrick Jaillet and
                  Cristian R. Rojas and
                  Stefan Bauer},
title        = {{DRCFS:} Doubly Robust Causal Feature Selection},
  booktitle    = {Proc. of {ICML}},
  series       = {Proceedings of Machine Learning Research},
  volume       = {202},
  pages        = {28468--28491},
  year         = {2023},
}

@article{chen2024identifying,
  title={Identifying general mechanism shifts in linear causal representations},
  author={Chen, Tianyu and Bello, Kevin and Locatello, Francesco and Aragam, Bryon and Ravikumar, Pradeep},
  journal={Advances in Neural Information Processing Systems},
  volume={37},
  pages={42405--42429},
  year={2024}
}

@article{wang2018direct,
  title={Direct estimation of differences in causal graphs},
  author={Wang, Yuhao and Squires, Chandler and Belyaeva, Anastasiya and Uhler, Caroline},
  journal={Advances in neural information processing systems},
  volume={31},
  year={2018}
}

@article{ghoshal2019direct,
  title={Direct learning with guarantees of the difference dag between structural equation models},
  author={Ghoshal, Asish and Bello, Kevin and Honorio, Jean},
  journal={arXiv preprint arXiv:1906.12024},
  year={2019}
}

@article{li2023kernel,
  title={Kernel-based partial permutation test for detecting heterogeneous functional relationship},
  author={Li, Xinran and Jiang, Bo and Liu, Jun S},
  journal={Journal of the American Statistical Association},
  volume={118},
  number={542},
  pages={1429--1447},
  year={2023},
  publisher={Taylor \& Francis}
}

@inproceedings{saeed2020causal,
  title={Causal structure discovery from distributions arising from mixtures of dags},
  author={Saeed, Basil and Panigrahi, Snigdha and Uhler, Caroline},
  booktitle={International conference on machine learning},
  pages={8336--8345},
  year={2020},
  organization={PMLR}
}

@article{klebanov2020rigorous,
  title={A rigorous theory of conditional mean embeddings},
  author={Klebanov, Ilja and Schuster, Ingmar and Sullivan, Timothy John},
  journal={SIAM Journal on Mathematics of Data Science},
  volume={2},
  number={3},
  pages={583--606},
  year={2020},
  publisher={SIAM}
}

@article{friedman1977algorithm,
  title={An algorithm for finding best matches in logarithmic expected time},
  author={Friedman, Jerome H and Bentley, Jon Louis and Finkel, Raphael Ari},
  journal={ACM Transactions on Mathematical Software (TOMS)},
  volume={3},
  number={3},
  pages={209--226},
  year={1977},
  publisher={ACM New York, NY, USA}
}

@article{praestgaard1995permutation,
  title={Permutation and bootstrap Kolmogorov-Smirnov tests for the equality of two distributions},
  author={Praestgaard, Jens Thomas},
  journal={Scandinavian Journal of Statistics},
  pages={305--322},
  year={1995},
  publisher={JSTOR}
}

@article{DBLP:journals/jmlr/Spirtes10,
  author    = {Peter Spirtes},
  title     = {Introduction to Causal Inference},
  journal   = {Journal of Machine Learning Research},
  volume    = {11},
  pages     = {1643--1662},
  year      = {2010}
}

@article{sriperumbudur2010hilbert,
  title={Hilbert space embeddings and metrics on probability measures},
  author={Sriperumbudur, Bharath K and Gretton, Arthur and Fukumizu, Kenji and Sch{\"o}lkopf, Bernhard and Lanckriet, Gert RG},
  journal={The Journal of Machine Learning Research},
  volume={11},
  pages={1517--1561},
  year={2010},
  publisher={JMLR. org}
}

@article{hamilton1983,
  title = {Oil and the Macroeconomy since World War II},
  author = {Hamilton, James D.},
  journal = {Journal of Political Economy},
  volume = {91},
  number = {2},
  pages = {228--248},
  year = {1983},
  publisher = {The University of Chicago Press}
}

@article{barsky2001,
  title = {Do We Really Know that Oil Caused the Great Stagflation? A Monetary Alternative},
  author = {Barsky, Robert B. and Kilian, Lutz},
  journal = {NBER Macroeconomics Annual},
  volume = {16},
  pages = {137--183},
  year = {2001},
  publisher = {MIT Press}
}

@article{brunnermeier2009,
  title = {Deciphering the Liquidity and Credit Crunch 2007--2008},
  author = {Brunnermeier, Markus K.},
  journal = {Journal of Economic Perspectives},
  volume = {23},
  number = {1},
  pages = {77--100},
  year = {2009}
}

@article{cornett2011,
  title = {Liquidity Risk Management and Credit Supply in the Financial Crisis},
  author = {Cornett, Marcia Millon and McNutt, Jamie John and Strahan, Philip E. and Tehranian, Hassan},
  journal = {Journal of Financial Economics},
  volume = {101},
  number = {2},
  pages = {297--312},
  year = {2011},
  publisher = {Elsevier}
}

@article{longstaff2010,
  title = {The Subprime Credit Crisis and Contagion in Financial Markets},
  author = {Longstaff, Francis A.},
  journal = {Journal of Financial Economics},
  volume = {97},
  number = {3},
  pages = {436--450},
  year = {2010},
  publisher = {Elsevier}
}

@article{baur2010,
  title = {Is Gold a Hedge or a Safe Haven? An Analysis of Stocks, Bonds and Gold},
  author = {Baur, Dirk G. and Lucey, Brian M.},
  journal = {The Financial Review},
  volume = {45},
  number = {2},
  pages = {217--229},
  year = {2010},
  publisher = {Wiley Online Library}
}

@article{cheema2022,
  title = {The 2008 Global Financial Crisis and COVID-19 Pandemic: How Safe Are the Safe Haven Assets?},
  author = {Cheema, Muhammad A. and Faff, Robert W. and Szulczyk, Kenneth R.},
  journal = {International Review of Financial Analysis},
  volume = {83},
  pages = {102316},
  year = {2022},
  publisher = {Elsevier}
}

@article{davies2018,
  title = {The Heterogeneous Impact of Brexit: Early Indications from the FTSE},
  author = {Davies, Ronald B. and Studnicka, Zuzanna},
  journal = {European Economic Review},
  volume = {110},
  pages = {1--17},
  year = {2018},
  publisher = {Elsevier}
}

@article{garreau2017large,
  title={Large sample analysis of the median heuristic},
  author={Garreau, Damien and Jitkrittum, Wittawat and Kanagawa, Motonobu},
  journal={arXiv preprint arXiv:1707.07269},
  year={2017}
}

@article{vizing1964estimate,
  title={On an estimate of the chromatic class of a p-graph},
  author={Vizing, Vadim G},
  journal={Diskret analiz},
  volume={3},
  pages={25--30},
  year={1964}
}

@article{huang2022kernel,
  title={Kernel partial correlation coefficient--a measure of conditional dependence},
  author={Huang, Zhen and Deb, Nabarun and Sen, Bodhisattva},
  journal={Journal of Machine Learning Research},
  volume={23},
  number={216},
  pages={1--58},
  year={2022}
}

@article{roudaki2026kernel,
  title={Kernel Integrated $R^{2}$: A Measure of Dependence},
  author={Roudaki, Pouya and Gavioli-Akilagun, Shakeel and Kalinke, Florian and Azadkia, Mona and Szab{\'o}, Zolt{\'a}n},
  journal={arXiv preprint arXiv:2602.22985},
  year={2026}
}

@article{boucheron2003concentration,
  title={Concentration inequalities using the entropy method},
  author={Boucheron, St{\'e}phane and Lugosi, G{\'a}bor and Massart, Pascal},
  journal={The Annals of Probability},
  volume={31},
  number={3},
  pages={1583--1614},
  year={2003},
  publisher={Institute of Mathematical Statistics}
}

@article{benjamini2001control,
  title={The control of the false discovery rate in multiple testing under dependency},
  author={Benjamini, Yoav and Yekutieli, Daniel},
  journal={Annals of statistics},
  pages={1165--1188},
  year={2001},
  publisher={JSTOR}
}

@article{stone1980optimal,
  title={Optimal rates of convergence for nonparametric estimators},
  author={Stone, Charles J},
  journal={The Annals of Statistics},
  pages={1348--1360},
  year={1980},
  publisher={JSTOR}
}

@article{eichinger2018mosum,
  title={A MOSUM procedure for the estimation of multiple random change points},
  author={Eichinger, Birte and Kirch, Claudia},
  journal={Bernoulli },
  pages={526--564},
  year={2018}
}

@BOOK{diestel77vector,
  AUTHOR =       {Joseph Diestel and John J. Uhl},
  TITLE =        {Vector Measures},
  PUBLISHER =    {American Mathematical Society},
  YEAR =         {1977},
}

@book{brodsky2013nonparametric,
  title={Nonparametric methods in change point problems},
  author={Brodsky, Emily and Darkhovsky, Boris S},
  year={2013},
  publisher={Springer Science \& Business Media}
}

\newpage

\appendix

\begin{center}

{\Large\bf SUPPLEMENTARY MATERIALS}

\end{center}

\setcounter{equation}{0}

\section*{Contents}

\startcontents[sections]
\printcontents[sections]{l}{1}{\setcounter{tocdepth}{2}}

\section{Algorithm for causal change point detection} \label{section: causal }

Pseudo code for the procedure for recovering the unknown locations of causal change points using our proposed statistic, which was described in Section~\ref{section: causal change point detection}, is given in Algorithm~\ref{algorithm: multiple change point detection} below. 

\begin{algorithm}[!htbp]
\caption{causal change point detection using the proposed two sample statistic.}\label{algorithm: multiple change point detection}
\SetKwData{Left}{left}\SetKwData{This}{this}\SetKwData{Up}{up}
\SetKwFunction{Union}{Union}\SetKwFunction{FindCompress}{FindCompress}
\SetKwInOut{Input}{input}
\SetKwInOut{Output}{output}
\Input{data $\{ X_i, Y_i, \boldsymbol{Z}_i \}$ $i \in [n]$; window $W < n$; threshold $\bar{p} \in (0,1)$.}
\Output{estimated change point locations $\hat{\Theta} \subseteq [n]$.}
\BlankLine
$J \leftarrow [n]$; $\tilde{\Theta} \leftarrow \{ \emptyset \}$; $\hat{\Theta} \leftarrow \{ \emptyset \}$ \\
\BlankLine
\While{$|J| > 0$}{
$\tilde{\eta} \leftarrow \argmax_{i \in J} \hat{Q}_{X,Y,Z}^{(i)}$ \\
$\tilde{\Theta} \leftarrow \tilde{\Theta} \cup \{ \tilde{\eta} \}$ \\
$J \leftarrow J \setminus \{ j : \exists \eta' \in \tilde{\Theta} \text{ s.t. } |\eta' - j| < W \} $
}
\BlankLine
\For{$\eta' \in \tilde{\Theta}$}{
\text{compute p-value for $\hat{Q}_{X,Y,Z}^{(\eta')}$ using \eqref{equation: perm p-val}} \\
\If{\upshape{p-value} $\leq \bar{p}$}{
$\hat{\Theta} \leftarrow \hat{\Theta} \cup \{ \eta' \}$
}
\BlankLine
}
\Return{$\hat{\Theta}$}
\BlankLine
\end{algorithm}

\section{Additional numerical illustrations}

In this section we provide additional numerical illustrations on the performance of the proposed statistic. Section~\ref{section: p-vals instead} reproduces the simulation study in the main paper but reports p-values rather than AUC values, Section~\ref{section: behaviour under the null} investigates the behaviour of the proposed statistic, as well as competitors listed in Section~\ref{section: competing methods} in settings where the causal mechanism does not change, Section~\ref{section: further real data examples} presents the analysis of three additional financial time series using Algorithm~\ref{algorithm: multiple change point detection}, and finally Section~\ref{section: pre-processing} gives details of the pre-processing steps applied in the real data examples. 
\subsection{Additional studies on detection power} \label{section: p-vals instead}

We repeat the simulation study presented in Section~\ref{section: simulation studies}, but rather than presenting the AUC we report the median p-value for each method computed according to Section~\ref{section: inference}, as well as the proportion of times the p-value was smaller than $0.05$. The results are presented in Table~\ref{tab:scenarios_positive_pvals:POOL_PERM}, and generally align with the results of the simulation study presented in the main paper. 

\begin{table*}[!htbp]
\centering
\scriptsize
\setlength{\tabcolsep}{4pt}
\renewcommand{\arraystretch}{1.15}
\caption{Synthetic positive scenarios: p-values by method. Entries are median p-values with rejection counts (k/R) per method.}
\label{tab:scenarios_positive_pvals:POOL_PERM}
\begin{tabularx}{\textwidth}{@{} >{\raggedright\arraybackslash}X ccccccc @{}}
\toprule
\textbf{Test name} & \multicolumn{7}{c}{\textbf{p-value (k/R)}} \\
\cmidrule(lr){2-8}
 & $\hat{Q}_{X,Y,Z}$ & \scen{XI\_AC} & \scen{CHSIC} & \scen{PDCORR} & \scen{PSPEAR} & \scen{PKENDALL} & \scen{KPCGRAPH} \\
\midrule
\multicolumn{8}{@{}l@{}}{\textbf{Markov blanket / edge structure: linear edge on (01), nonlinear edge on (02), new driver (03), edge off (04), collinear $X$ (05)}} \\
\scen{PMB01} & 0.042 (26/50) & 0.073 (17/50) & 0.002 (50/50) & 0.002 (50/50) & 0.002 (50/50) & 0.002 (50/50) & 0.534 (2/50) \\
\scen{PMB02} & 0.027 (30/50) & 0.533 (4/50) & 0.002 (44/50) & 0.025 (33/50) & 0.009 (39/50) & 0.002 (43/50) & 0.472 (4/50) \\
\scen{PMB03} & 0.189 (9/50) & 0.294 (11/50) & 0.002 (50/50) & 0.002 (50/50) & 0.002 (49/50) & 0.002 (50/50) & 0.645 (0/50) \\
\scen{PMB04} & 0.003 (42/50) & 0.320 (7/50) & 0.002 (50/50) & 0.002 (50/50) & 0.002 (49/50) & 0.002 (50/50) & 0.351 (4/50) \\
\scen{PMB05} & 0.002 (50/50) & 0.062 (20/50) & 0.002 (50/50) & 0.002 (50/50) & 0.002 (50/50) & 0.002 (50/50) & 0.709 (0/50) \\
\midrule
\multicolumn{8}{@{}l@{}}{\textbf{Effect: sign flip (01; Gauss 06), strength shift (02; +$Z$ 03; +distract 04; +$Z$ unscaled 05), sparse $X$ (07), Poisson (08)}} \\
\scen{PEF01} & 0.002 (50/50) & 0.550 (2/50) & 0.002 (50/50) & 0.002 (48/50) & 0.002 (50/50) & 0.002 (50/50) & 0.421 (5/50) \\
\scen{PEF02} & 0.150 (10/50) & 0.179 (10/50) & 0.002 (50/50) & 0.002 (50/50) & 0.002 (50/50) & 0.002 (50/50) & 0.444 (5/50) \\
\scen{PEF03} & 0.020 (36/50) & 0.117 (14/50) & 0.002 (50/50) & 0.002 (50/50) & 0.002 (49/50) & 0.002 (50/50) & 0.460 (1/50) \\
\scen{PEF04} & 0.020 (36/50) & 0.117 (14/50) & 0.002 (50/50) & 0.002 (50/50) & 0.002 (49/50) & 0.002 (50/50) & 0.460 (1/50) \\
\scen{PEF05} & 0.020 (36/50) & 0.117 (14/50) & 0.002 (50/50) & 0.002 (50/50) & 0.002 (49/50) & 0.002 (50/50) & 0.460 (1/50) \\
\scen{PEF06} & 0.002 (50/50) & 0.659 (2/50) & 0.002 (50/50) & 0.002 (50/50) & 0.002 (50/50) & 0.002 (50/50) & 0.533 (1/50) \\
\scen{PEF07} & 0.002 (50/50) & 0.035 (32/50) & 0.002 (50/50) & 0.002 (50/50) & 0.002 (50/50) & 0.002 (50/50) & 0.634 (0/50) \\
\scen{PEF08} & 0.013 (39/50) & 0.382 (5/50) & 0.002 (50/50) & 0.002 (45/50) & 0.010 (37/50) & 0.002 (49/50) & 0.402 (4/50) \\
\midrule
\multicolumn{8}{@{}l@{}}{\textbf{Nonlinear mechanism: shape change (01), interaction edge $X$-$Z$ (02), tail-gated edge on (03), quadratic flip (04), high-freq sine (05)}} \\
\scen{PNL01} & 0.002 (50/50) & 0.423 (1/50) & 0.002 (48/50) & 0.002 (48/50) & 0.002 (50/50) & 0.002 (50/50) & 0.430 (5/50) \\
\scen{PNL02} & 0.003 (47/50) & 0.460 (1/50) & 0.527 (3/50) & 0.002 (49/50) & 0.010 (36/50) & 0.011 (36/50) & 0.449 (5/50) \\
\scen{PNL03} & 0.004 (40/50) & 0.418 (2/50) & 0.480 (4/50) & 0.002 (50/50) & 0.002 (47/50) & 0.002 (48/50) & 0.375 (3/50) \\
\scen{PNL04} & 0.002 (50/50) & 0.633 (0/50) & 0.402 (4/50) & 0.106 (21/50) & 0.865 (0/50) & 0.497 (3/50) & 0.343 (13/50) \\
\scen{PNL05} & 0.002 (50/50) & 0.002 (49/50) & 0.592 (2/50) & 0.366 (8/50) & 0.418 (5/50) & 0.002 (50/50) & 0.002 (50/50) \\
\midrule
\multicolumn{8}{@{}l@{}}{\textbf{Conditional noise-shape changes: cond.\ skew (01), cond.\ tails (02), cond.\ mixture (03)}} \\
\scen{PNM01} & 0.002 (50/50) & 0.491 (3/50) & 0.197 (9/50) & 0.036 (27/50) & 0.469 (4/50) & 0.416 (5/50) & 0.606 (2/50) \\
\scen{PNM02} & 0.119 (12/50) & 0.525 (5/50) & 0.408 (5/50) & 0.056 (24/50) & 0.419 (7/50) & 0.454 (5/50) & 0.494 (4/50) \\
\scen{PNM03} & 0.257 (9/50) & 0.396 (1/50) & 0.300 (4/50) & 0.008 (38/50) & 0.194 (9/50) & 0.002 (49/50) & 0.345 (11/50) \\
\midrule
\multicolumn{8}{@{}l@{}}{\textbf{Volatility / variance regimes: cond.\ heteroskedasticity (01), volatility clustering (02), global noise scale (03)}} \\
\scen{PVR01} & 0.022 (31/50) & 0.225 (11/50) & 0.382 (6/50) & 0.032 (28/50) & 0.023 (28/50) & 0.043 (27/50) & 0.516 (5/50) \\
\scen{PVR02} & 0.002 (50/50) & 0.431 (3/50) & 0.002 (40/50) & 0.002 (50/50) & 0.002 (48/50) & 0.002 (50/50) & 0.265 (11/50) \\
\scen{PVR03} & 0.002 (50/50) & 0.399 (3/50) & 0.002 (40/50) & 0.002 (50/50) & 0.002 (46/50) & 0.002 (50/50) & 0.279 (10/50) \\
\bottomrule
\end{tabularx}
\end{table*}

\subsection{Behaviour under the null of no change in causal dependence} \label{section: behaviour under the null}

We repeat the experiments in Section~\ref{section: simulation studies} and~\ref{section: p-vals instead} using synthetic data generated such that the causal dependence does not change. The data generating processes are described in detail in Section~\ref{section: negative controls}, and the results are reported in the tables below. 

\begin{table*}[!htbp]
\centering
\setlength{\tabcolsep}{4pt}
\renewcommand{\arraystretch}{1.15}
\caption{Synthetic control (null) scenarios grouped by what they stress-test. AUC shown for multiple methods; p-values shown as median p-value with rejection count (k/R) for $\hat{Q}_{X,Y,Z}$.}
\label{tab:scenarios_null:POOL_PERM}
\resizebox{\textwidth}{!}{%
\footnotesize
\begin{tabular}{@{} l ccccccc c @{}}
\toprule
\textbf{Test name} & \multicolumn{7}{c}{\textbf{AUC (mean (sd))}} & \makecell[c]{\textbf{p-value}\\\textbf{(k/R)}} \\
\cmidrule(lr){2-8}
 & $\hat{Q}_{X,Y,Z}$ & \scen{XI\_AC} & \scen{CHSIC} & \scen{PDCORR} & \scen{PSPEAR} & \scen{PKENDALL} & \scen{KPCGRAPH} & $\hat{Q}_{X,Y,Z}$ \\
\midrule
\multicolumn{9}{@{}l@{}}{\textbf{Baseline nulls: stationary null (01), multi-$X$ aligned with PMB03 (02)}} \\
\scen{NCL01} & 0.499 (0.003) & 0.500 (0.000) & 0.500 (0.000) & 0.500 (0.000) & 0.500 (0.000) & 0.500 (0.000) & 0.500 (0.000) & 0.523 (1/50) \\
\scen{NCL02} & 0.475 (0.008) & 0.500 (0.008) & 0.926 (0.010) & 0.662 (0.016) & 0.515 (0.008) & 0.713 (0.017) & 0.911 (0.010) & 0.994 (0/50) \\
\midrule
\multicolumn{9}{@{}l@{}}{\textbf{Invariance checks: global rescale (01), $Y$ monotone (02), $X$ monotone given $Z$ (03)}} \\
\scen{NIV01} & 0.829 (0.010) & 0.500 (0.001) & 1.000 (0.000) & 0.500 (0.000) & 0.500 (0.000) & 0.500 (0.000) & 0.498 (0.003) & 0.272 (14/50) \\
\scen{NIV02} & 0.499 (0.003) & 0.500 (0.000) & 0.545 (0.015) & 0.936 (0.008) & 0.500 (0.000) & 0.500 (0.000) & 0.548 (0.013) & 0.976 (0/50) \\
\scen{NIV03} & 0.491 (0.004) & 0.487 (0.006) & 0.597 (0.017) & 0.804 (0.015) & 0.500 (0.003) & 0.506 (0.004) & 0.519 (0.012) & 0.724 (0/50) \\
\midrule
\multicolumn{9}{@{}l@{}}{\textbf{Marginal drift only: $Z$ loc/scale (01), $X$ mean shift (02), $Y$ trend (03), $Y$ seasonal (04)}} \\
\scen{NMD01} & 0.708 (0.014) & 0.484 (0.010) & 0.990 (0.002) & 0.507 (0.002) & 0.509 (0.007) & 0.525 (0.009) & 0.488 (0.013) & 0.112 (18/50) \\
\scen{NMD02} & 0.506 (0.018) & 0.494 (0.019) & 0.498 (0.018) & 0.525 (0.017) & 0.500 (0.018) & 0.507 (0.018) & 0.473 (0.019) & 0.559 (1/50) \\
\scen{NMD03} & 0.499 (0.003) & 0.500 (0.000) & 0.500 (0.000) & 0.500 (0.000) & 0.500 (0.000) & 0.500 (0.000) & 0.500 (0.000) & 0.450 (3/50) \\
\scen{NMD04} & 0.504 (0.003) & 0.500 (0.000) & 0.500 (0.000) & 0.500 (0.000) & 0.500 (0.000) & 0.500 (0.000) & 0.500 (0.000) & 0.519 (3/50) \\
\midrule
\multicolumn{9}{@{}l@{}}{\textbf{Noise-law / tail-shape: noise-law shift (01), tails shift (02)}} \\
\scen{NNS01} & 0.513 (0.013) & 0.479 (0.016) & 0.513 (0.011) & 0.507 (0.016) & 0.486 (0.016) & 0.493 (0.017) & 0.491 (0.016) & 0.390 (4/50) \\
\scen{NNS02} & 0.567 (0.015) & 0.491 (0.018) & 0.536 (0.016) & 0.483 (0.017) & 0.473 (0.017) & 0.587 (0.018) & 0.491 (0.017) & 0.286 (10/50) \\
\midrule
\multicolumn{9}{@{}l@{}}{\textbf{Covariate / confounding drift: $X|Z$ drift (01), $Z$ cov shift (02), $X|Z$ drift, copula (03), $F_Z$ drift only (04), discrete $Z$ (05)}} \\
\scen{NCF01} & 0.499 (0.002) & 0.478 (0.014) & 0.973 (0.005) & 0.500 (0.000) & 0.500 (0.000) & 0.500 (0.000) & 0.489 (0.010) & 0.585 (3/50) \\
\scen{NCF02} & 0.603 (0.013) & 0.498 (0.017) & 0.441 (0.014) & 0.997 (0.001) & 0.893 (0.012) & 0.899 (0.012) & 0.612 (0.016) & 0.295 (3/50) \\
\scen{NCF03} & 0.672 (0.010) & 0.490 (0.010) & 0.505 (0.011) & 0.781 (0.012) & 0.512 (0.005) & 0.987 (0.003) & 0.814 (0.015) & 0.293 (8/50) \\
\scen{NCF04} & 0.518 (0.004) & 0.512 (0.004) & 0.506 (0.005) & 0.489 (0.008) & 0.509 (0.007) & 0.521 (0.010) & 0.517 (0.007) & 0.478 (1/50) \\
\scen{NCF05} & 0.682 (0.015) & 0.466 (0.018) & 0.490 (0.010) & 0.674 (0.017) & 0.482 (0.010) & 0.558 (0.015) & - & 0.327 (5/50) \\
\midrule
\multicolumn{9}{@{}l@{}}{\textbf{Partial observability: latent $Z$, stable regime (01)}} \\
\scen{NPO01} & 0.503 (0.004) & 0.500 (0.000) & 0.500 (0.000) & 0.500 (0.000) & 0.500 (0.000) & 0.500 (0.000) & 0.500 (0.000) & 0.511 (4/50) \\
\bottomrule
\end{tabular}%
}
\end{table*}

\begin{table*}[!htbp]
\centering
\scriptsize
\setlength{\tabcolsep}{4pt}
\renewcommand{\arraystretch}{1.15}
\caption{Synthetic control scenarios: p-values by method. Entries are median p-values with rejection counts (k/R) per method.}
\label{tab:scenarios_null_pvals:POOL_PERM}
\begin{tabularx}{\textwidth}{@{} >{\raggedright\arraybackslash}X ccccccc @{}}
\toprule
\textbf{Test name} & \multicolumn{7}{c}{\textbf{p-value (k/R)}} \\
\cmidrule(lr){2-8}
 & $\hat{Q}_{X,Y,Z}$ & \scen{XI\_AC} & \scen{CHSIC} & \scen{PDCORR} & \scen{PSPEAR} & \scen{PKENDALL} & \scen{KPCGRAPH} \\
\midrule
\multicolumn{8}{@{}l@{}}{\textbf{Baseline nulls: stationary null (01), multi-$X$ aligned with PMB03 (02)}} \\
\scen{NCL01} & 0.523 (1/50) & 0.492 (2/50) & 0.541 (1/50) & 0.415 (4/50) & 0.371 (5/50) & 0.406 (3/50) & 0.415 (5/50) \\
\scen{NCL02} & 0.994 (0/50) & 0.583 (2/50) & 0.019 (35/50) & 0.432 (4/50) & 0.588 (3/50) & 0.215 (12/50) & 0.055 (24/50) \\
\midrule
\multicolumn{8}{@{}l@{}}{\textbf{Invariance checks: global rescale (01), $Y$ monotone (02), $X$ monotone given $Z$ (03)}} \\
\scen{NIV01} & 0.272 (14/50) & 0.673 (0/50) & 0.002 (50/50) & 0.644 (1/50) & 0.798 (0/50) & 0.705 (0/50) & 0.471 (3/50) \\
\scen{NIV02} & 0.976 (0/50) & 0.601 (0/50) & 0.504 (2/50) & 0.016 (34/50) & 0.531 (2/50) & 0.480 (7/50) & 0.372 (6/50) \\
\scen{NIV03} & 0.724 (0/50) & 0.656 (1/50) & 0.433 (3/50) & 0.114 (16/50) & 0.506 (3/50) & 0.460 (3/50) & 0.363 (7/50) \\
\midrule
\multicolumn{8}{@{}l@{}}{\textbf{Marginal drift only: $Z$ loc/scale (01), $X$ mean shift (02), $Y$ trend (03), $Y$ seasonal (04)}} \\
\scen{NMD01} & 0.112 (18/50) & 0.498 (2/50) & 0.002 (50/50) & 0.397 (4/50) & 0.353 (4/50) & 0.433 (5/50) & 0.442 (5/50) \\
\scen{NMD02} & 0.559 (1/50) & 0.500 (2/50) & 0.360 (4/50) & 0.488 (3/50) & 0.514 (1/50) & 0.481 (2/50) & 0.526 (3/50) \\
\scen{NMD03} & 0.450 (3/50) & 0.519 (3/50) & 0.669 (2/50) & 0.233 (15/50) & 0.539 (2/50) & 0.504 (2/50) & 0.416 (4/50) \\
\scen{NMD04} & 0.519 (3/50) & 0.555 (0/50) & 0.410 (3/50) & 0.476 (5/50) & 0.420 (4/50) & 0.444 (4/50) & 0.457 (3/50) \\
\midrule
\multicolumn{8}{@{}l@{}}{\textbf{Noise-law / tail-shape: noise-law shift (01), tails shift (02)}} \\
\scen{NNS01} & 0.390 (4/50) & 0.499 (2/50) & 0.492 (5/50) & 0.520 (2/50) & 0.461 (0/50) & 0.590 (2/50) & 0.416 (3/50) \\
\scen{NNS02} & 0.286 (10/50) & 0.427 (2/50) & 0.410 (6/50) & 0.557 (3/50) & 0.514 (3/50) & 0.335 (6/50) & 0.505 (3/50) \\
\midrule
\multicolumn{8}{@{}l@{}}{\textbf{Covariate / confounding drift: $X|Z$ drift (01), $Z$ cov shift (02), $X|Z$ drift, copula (03), $F_Z$ drift only (04), discrete $Z$ (05)}} \\
\scen{NCF01} & 0.585 (3/50) & 0.572 (0/50) & 0.026 (34/50) & 0.102 (23/50) & 0.567 (0/50) & 0.419 (4/50) & 0.663 (0/50) \\
\scen{NCF02} & 0.295 (3/50) & 0.650 (3/50) & 0.524 (3/50) & 0.002 (48/50) & 0.033 (34/50) & 0.069 (23/50) & 0.438 (1/50) \\
\scen{NCF03} & 0.293 (8/50) & 0.517 (3/50) & 0.554 (3/50) & 0.358 (8/50) & 0.417 (4/50) & 0.002 (47/50) & 0.122 (16/50) \\
\scen{NCF04} & 0.478 (1/50) & 0.571 (1/50) & 0.627 (3/50) & 0.424 (4/50) & 0.426 (4/50) & 0.461 (5/50) & 0.457 (6/50) \\
\scen{NCF05} & 0.327 (5/50) & 0.369 (1/50) & 0.489 (1/50) & 0.006 (32/50) & 0.510 (2/50) & 0.322 (6/50) & 1.000 (0/50) \\
\midrule
\multicolumn{8}{@{}l@{}}{\textbf{Partial observability: latent $Z$, stable regime (01)}} \\
\scen{NPO01} & 0.511 (4/50) & 0.544 (2/50) & 0.495 (3/50) & 0.629 (4/50) & 0.556 (3/50) & 0.403 (5/50) & 0.467 (4/50) \\
\bottomrule
\end{tabularx}
\end{table*}

\subsection{Additional real data examples} \label{section: further real data examples}

In this section we present the analysis of three further financial time series using Algorithm~\ref{algorithm: multiple change point detection}. All tuning parameters are set identically to Section~\ref{section: real data examples}, with the exception of the window sizes winch are chosen to be $252$, $63$, and $126$ for each of the data examples respectively. We also provide figures showing the values taken by the test statistic, as well as estimated change point locations and associated p-values, in Figures~\ref{fig:S2_gfc_sustained},~\ref{fig:S3_covid_initial}, and~\ref{fig:S4_brexit}. The data examples, along with our findings, are described below. 

\paragraph{Scenario 2: Systemic Credit Crisis -- GFC.} 

We test a ``flight-to-quality / funding-stress'' mechanism between the S\&P 500 ($X$) and Investment Grade Corporate Bonds ($Y$), conditioning on implied volatility and interbank liquidity risk proxies ($Z$) \cite{cornett2011,brunnermeier2009,longstaff2010}. Analysed over a 2003--2016 runway, the detector flags highly significant ruptures prior to the Lehman Brothers collapse during the 2004 rate hike cycle and early 2007 subprime tremors. Conversely, it effectively filters out pure volatility events like the 2006 housing peak and the Bear Stearns crisis. Post-Lehman, the algorithm strongly identifies the 2008 collapse itself, alongside zero-rate era shocks such as the 2010 Eurozone crisis and early 2016 turbulence caused by the first rate hike. The 2011 U.S. Downgrade registers only as a marginally significant shift while transient panics like the 2013 ``Taper Tantrum'', sparked by fears of the Federal Reserve reducing its bond purchases, do not trigger causal breaks at all, further underscoring the metric's robustness.

\paragraph{Scenario 3: Market Liquidity Stress -- COVID-19.} 

We examine the conditional relationship between the S\&P 500 ($X$) and Gold ($Y$), conditioning on real yields and the U.S. Dollar Index ($Z$). Gold is a canonical hedge/safe-haven benchmark in the literature \cite{baur2010}, while COVID-era market stress generated a broad ``dash for cash'' episode that disrupted usual cross-asset behavior \cite{cheema2022}. The detector flags highly significant structural instabilities preceding the pandemic, including the 2015 China devaluation, the 2018 ``Volmageddon'' shock, and the 2019 repo crisis, culminating in the severe March 2020 COVID-19 rupture. Post-pandemic, the detector continues to capture major causal decoupling events tied to the 2022 Fed rate hikes and the 2023 Silicon Valley Bank collapse, culminating in an unprecedented structural shift in late 2023 as gold's safe-haven pricing completely detached from surging real yields.

\paragraph{Scenario 4: Sovereign Political Risk -- Brexit.} 

We analyse the relationship between the British Pound to U.S. Dollar exchange rate (GBP/USD) ($X$) and the UK FTSE 100 Index ($Y$), conditioning on global equity and risk-aversion baselines ($Z$). The economic intuition follows the Brexit/FTSE heterogeneity literature, in which sterling moves can mechanically reprice internationally exposed UK equities \cite{davies2018}. The detector flags marginally significant early-sample stress around the Eurozone crisis and the start of the referendum campaign, before registering its absolute maximum causal break directly aligning with the 2016 Brexit referendum period. Post-referendum, the algorithm continues to identify highly significant structural ruptures corresponding to the turbulent withdrawal timeline, culminating in a massive secondary spike in early 2022, coinciding with the geopolitical onset of the European energy crisis and the Bank of England's aggressive tightening cycle.

\begin{figure*}[!htbp]
    \centering
    \includegraphics[width=\textwidth]{figures/S1_LONGTERM_NIXON_VOLCKER_W24_long.png}
    \caption{Causal Changepoint Detection: Scenario 1 (Great Inflation Era) -- red stars ($p < 0.05$) and orange stars ($0.05 \leq p < 0.1$) indicate estimated change p. Primary target events are marked with thick dashed black lines, while historical secondary events are marked with colored dotted lines.}
    \label{fig:S1_inflation}
\end{figure*}

\begin{figure*}[!htbp]
    \centering
    \includegraphics[width=\textwidth]{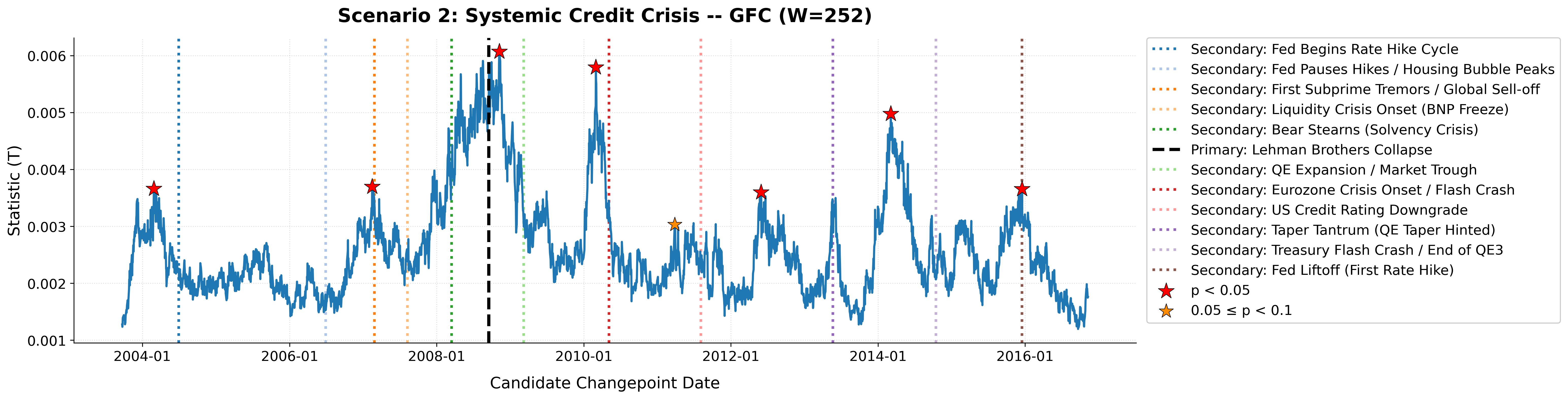}
    \caption{\textbf{Causal Changepoint Detection: Scenario 2 (GFC Sustained Break Variant).} Test statistic $T$ over time for the Global Financial Crisis using a longer window size ($W=252$) to capture sustained regime shifts. Red stars indicate $p < 0.05$ and orange stars indicate $0.05 \leq p < 0.1$.}
    \label{fig:S2_gfc_sustained}
\end{figure*}


\begin{figure*}[!htbp]
    \centering
    \includegraphics[width=\textwidth]{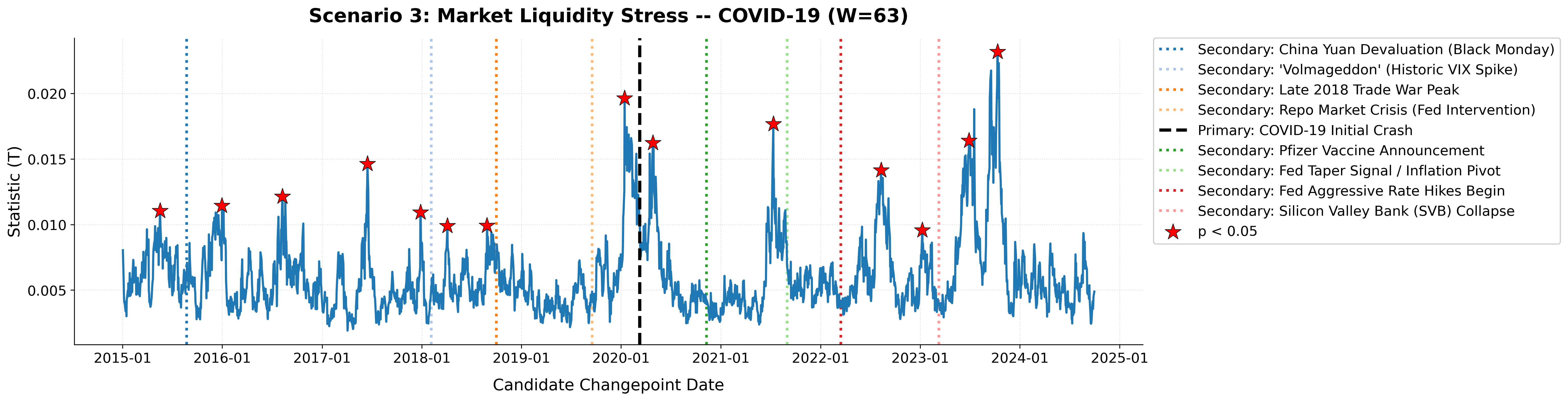}
    \caption{\textbf{Causal Changepoint Detection: Scenario 3 (COVID-19 Initial Break Variant).} Test statistic $T$ over time during the COVID-19 market stress using a shorter window size ($W=63$). Red stars indicate $p < 0.05$ and orange stars indicate $0.05 \leq p < 0.1$.}
    \label{fig:S3_covid_initial}
\end{figure*}

\begin{figure*}[!htbp]
    \centering
    \includegraphics[width=\textwidth]{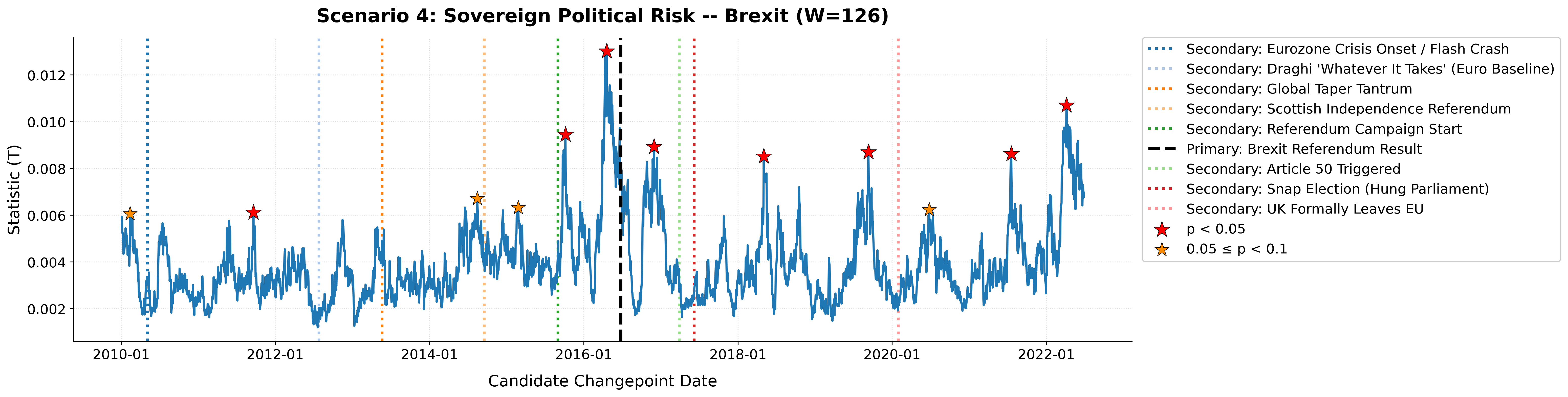}
    \caption{\textbf{Causal Changepoint Detection: Scenario 4 (Sovereign Political Risk -- Brexit).} Test statistic $T$ over time surrounding the Brexit referendum and subsequent policy shifts. Red stars indicate $p < 0.05$ and orange stars indicate $0.05 \leq p < 0.1$.}
    \label{fig:S4_brexit}
\end{figure*}

\subsection{Details on pre-processing in real data examples} \label{section: pre-processing}

In this section we provide details on the pre-processing steps carried out on data analysed in the real world examples presented in Sections~\ref{section: real data examples} and \ref{section: further real data examples}. The following transformations were applied to each dataset, in the order stated below. 

\paragraph{Stationarity Transformations} We transform each variable into a return series $X_i$ based on its asset class in the following way. For asset prices ($P_i$; stocks, commodities, indices, ETFs) we transformed prices to log returns via $X_i \coloneq \log (P_i) - \log (P_{i-1})$. For Rates, spreads, and volatility indices ($R_i$; yields, VIX, TED spread, etc.) we worked with the first difference of the series: $X_i \coloneq R_{i} - R_{i-1}$. 

\paragraph{Heteroskedasticity Adjustment}
We normalize each transformed series using a recursive Exponentially Weighted Moving Average volatility estimator. For a span $S$ (daily data: $S=63$ trading days, approximately one quarter; monthly data: $S=12$ months), define $\alpha = \frac{2}{S+1}$. the variance was then estimated locally at each time step $i$ via $\hat{\sigma}_i^2 = (1-\alpha)\hat{\sigma}_{t-1}^2 + \alpha (X_i - \bar{\mu}_i)^2$, and we worke with the series $\tilde{X}_i = X_i / \hat{\sigma}_t + \epsilon$ where $\epsilon=10^{-6}$ was included for numerical stability.

\section{Necessity of Assumption \ref{assumption: regression; parent; support}}
\label{appendix: assumptions neessity}

To see why Assumption \ref{assumption: regression; parent; support} is necessary, we construct examples of distinct causal models that induce the same observational distribution but imply different interventional effects. In such cases, the observational law does not uniquely determine the interventional distribution, and causal reasoning from observational data alone becomes impossible.

\subsubsection{Necessity of part (i) of Assumption \ref{assumption: regression; parent; support}}
We construct two distinct causal models that induce the same joint Gaussian law. One of these models violates part (i) of Assumption \ref{assumption: regression; parent; support}. First, we consider a model $\{X, Y, \boldsymbol Z \}$ which we refer to as Model A. This model is such that $X \sim \mathcal{N}(0,1)$, $Y = X + \varepsilon$ with $ \varepsilon\sim\mathcal{N}(0,1)$ independent noise, and $\boldsymbol Z \sim \mathcal{N}(0,1)$. In this case, the observational covariance matrix between $X$, $Y$, and $\boldsymbol Z$ is 
\[\Sigma = \begin{bmatrix}1 & 1 & 0\\ 1 & 2 & 0 \\ 0 & 0 & 1\end{bmatrix}.
\]

Next we consider a second model $\{X', Y', \boldsymbol Z' \}$ which we refer to as Model B. This model is such that
$Y' \sim \mathcal{N}(0,2)$, $ X' = 0.5\,Y + \varepsilon'$, with $\varepsilon ' \sim\mathcal{N}(0,0.5)$ independent posterior noise, and $\boldsymbol Z'  \sim \mathcal{N}(0,1)$. Note that this model violates part (i) of Assumption \ref{assumption: regression; parent; support}. Then the induced covariance matrix is 
\[\Sigma' = \begin{bmatrix}1 & 1 & 0\\ 1 & 2 & 0 \\ 0 & 0 & 1\end{bmatrix}.
\]

Hence Models A and B are \emph{observationally indistinguishable}, in the sense that they induce the same joint Gaussian law. However, they imply different interventional behaviour:
in Model A, intervening on \(X\) changes \(Y\) (because \(Y\) depends on \(X\)), whereas in Model B, intervening on \(X\) does not affect \(Y\), since \(Y\) is exogenous.

\subsubsection{Necessity of part (ii) of Assumption \ref{assumption: regression; parent; support}}
We construct two distinct causal models that induce the same joint Gaussian law on the observed random variables. One of these models violates part (ii) of Assumption \ref{assumption: regression; parent; support}. First, we consider a model $\{X, Y, \boldsymbol Z \}$ which we refer to as Model A. This model violates part (ii) of Assumption \ref{assumption: regression; parent; support}, and it is defined as
\[
X = W + \varepsilon_X,\ \varepsilon_X\sim\mathcal{N}(0,1),\quad Y = W + \varepsilon_Y,\ \varepsilon_Y\sim\mathcal{N}(0,1), \quad \boldsymbol Z  \sim \mathcal{N}(0,1),
\]
with $W \sim \mathcal{N}(0,1)$ a fourth unobserved random variable. Then the observational covariance between $X$, $Y$, $\boldsymbol Z$ is 
\[
\Sigma = \begin{bmatrix}2 & 1 & 0\\ 1 & 2 & 0 \\ 0 & 0 & 1\end{bmatrix}.
\]

We consider a second model $\{X', Y', \boldsymbol Z' \}$ which we refer to as Model B. This model is such that 
$X'\sim\mathcal{N}(0,2)$, $Y' = 0.5 X + \varepsilon$ with $ \varepsilon\sim\mathcal{N}(0,1.5)$, and $\boldsymbol Z' \sim \mathcal{N}(0,1)$. Then
the observational covariance between $X'$, $Y'$, $\boldsymbol Z'$ is 
\[
\Sigma' = \begin{bmatrix}2 & 1 & 0\\ 1 & 2 & 0 \\ 0 & 0 & 1\end{bmatrix}.
\]

The observed random variables in Models A and B are observationally identical, but they differ under intervention. In fact, in Model C, \(Y\) does not respond to an intervention on \(X\). However, in Model D, intervening on \(X\) changes the distribution of \(Y\).

\section{Proofs}

In this section we present proof of the results stated in the main paper. External results and definitions needed for the proofs are given in Section~\ref{section: external results}, and the proofs are given in the subsequent section. 

\subsection{External results and definitions} \label{section: external results}

\begin{definition}[Graph powers]
For a graph $\mathcal{G}$ with vertex set $\mathcal{V}$ and edge set $\mathcal{E}$ the $k$-th power, written $\mathcal{G}^k$, is defined as the graph obtained by adding an edge between any two vertices in $\mathcal{E}$ which are connected by a walk of length at most $k$.
\label{definition: graph powers}
\end{definition}

\begin{theorem}[Vizing's theorem; \cite{vizing1964estimate}]
Every simple undirected graph may be edge coloured using a number of colours that is at most one larger than the maximum degree of the graph, where the maximum degree denotes the maximum number of neighbours of any vertex is the graph.
\end{theorem}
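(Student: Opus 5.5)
We would prove the statement by induction on the number of edges, following the classical fan-and-alternating-path argument. Let $\Delta$ denote the maximum degree and fix the palette $[\Delta+1]$. The empty graph is trivially coloured. For the inductive step, remove an edge $xy_0$ from $\mathcal{G}$. By induction the remaining graph has a proper edge colouring with $\Delta+1$ colours. We then modify this colouring so that $xy_0$ can also be coloured. The key counting fact, used throughout, is that every vertex has degree at most $\Delta$ while there are $\Delta+1$ colours. Hence every vertex misses at least one colour, and both $x$ and $y_0$ miss at least one colour in the partial colouring.

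Next we build a maximal \emph{fan} at $x$: a sequence of distinct neighbours $y_0, y_1, \dots, y_k$ of $x$ with the following property. For each $i \ge 1$, the edge $xy_i$ is coloured with a colour $c_{i-1}$ that is missing at $y_{i-1}$. Write $c_k$ for a colour missing at $y_k$. Rotating the fan (recolouring $xy_{i-1}$ with the colour of $xy_i$ for $i \le j$, and uncolouring $xy_j$) preserves properness. Such a rotation shifts the uncoloured edge along the fan. If $c_k$ is also missing at $x$, we rotate the whole fan and colour $xy_k$ with $c_k$, which finishes the step. Otherwise, by maximality of the fan, $c_k$ must equal $c_{j-1}$ for some $j<k$, i.e.\ $c_k$ already colours some fan edge $xy_j$.

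In the remaining case we pick a colour $a$ missing at $x$ and let $b = c_k$. We consider the maximal path $P$ starting at $y_k$ whose edges alternate between colours $a$ and $b$ (a Kempe chain). Swapping $a$ and $b$ on $P$ keeps the colouring proper, because each endpoint of $P$ misses one of the two colours. The central case analysis is on where $P$ ends. If $P$ does not reach $x$, the swap makes $a$ missing at both $x$ and $y_k$, so we rotate the full fan and colour $xy_k$ with $a$. If $P$ ends at $x$, we instead consider the analogous chain $P'$ from $y_{j-1}$, rotate only up to $y_{j-1}$, swap on $P'$, and colour $xy_{j-1}$ with $a$. This works because the two chains from $y_{j-1}$ and from $y_k$ cannot both end at $x$: $x$ has exactly one $b$-edge (namely $xy_j$) and no $a$-edge, which limits how many $a/b$ chains can terminate there.

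We expect this last step to be the main obstacle. It requires carefully tracking which colours are missing where after a partial rotation, so that the swapped chain does not destroy the property that the final fan edge can receive colour $a$. In particular we must check that $y_{j-1}$ and $y_k$ both miss $b$, and that the path structure through the edge $xy_j$ forces at most one of the two chains to hit $x$. The remaining steps are routine verifications that rotations and Kempe swaps preserve properness.
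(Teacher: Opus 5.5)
\paragraph{The gap.} The paper does not prove this statement; it only imports it as an external tool, so the benchmark is the classical fan/Kempe-chain proof, and that is the right route. Your setup, the rotation step, Case 1, and the branch where the $a/b$ chain $P$ from $y_k$ ends at $x$ are all sound.

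The gap is in the branch ``$P$ does not reach $x$''. You compute $P$ in the unrotated colouring, the one in which $xy_j$ is the $b$-edge at $x$. In that colouring $y_{j-1}$ also misses $b=c_{j-1}$, so it is an endpoint of its own $a/b$ component. Nothing prevents $P$ from running from $y_k$ to $y_{j-1}$ without touching $x$. In that case the swap turns the last edge of $P$ at $y_{j-1}$ from $a$ into $b$. Your subsequent full rotation then recolours $xy_{j-1}$ with $c(xy_j)=b$, so $y_{j-1}$ ends up with two $b$-edges.

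A concrete instance: take $\Delta=3$ with colours $\{a,b,c_1,d\}$. Let $xy_1$ be coloured $b$, $xy_2$ coloured $c_1$, and add a path $y_2\,w\,w'\,y_0$ coloured $a,b,a$. The fan $y_0,y_1,y_2$ with $c_2=b$ (so $j=1$) and $a$ missing at $x$ lands exactly in this branch, and your procedure outputs an improper colouring. The fact you quote, that the chains from $y_{j-1}$ and $y_k$ cannot both end at $x$, is the right one. But your case split does not use it correctly: the full rotation needs the swapped chain to avoid $y_{j-1}$, not merely $x$.

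\paragraph{The repair.} Split on the chain $P'$ from $y_{j-1}$ instead of the chain from $y_k$.
\begin{itemize}
\item If $P'$ avoids $x$: rotate up to $y_{j-1}$, which recolours no $a$- or $b$-edge. Then swap on $P'$ and colour $xy_{j-1}$ with $a$. This is your second strategy, and it also covers the bad subcase above, where $P'=P$.
\item If $P'$ ends at $x$: the $a/b$ component of $x$ has endpoints $x$ and $y_{j-1}$. So the chain from $y_k$ is disjoint from it and avoids both $x$ and $y_{j-1}$. Among $y_0,\dots,y_{k-1}$, the vertex $y_{j-1}$ is the only one whose required missing colour lies in $\{a,b\}$. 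Hence the swap leaves every fan condition intact, and the full rotation followed by colouring $xy_k$ with $a$ is proper.
\end{itemize}

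Equivalently, you can perform the full rotation first and take the $y_k$-chain in the rotated colouring, where the $b$-edge at $x$ is $xy_{j-1}$.
\begin{itemize}
\item If that chain misses $x$: swap it and colour $xy_k$ with $a$.
\item If it ends at $x$: it must enter through $y_{j-1}$. Then the chain from $y_{j-1}$ in the unrotated colouring ends at $y_k$ rather than at $x$, and the partial rotation works.
\end{itemize}
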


\begin{lemma}[Nearest-neighbour distance; \cite{roudaki2026kernel}]
Let $X_1, \ldots, X_n$ be distributed i.i.d. on a metric space $(\mathcal{X}, d_\mathcal{X})$, and let $d_j$ be the distance from $X_j$ to its $k$-th nearest neighbour among $\left\{X_{\ell}: \ell \neq j\right\}$, i.e. $d_j \coloneq \max _{k \in \mathcal{N}_j} d_\mathcal{X}(X_k, X_j)$ where $\mathcal{N}_j$ is the set of $k$-nearest neighbours of $X_j$ in $\{X_k\}_{k\neq j}$. Assume that 
\begin{enumerate}[label=(\roman*)]
    \item there exist $x^* \in \mathcal{X}$ and $\alpha, C_1, C_2>0$ such that for all $t \geq 0$ it holds that $\mathbb{P}\left(d_\mathcal{X}(X_1, x^*) \geq t\right) \leq C_1 e^{-C_2 t^\alpha}$, and 
    \item there exist constants $d>0$ and $c_0>0$ such that for every radius $T>0$, for all $x\in \mathcal X$ with $d_\mathcal{X}(x^*, x) \leq T$ and all $r>0$ it holds that $\mathbb{P}\left(d_\mathcal{X}(X_1, x) \leq r\right) \geq c_0 r^d$. 
\end{enumerate}
Let $\beta>0$. Then for all $n \geq 3$ and $1 \leq k \leq n / 2$ it holds that
\begin{equation*}
    \mathbb{E}[d_j^\beta] \lesssim \left(\frac{k}{n}\right)^{\beta / d} + \frac{(\log n)^{\beta / \alpha}}{n^2} . 
\end{equation*}
\label{lemma: knn expectation bound}
\end{lemma}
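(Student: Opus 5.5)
The approach is the usual layer-cake argument. I would condition on the location of the centre point $X_j$ and control the $k$-NN radius with a binomial tail bound. The heavy-tailed part is handled separately by truncating at the radius $s_n \asymp (\log n)^{1/\alpha}$, where the whole sample sits inside a ball around $x^*$ with high probability.

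\textbf{Step 1 (truncation).} Set $M_n \coloneq \max_{i \le n} d_\mathcal{X}(X_i, x^*)$. Since $d_j \le d_\mathcal{X}(X_j,x^*) + \max_{\ell \in \mathcal{N}_j} d_\mathcal{X}(X_\ell,x^*)$, we have $d_j \le 2M_n$. A union bound with assumption (i) gives
\begin{equation*}
\mathbb{P}(M_n > s) \le C_1 n e^{-C_2 s^\alpha}.
\end{equation*}
Choose $s_n = (3\log n / C_2)^{1/\alpha}$ and let $E_n = \{M_n \le s_n\}$. Then
\begin{equation*}
\mathbb{E}[d_j^\beta \mathbf{1}_{E_n^c}] \le 2^\beta \mathbb{E}[M_n^\beta \mathbf{1}_{\{M_n > s_n\}}] = 2^\beta\Big(s_n^\beta \mathbb{P}(M_n>s_n) + \int_{s_n}^\infty \beta s^{\beta-1}\mathbb{P}(M_n>s)\,\mathrm{d}s\Big).
\end{equation*}
The first term is $\lesssim (\log n)^{\beta/\alpha} n^{-2}$. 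For the second, substitute $u = s^\alpha$ and bound the resulting incomplete-gamma tail: it is of the same order $s_n^\beta n e^{-C_2 s_n^\alpha}$ up to constants. This yields the second term $(\log n)^{\beta/\alpha}/n^2$ of the claimed bound.

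\textbf{Step 2 (the good event).} On $E_n$ the point $X_j$ lies within distance $T = s_n$ of $x^*$. Assumption (ii) is stated for every radius $T$ with the same $c_0$, so it applies at $x = X_j$. Condition on $X_j = x$; the remaining $n-1$ points are i.i.d. and independent of $X_j$. The event $\{d_j > t\}$ means that fewer than $k$ of them fall in the closed ball $B(x,t)$, so
\begin{equation*}
\mathbb{P}(d_j > t \mid X_j = x) \le \mathbb{P}\big(\mathrm{Bin}(n-1, p_x(t)) < k\big), \qquad p_x(t) \ge \min\{c_0 t^d, 1\}.
\end{equation*}
Let $t_0 = (2k/(c_0(n-1)))^{1/d}$. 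For $t \ge t_0$ the binomial mean is at least $2k$, and the multiplicative Chernoff bound gives a tail at most $\exp(-(n-1)p_x(t)/8) \le \exp(-c\, n\, t^d)$ whenever $c_0t^d\le 1$. For $t \ge c_0^{-1/d}$ we have $p_x(t) = 1$, so the tail vanishes and $d_j \le c_0^{-1/d}$ deterministically. The condition $k \le n/2$ guarantees that $t_0$ is well defined and at most a constant.

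\textbf{Step 3 (integration).} Write
\begin{equation*}
\mathbb{E}[d_j^\beta \mathbf{1}_{E_n}] \le \int_0^\infty \beta t^{\beta-1}\,\mathbb{P}(d_j>t, E_n)\,\mathrm{d}t \le t_0^\beta + \int_{t_0}^{\infty} \beta t^{\beta-1} e^{-c n t^d}\,\mathrm{d}t.
\end{equation*}
The second integral is $\lesssim n^{-\beta/d} \le (k/n)^{\beta/d}$, and $t_0^\beta \asymp (k/n)^{\beta/d}$. Adding Steps 1 and 3 gives the claim.

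The main obstacle is making the conditioning in Step 2 rigorous. The event $E_n$ involves all points, including the neighbours, so I would not condition on $E_n$ directly. Instead I would bound $\mathbb{P}(d_j>t, E_n) \le \mathbb{E}\big[\mathbf{1}_{\{d_\mathcal{X}(X_j,x^*)\le s_n\}}\,\mathbb{P}(d_j>t\mid X_j)\big]$, which only requires the small-ball bound at points within $s_n$ of $x^*$. A second point needing care is that the constants in the Chernoff step and in assumption (ii) must be uniform in $T$. If (ii) were only available with $T$-dependent constants, I would fall back to a polylogarithmic loss absorbed into the first term.
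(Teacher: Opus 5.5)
The paper does not prove this lemma; it imports it from Roudaki et al., so there is no in-paper argument to compare against. On its own merits, your proof is correct for the statement as written. The truncation at $s_n\asymp(\log n)^{1/\alpha}$ gives exactly the $(\log n)^{\beta/\alpha}/n^2$ term. The conditional binomial/Chernoff bound beyond $t_0\asymp(k/n)^{1/d}$ gives $(k/n)^{\beta/d}$. Bounding $\mathbb{P}(d_j>t,E_n)$ by $\mathbb{E}[\mathbf{1}_{\{d_\mathcal{X}(X_j,x^*)\le s_n\}}\,\mathbb{P}(d_j>t\mid X_j)]$ is the right way to avoid conditioning on $E_n$.

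There are two caveats, and both concern the hypotheses rather than your mechanics.

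First, read literally, (ii) gives the same $c_0$ at every $x\in\mathcal{X}$ (take $T\ge d_\mathcal{X}(x^*,x)$). So you could drop the indicator in Step~3 and get $\mathbb{E}[d_j^\beta]\lesssim(k/n)^{\beta/d}$ with no truncation and no use of (i). Also, the clause ``for all $r>0$'', which you use to get $d_j\le c_0^{-1/d}$, holds for no probability measure, since $c_0r^d>1$ for large $r$. Under the natural repair ($r\le r_0$), a $T$-uniform $c_0$ together with (i) forces $R:=\sup_{x\in\mathcal{X}}d_\mathcal{X}(x,x^*)<\infty$. Indeed, a ball of radius $r_0$ around a far-away $x$ would need mass at least $c_0r_0^d$ but at most $C_1e^{-C_2(d_\mathcal{X}(x,x^*)-r_0)^\alpha}$. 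Then $d_j\le 2R$, and your Chernoff step with $p_x(t)\ge c_0r_0^d$ for $t>r_0$ finishes the argument; when $k\gtrsim n$ the trivial bound suffices.

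Second, your fallback claim, that $T$-dependent constants would only cost a polylogarithmic factor, is false. In fact no argument can rescue the stated rate in that setting. Take $\mathcal{X}=[0,\infty)$, $X_i\sim\mathrm{Exp}(1)$, $x^*=0$, $k=d=1$:
\begin{itemize}
\item (i) holds with $\alpha=1$.
\item (ii) holds on $[0,T]$ for $r\le 1$ with $c_0(T)=(1-e^{-1})e^{-T}$, which is polynomially small at $T\asymp\log n$.
\item The nearest-neighbour distance of the sample maximum is the top spacing, which is exactly $\mathrm{Exp}(1)$ by the R\'enyi representation.
\item By exchangeability, $\mathbb{E}[d_j^\beta]\ge\Gamma(\beta+1)/n$, which for $\beta>1$ dominates $n^{-\beta}+(\log n)^{\beta}/n^2$.
\end{itemize}
Uniformity of $c_0$ in $T$ is therefore essential, not a technicality. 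This matters here because Assumption~\ref{assumption: Weibull tails}(ii), under which the paper invokes the lemma, only supplies the small-ball bound on a ball of fixed radius $T$.
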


\begin{corollary}[Nearest-neighbour distance]
Let $X_1, \ldots, X_n$ be as in Lemma~\ref{lemma: knn expectation bound}. Let the $X$'s be distributed according to a measure $\mu$ on $\mathcal{X}$, and let $Y$ independent of the $X$'s and distributed according to a different measure $\nu$ such that $\int_A \mathrm{d} \mu \asymp \int_A \mathrm{d} \mu$ for any $A \in \sigma (\mathcal{X})$. Let $d_Y$ denote the $k$ nearest neighbours of $Y$ among the $X$'s. For $n \geq 3$ and $1 \leq k \leq n / 2$ it holds that 
\begin{equation*}
    \mathbb{E}[d_Y^\beta] \lesssim \left(\frac{k}{n}\right)^{\beta / d} + \frac{(\log n)^{\beta / \alpha}}{n^2} . 
\end{equation*}
for any $\beta > 0$. 
\label{corollary: Y knn expectation bound}
\end{corollary}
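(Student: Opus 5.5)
The statement is Corollary~\ref{corollary: Y knn expectation bound}: the point $Y\sim\nu$ is drawn independently of the sample, and the claim is that the $k$-NN distance bound of Lemma~\ref{lemma: knn expectation bound} still holds. I read the comparability condition as $\nu(A)\asymp\mu(A)$ for all Borel $A$, i.e.\ $c\,\mu\le\nu\le C\,\mu$, and $d_Y$ as the distance from $Y$ to its $k$-th nearest neighbour among $X_1,\dots,X_n$.

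The plan is a change of measure. Condition on $Y$ and write $\mathbb{E}[d_Y^\beta]=\int g(y)\,\mathrm{d}\nu(y)$, where $g(y)\coloneqq\mathbb{E}[d^\beta(y;X_1,\dots,X_n)]$. Here $d(y;X_{1:n})$ is the $k$-th nearest neighbour distance of the fixed point $y$ among the sample. The function $g$ is non-negative and measurable, so the density bound $\mathrm{d}\nu/\mathrm{d}\mu\le C$ gives
\begin{equation*}
\mathbb{E}[d_Y^\beta]\le C\int g\,\mathrm{d}\mu .
\end{equation*}
This reduces the problem to the case where the query point has law $\mu$, i.e.\ $Y$ is replaced by an independent copy $X_0\sim\mu$.

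Next I compare $\int g\,\mathrm{d}\mu=\mathbb{E}[d(X_0;X_1,\dots,X_n)^\beta]$ with the quantity bounded in Lemma~\ref{lemma: knn expectation bound}. Take the $n+1$ i.i.d.\ points $X_0,\dots,X_n$. Then $d(X_0;X_{1:n})$ is exactly $d_0$, the $k$-NN distance of $X_0$ among the others, in a sample of size $n+1$. By exchangeability $\mathbb{E}[d_0^\beta]=\mathbb{E}[d_j^\beta]$ for any $j$. Applying the lemma with $n+1$ in place of $n$ (valid, since $k\le n/2\le(n+1)/2$ and $n+1\ge3$) gives
\begin{equation*}
\mathbb{E}[d_Y^\beta]\lesssim\left(\tfrac{k}{n+1}\right)^{\beta/d}+\tfrac{(\log(n+1))^{\beta/\alpha}}{(n+1)^2},
\end{equation*}
which is at most a constant multiple of the claimed bound. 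The constant depends only on $C$ and the lemma's constants.

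The only delicate step is justifying the density bound. It is not needed in full strength: the one-sided inequality $\nu(A)\lesssim\mu(A)$ for all Borel $A$ already gives absolute continuity and $\mathrm{d}\nu/\mathrm{d}\mu\le C$ $\mu$-a.e.\ by Radon--Nikodym. One only has to check that $g$ is measurable, which follows from Fubini because $(y,x_{1:n})\mapsto d(y;x_{1:n})$ is measurable. If the statement's hypothesis, written with $\mu$ on both sides, is read literally, it is vacuous; I would note the intended reading $\nu\asymp\mu$ explicitly.
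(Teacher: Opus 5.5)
Your argument is correct and complete. The paper gives no proof of this corollary. It is listed among the external results right after Lemma~\ref{lemma: knn expectation bound} and is invoked in place of the lemma when bounding $|\tilde V_{1,2}-\check V_{1,2}|$ in the proof of Theorem~\ref{theorem: rate of convergence}, so there is no competing route to compare against.

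Your two steps supply exactly the missing justification:
\begin{enumerate}
\item The change of measure $\int g\,\mathrm{d}\nu\le C\int g\,\mathrm{d}\mu$ for the non-negative function $g(y)=\mathbb{E}[d(y;X_{1:n})^\beta]$.
\item The identification of $\int g\,\mathrm{d}\mu$ with the expected $k$-NN distance of one point in an i.i.d.\ sample of size $n+1$, so that the lemma applies as a black box at sample size $n+1$.
\end{enumerate}

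Two features of the argument are worth keeping. First, you use only the one-sided bound $\nu\lesssim\mu$; the lower half of the proportionality assumption is never needed. Second, you never re-run the lemma's tail and small-ball argument for a query point drawn from $\nu$. Passing back from $n+1$ to $n$ costs only a constant, since $\log(n+1)\le 2\log n$ for $n\ge 3$.

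Your reading of the garbled hypothesis as $\nu(A)\asymp\mu(A)$, and of $d_Y$ as the distance to the $k$-th nearest neighbour, is the one the paper actually needs (cf.\ the proportional-measures assumption).

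The only tacit requirement is that $\mathcal{X}$ be separable. This makes the continuous map $(y,x_{1:n})\mapsto d(y;x_{1:n})$ measurable for the product $\sigma$-algebra, so Fubini applies; it holds in the paper's Euclidean setting.
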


\begin{theorem}[Bounded differences inequality; \citealt{boucheron2003concentration}] \label{theorem: bounded difference}
Let $\mathcal{X}$ be a measurable space. A function $f: \mathcal{X}^n \to \mathbb{R}$ has the bounded difference property for some constants $c_1, \dots, c_n$ if, for each $i = 1, \dots, n$,
\begin{equation}
\sup_{\substack{x_1, \dots, x_n \\ x_i' \in \mathcal{X}}} \left | f \left ( x_1, \dots, x_{i-1}, x_i, x_{i+1}, x_n \right ) - f \left ( x_1, \dots, x_{i-1}, x_i', x_{i+1}, \dots x_n \right ) \right | \leq c_i. 
\label{equation: bounded difference property}
\end{equation}
Then, if $X_1, \dots, X_n$ is a sequence of independently distributed random variables and (\ref{equation: bounded difference property}) holds, putting $Z = f \left ( X_1, \dots, X_n \right )$ and $\nu = \frac{1}{4} \sum_{i=1}^n c_i^2$, for any $t > 0$, it holds that
\begin{equation*}
\mathbb{P} \left ( Z - \mathbb{E} \left ( Z \right ) > t \right ) \leq e^{-t^2 / \left ( 2 \nu \right )}.  
\end{equation*}
\end{theorem}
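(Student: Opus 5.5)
The plan is the standard martingale (Azuma--Hoeffding/McDiarmid) argument via the Doob martingale of $Z$. Put $\mathcal{F}_0$ for the trivial $\sigma$-algebra and $\mathcal{F}_i = \sigma(X_1,\dots,X_i)$, and define $V_i = \mathbb{E}(Z \mid \mathcal{F}_i) - \mathbb{E}(Z\mid \mathcal{F}_{i-1})$ for $i \in [n]$. Then $Z - \mathbb{E}(Z) = \sum_{i=1}^n V_i$. By Markov's inequality applied to $e^{\lambda(Z - \mathbb{E}(Z))}$, for every $\lambda>0$,
\begin{equation*}
\mathbb{P}\left(Z - \mathbb{E}(Z) > t\right) \leq e^{-\lambda t}\,\mathbb{E}\Big[\prod_{i=1}^n e^{\lambda V_i}\Big].
\end{equation*}
So the task reduces to bounding the moment generating function of the sum of martingale increments.

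The key step, which I expect to be the main obstacle, is to show that conditionally on $\mathcal{F}_{i-1}$ the increment $V_i$ takes values in an interval of length at most $c_i$, with endpoints that are $\mathcal{F}_{i-1}$-measurable. Here independence of the $X$'s is essential. It gives the representation $\mathbb{E}(Z\mid\mathcal{F}_i) = g_i(X_1,\dots,X_i)$ with
\begin{equation*}
g_i(x_1,\dots,x_i) = \int f(x_1,\dots,x_i,y_{i+1},\dots,y_n)\,\mathrm{d}\mathbb{P}_{X_{i+1}}(y_{i+1})\cdots \mathrm{d}\mathbb{P}_{X_n}(y_n).
\end{equation*}
Similarly, $\mathbb{E}(Z\mid\mathcal{F}_{i-1}) = \int g_i(X_1,\dots,X_{i-1},x)\,\mathrm{d}\mathbb{P}_{X_i}(x)$. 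Set
\begin{equation*}
L_i = \inf_x g_i(X_1,\dots,X_{i-1},x) - \mathbb{E}(Z\mid\mathcal{F}_{i-1}), \qquad U_i = \sup_x g_i(X_1,\dots,X_{i-1},x) - \mathbb{E}(Z\mid\mathcal{F}_{i-1}).
\end{equation*}
Then $L_i \le V_i \le U_i$. Moreover, integrating the pointwise bound \eqref{equation: bounded difference property} over the coordinates $i+1,\dots,n$ shows $U_i - L_i \le c_i$. Some care is needed with measurability of the $\sup$/$\inf$; I would handle this by working directly with the pair $(x,x')$ inside the integral, or by noting that only the existence of a bracketing interval of length $c_i$ is needed.

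Next I apply Hoeffding's lemma conditionally. Since $\mathbb{E}(V_i\mid\mathcal{F}_{i-1})=0$ and $V_i\in[L_i,U_i]$ with $U_i-L_i\le c_i$, we get $\mathbb{E}(e^{\lambda V_i}\mid\mathcal{F}_{i-1}) \le e^{\lambda^2 c_i^2/8}$. The lemma itself is proved via convexity of $x\mapsto e^{\lambda x}$ on the interval, followed by a second-order Taylor bound on the log-MGF of a two-point law, whose second derivative is at most $1/4$. Peeling off the conditional expectations from $i=n$ down to $i=1$ by the tower property gives
\begin{equation*}
\mathbb{E}\big[e^{\lambda(Z-\mathbb{E}Z)}\big] \le \exp\Big(\frac{\lambda^2}{8}\sum_{i=1}^n c_i^2\Big) = e^{\lambda^2\nu/2}, \qquad \nu=\tfrac14\sum_{i=1}^n c_i^2 .
\end{equation*}
Finally, $\mathbb{P}(Z-\mathbb{E}Z>t)\le \exp(-\lambda t+\lambda^2\nu/2)$. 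Choosing $\lambda = t/\nu$ yields the bound $e^{-t^2/(2\nu)}$, which is the statement.
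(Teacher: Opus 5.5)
Your proof is correct. It is McDiarmid's martingale argument, and the constants match the statement: Hoeffding's lemma on each increment gives $\exp(\lambda^2\sum_i c_i^2/8)=e^{\lambda^2\nu/2}$, and $\lambda=t/\nu$ yields $e^{-t^2/(2\nu)}$. Your measurability fix works as you suggest. For frozen $(x_1,\dots,x_{i-1})$, apply Hoeffding's lemma under $\mathbb{P}_{X_i}$ to $x\mapsto g_i(x_1,\dots,x_{i-1},x)-\int g_i(x_1,\dots,x_{i-1},x')\,\mathrm{d}\mathbb{P}_{X_i}(x')$. This needs only deterministic bracketing constants, and it gives the pointwise bound on $\mathbb{E}(e^{\lambda V_i}\mid\mathcal{F}_{i-1})$ that your tower argument uses.

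The paper gives no proof of its own; it imports the result from Boucheron, Lugosi and Massart, whose route is the entropy method. There one bounds $\mathrm{Ent}(e^{\lambda Z})$ by tensorization, i.e. sub-additivity of entropy over the independent coordinates. Each conditional entropy is then controlled using only the fact that $Z$, with every coordinate except $X_i$ frozen, ranges over an interval of length $c_i$. An entropy form of Hoeffding's lemma turns this into the bound $\tfrac{\lambda^2 c_i^2}{8}\mathbb{E}e^{\lambda Z}$. Integrating the resulting differential inequality (Herbst's argument) gives the same bound $\log\mathbb{E}e^{\lambda(Z-\mathbb{E}Z)}\le\lambda^2\nu/2$ that you obtain, after which the Chernoff step is identical.

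What each route buys:
\begin{itemize}
\item Your route is more elementary. Because it only uses bounded conditional ranges of martingale increments, it extends directly to Azuma-type bounds without product structure.
\item The entropy route is symmetric in the coordinates and reads the range condition straight off the hypothesis, without integrating out future coordinates.
\item Its main advantage is that the same tensorization gives bounds with random variance proxies such as $\sum_i (Z-Z_i')_+^2$, where $Z_i'$ recomputes $Z$ with $X_i$ replaced by an independent copy. This covers self-bounding functions, exponential Efron--Stein and Talagrand-type inequalities, where $\sum_i c_i^2$ would be far too crude.
\end{itemize}
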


\subsection{Proof of Theorem \ref{theorem: copula causality}}
\label{appeendix : copula causality}
\begin{proof}
We first show that the following equations hold.
\begin{equation}
\label{eq:2_new}
    \mathbb{P}(Y \mid do(X = x, \boldsymbol{Z} = \boldsymbol{z})) = \mathbb{P}(Y \mid X = x, \boldsymbol{Z}= \boldsymbol{z}) 
\end{equation}
\begin{equation}
\label{eq:3_new}
    \mathbb{P}(Y' \mid do(X' = x, \boldsymbol{Z}' = \boldsymbol{z})) = \mathbb{P}(Y' \mid X '= x, \boldsymbol{Z}'= \boldsymbol{z}) 
\end{equation}
We only show the proof for Eq. \eqref{eq:2_new} , since the proof for Eq. \eqref{eq:3_new} is analogous.
%
%
%
First, note that $ \mathsf{Pa}_Y \subseteq X \cup \boldsymbol{Z}$.
By Rule 2 of the do-calculus~\citep[page~85]{pearlj}, Eq. \eqref{eq:2_new} hold, because $Y$ becomes independent of $\{X\} \cup \boldsymbol{Z}$ once all arrows from $ \mathsf{Pa}_Y$ to $Y$ are removed from the graph of the DGP. Hence, the claim follows by the definition of conditional copulas.
%
\end{proof}

\subsection{Proof of proposition \ref{proposition: Q separation}}

\begin{proof}[Proof of Part (i)]
Given such an $A_{\boldsymbol{z}}$   any $\boldsymbol{z} \in A_{\boldsymbol{z}}$ it holds that 
\begin{equation}
    \left \| \mu_{K} \left ( \mathcal{C}_{X,Y \mid \boldsymbol{Z = z }} \right ) - \mu_{K} \left ( \mathcal{C}_{X', Y' \mid \boldsymbol{Z' = z}} \right ) \right \|_{\mathcal{H}_K}^2 = \operatorname{MMD}_K^2 \left [ \mathcal{C}_{X,Y \mid \boldsymbol{Z = z }}, \mathcal{C}_{X,Y \mid \boldsymbol{Z' = z }} \right ] > 0. 
    \label{equation: copula MMD is positive}
\end{equation}
The equality follows from Lemma 4 in \cite{gretton2012mmd}, and the inequality follows from the facts that (i) $\mathcal{C}_{X,Y \mid \boldsymbol{Z = z }}$ and $\mathcal{C}_{X,Y \mid \boldsymbol{Z = z' }}$ are probability measures on $[0,1]^2$, (ii) by Theorem~\ref{theorem: copula causality} the two measures are not identical, and (iii) when $K$ is a characteristic kernel $\operatorname{MMD}_K$  metrizes
the space of Borel probability measures on $[0,1]^2$ \citep{sriperumbudur2010hilbert}. Therefore we have that 
\begin{align*}
Q_{X,Y,Z} & = \int_{\mathbb{R}^d}  \operatorname{MMD}_K^2 \left [ \mathcal{C}_{X,Y \mid \boldsymbol{Z = z }}, \mathcal{C}_{X,Y \mid \boldsymbol{Z' = z }} \right ] \mathrm{d} \Lambda (\boldsymbol{z}) \\
& \geq \int_{A_{\boldsymbol{z}}}  \operatorname{MMD}_K^2 \left [ \mathcal{C}_{X,Y \mid \boldsymbol{Z = z }}, \mathcal{C}_{X,Y \mid \boldsymbol{Z' = z }} \right ] \mathrm{d} \Lambda (\boldsymbol{z}) > 0. 
\end{align*}
Where the weak inequality is due to the facts that (i) $\operatorname{MMD}_K^2 \geq 0$ and (ii) $A_{\boldsymbol{z}} \subseteq \mathbb{R}^d$, and the strict inequality is due to the fact that (i) by \eqref{equation: copula MMD is positive} the integrand is positive and (ii) $\Lambda (A_{\boldsymbol{z}}) > 0$ since $A_{\boldsymbol{z}}$ is $\mathbb{P}_{\boldsymbol{Z}}, \mathbb{P}_{\boldsymbol{Z}'}$-measurable and $\Lambda \gg \mathbb{P}_{\boldsymbol{Z}}, \mathbb{P}_{\boldsymbol{Z}'}$. 
\end{proof}

\begin{proof}[Proof of Part (ii)]
Since \eqref{equation: do equivalence} holds for all $\boldsymbol{z} \in \operatorname{supp} (\mathbb{P}_{\boldsymbol{Z}}), \operatorname{supp} (\mathbb{P}_{\boldsymbol{Z}'})$ by Theorem~\ref{theorem: copula causality} we must have that $\mathcal{C}_{X,Y \mid \boldsymbol{Z = z }}$ and $\mathcal{C}_{X,Y \mid \boldsymbol{Z' = z }}$ are identical for every such $\boldsymbol{z}$. Consequently 
\begin{equation}
 \operatorname{MMD}_K^2 \left [ \mathcal{C}_{X,Y \mid \boldsymbol{Z = z }}, \mathcal{C}_{X,Y \mid \boldsymbol{Z' = z }} \right ] = 0
\end{equation}
since $\operatorname{MMD}_K$ is a metric when $K$ is characteristic (see above). Then we immediately have that
\begin{equation*}
\int_{\mathbb{R}^d}  \operatorname{MMD}_K^2 \left [ \mathcal{C}_{X,Y \mid \boldsymbol{Z = z }}, \mathcal{C}_{X,Y \mid \boldsymbol{Z' = z }} \right ] \mathrm{d} \Lambda (\boldsymbol{z}) = 0, 
\end{equation*}
which establishes the desired result. 
\end{proof}

\subsection{Proof of Theorem \ref{theorem: rate of convergence}}

\begin{proof}
We begin by establishing the rate of convergence of the first double sum in $\hat{T}_1$, namely
\begin{equation*}
\hat{V}_{1,1} \coloneq \frac{1}{2\eta} \sum_{i_1 =1}^{\eta} \frac{2}{k_n(k_n-1)} \sum_{\substack{i_2, i_3 \in \mathcal{N} (\boldsymbol{Z}_{i_1}) \\ i_2 < i_3}} \hat{\mathbb{K}}_{X_{i_2}, Y_{i_2}, X_{i_3}, Y_{i_3}}^{\boldsymbol{Z}_{i_1}},
\end{equation*}
to the term 
\begin{equation*}
V^*_{1,1} \coloneq \frac{1}{2} \int \mathbb{E}_{U_1, U_2 \sim \mathcal{C}_{X,Y \mid \boldsymbol{Z} = \boldsymbol{z}}} \left [ K \left ( U_1, U_2 \right ) \right ] \mathrm{d} \mathbb{P}_{\boldsymbol{Z}} (\boldsymbol{z}) 
\end{equation*}
which corresponds to one of the two integrals comprising first term in the expansion of $\hat{Q}_{X,Y,Z}$ in equation \eqref{equation: Q expansion} when integrating against $\Lambda = \frac{1}{2} \mathbb{P}_Z + \frac{1}{2} \mathbb{P}_{Z'}$. Towards this end, for each $i \in [n]$ introduce the (random) distribution functions
\begin{align*}
& \tilde{F}_{X \mid \boldsymbol{Z = Z_i}} (\cdot) = \frac{1}{k_n} \sum_{l \in \mathcal{N}_{Z_i}^{Z_{1:\eta}}} F_{X \mid \boldsymbol{Z = Z_l}} (\cdot) \\
& \tilde{F}_{Y \mid \boldsymbol{Z = Z_i}} (\cdot) = \frac{1}{k_n} \sum_{l \in \mathcal{N}_{Z_i}^{Z_{1:\eta}}} F_{Y \mid \boldsymbol{Z = Z_l}} (\cdot).
\end{align*}
Consequently for each $i,j \in [n]$ define
\begin{align*}
& \tilde{U}_{X_j, Y_j}^{\boldsymbol{Z}_i} \coloneqq \left ( \tilde{F}_{X \mid \boldsymbol{Z = Z_i}} (X_j), \tilde{F}_{Y \mid \boldsymbol{Z = Z_i}} (Y_j) \right )^\top \\
& \check{U}_{X_i, Y_i}^{\boldsymbol{Z}_i} \coloneqq \left ( F_{X \mid \boldsymbol{Z = Z_i}} (X_j), F_{Y \mid \boldsymbol{Z = Z_i}} (Y_j) \right )^\top,  
\end{align*}
and moreover introduce the quantities
\begin{align*}
& \tilde{V}_{1,1} = \frac{1}{2\eta} \sum_{i_1 =1}^{\eta} \frac{2}{k_n(k_n-1)} \sum_{\substack{i_2, i_3 \in \mathcal{N} (\boldsymbol{Z}_{i_1}) \\ i_2 < i_3}} K \left ( \tilde{U}_{X_{i_2}, Y_{i_2}}^{\boldsymbol{Z}_{i_1}} , \tilde{U}_{X_{i_3}, Y_{i_3}}^{\boldsymbol{Z}_{i_1}} \right ) \\
& \hspace{5em} \coloneq \frac{1}{2\eta} \sum_{i_1 =1}^{\eta} \frac{2}{k_n(k_n-1)} \sum_{\substack{i_2, i_3 \in \mathcal{N}_{Z_{i_1}}^{Z_{1:\eta}} \\ i_2 < i_3}} \tilde{\mathbb{K}}_{X_{i_2}, Y_{i_2}, X_{i_3}, Y_{i_3}}^{\boldsymbol{Z}_{i_1}}, \\
& \check{V}_{1,1} = \frac{1}{2\eta} \sum_{i_1 =1}^{\eta} \frac{2}{k_n(k_n-1)} \sum_{\substack{i_2, i_3 \in \mathcal{N} (\boldsymbol{Z}_{i_1}) \\ i_2 < i_3}} K \left ( \check{U}_{X_{i_2}, Y_{i_2}}^{\boldsymbol{Z}_{i_1}} , \check{U}_{X_{i_3}, Y_{i_3}}^{\boldsymbol{Z}_{i_1}} \right ) \\
& \hspace{5em} \coloneq \frac{1}{2\eta} \sum_{i_1 =1}^{\eta} \frac{2}{k_n(k_n-1)} \sum_{\substack{i_2, i_3 \in \mathcal{N}(\boldsymbol{Z}_{i_1}) \\ i_2 < i_3}} \check{\mathbb{K}}_{X_{i_2}, Y_{i_2}, X_{i_3}, Y_{i_3}}^{\boldsymbol{Z}_{i_1}}. 
\end{align*}
By the triangle inequality we have that
\begin{align}
\left | \hat{V}_{1,1} - V^*_{1,1} \right | \leq \left | \hat{V}_{1,1} - \tilde{V}_{1,1} \right | + \left | \tilde{V}_{1,1} - \check{V}_{1,1} \right | + \left | \check{V}_{1,1} - V^*_{1,1}\right |, 
\label{equation: Q decomposition}
\end{align}
and we stochastically bound each of the terms in \eqref{equation: Q decomposition} in turn. For the sample $\{ \boldsymbol{Z}_1, \dots, \boldsymbol{Z}_\eta \}$ let $\mathcal{F}_Z$ be the corresponding sigma field, let $\mathcal{G}_n$ be the $k_n$-nearest neighbour graph based on the sample, and let $\mathcal{G}_n^2$ be its second power -- see Definition~\ref{definition: graph powers}. Denote it's edge and vertex set by $E(\mathcal{G}_n^2)$ and $V(\mathcal{G}^2)$ respectively. Let $\{ c_1, \dots, c_{\chi (\mathcal{G}_n^2)} \}$ be a valid colouring of the edges, where $\chi (\mathcal{G}_n^2)$ denotes the edge chromatic number, and for each $l \in [\chi(\mathcal{G}_n^2)]$ let $\mathcal{C}_l$ denote the set of edges coloured with $c_l$. In the sequel we will write $a_n \lesssim b_n$ for two sequences $\{ a_n \}_{n > 0}$ and $\{ b_n \}_{n > 0}$ if there exist an absolute $C > 0$ such that $|a_n| \leq C |b_n|$ for all $n > 0$. For the first term in \eqref{equation: Q decomposition} therefore have that
\begin{align}
& \mathbb{E} \left [ \left | \hat{V}_{1,1} - \tilde{V}_{1,1} \right | \right ] \nonumber \\
&  \lesssim \sum_{l \leq 1 \leq \chi (\mathcal{G}_n^2)} \mathbb{E} \left [ \frac{1}{\eta k_n^2} \left | \sum_{\{i_1, i_2, i_3 \} \in \mathcal{C}_l }  \hat{\mathbb{K}}_{X_{i_2}, Y_{i_2}, X_{i_3}, Y_{i_3}}^{\boldsymbol{Z}_{i_1}} -  \tilde{\mathbb{K}}_{X_{i_2}, Y_{i_2}, X_{i_3}, Y_{i_3}}^{\boldsymbol{Z}_{i_1}} \right | \right ] \nonumber \\
& = \sum_{l \leq 1 \leq \chi (\mathcal{G}_n^2)} \mathbb{E} \left [ \int_0^\infty \mathbb{P} \left ( \frac{1}{\eta k_n^2} \left | \sum_{\{i_1, i_2, i_3 \} \in \mathcal{C}_l } \hat{\mathbb{K}}_{X_{i_2}, Y_{i_2}, X_{i_3}, Y_{i_3}}^{\boldsymbol{Z}_{i_1}} -  \tilde{\mathbb{K}}_{X_{i_2}, Y_{i_2}, X_{i_3}, Y_{i_3}}^{\boldsymbol{Z}_{i_1}} \right | > t \mid \mathcal{F}_Z \right ) \mathrm{d}t  \right ].
\label{equation: Q conditioning bound}
\end{align}
For each $i_1, i_2, i_3 \in [n]$ let $\bar{U}_{X_{i_2} Y_{i_2}}^{\boldsymbol{Z}_{i_1}}$ and $\bar{U}_{X_{i_3}, Y_{i_3}}^{\boldsymbol{Z}_{i_1}}$ lie respectively between $\hat{U}_{X_{i_2}, Y_{i_2}}^{\boldsymbol{Z}_{i_1}}$ and $\tilde{U}_{X_{i_2}, Y_{i_2}}^{\boldsymbol{Z}_{i_1}}$ and between $\hat{U}_{X_{i_3}, Y_{i_3}}^{\boldsymbol{Z}_{i_1}}$ and $\tilde{U}_{X_{i_3}, Y_{i_3}}^{\boldsymbol{Z}_{i_1}}$. By a first order expansion around $ ( \tilde{U}_{X_{i_2}, Y_{i_2}}^{\boldsymbol{Z}_{i_1}} , \tilde{U}_{X_{i_3}, Y_{i_3}}^{\boldsymbol{Z}_{i_1}} )^\top$ we therefore have that 
\begin{align}
& \hat{\mathbb{K}}_{X_{i_2}, Y_{i_2}, X_{i_3}, Y_{i_3}}^{\boldsymbol{Z}_{i_1}} -  \tilde{\mathbb{K}}_{X_{i_2}, Y_{i_2}, X_{i_3}, Y_{i_3}}^{\boldsymbol{Z}_{i_1}} \nonumber \\
& \quad = \nabla K \left ( \bar{U}_{X_{i_2}, Y_{i_2}}^{\boldsymbol{Z}_{i_1}} , \bar{U}_{X_{i_3}, Y_{i_3}}^{\boldsymbol{Z}_{i_1}} \right )^\top \left ( \left ( \hat{U}_{X_{i_2}, Y_{i_2}}^{\boldsymbol{Z}_{i_1}} - \tilde{U}_{X_{i_2}, Y_{i_2}}^{\boldsymbol{Z}_{i_1}} \right ) ^ \top, \left ( \hat{U}_{X_{i_3}, Y_{i_3}}^{\boldsymbol{Z}_{i_1}} - \tilde{U}_{X_{i_3}, Y_{i_3}}^{\boldsymbol{Z}_{i_1}} \right )^\top \right )^\top \nonumber \\
& \quad \coloneqq \xi^X_{{i_1},{i_2}} + \xi_{i_1, i_2}^Y + \xi^X_{{i_1},{i_3}} + \xi_{i_1, i_23}^Y \label{equation: xi decomposition}
\end{align}
where, with $[\boldsymbol{v}]_k$ denoting $k$-th entry of a vector $\boldsymbol{v}$ we have put
\begin{equation*}
    \xi^X_{{i_1},{i_2}} = \left [ \nabla K \left ( \bar{U}_{X_{i_2}, Y_{i_2}}^{\boldsymbol{Z}_{i_1}} , \bar{U}_{X_{i_3}, Y_{i_3}}^{\boldsymbol{Z}_{i_1}} \right ) \right ]_1 \times \left (  \frac{1}{k_n} \sum_{l \in \mathcal{N}(\boldsymbol{Z}_{i_1})} \mathbf{1}_{\left \{ X_l \leq X_{i_2} \right \}} - \frac{1}{k_n} \sum_{l \in \mathcal{N}(\boldsymbol{Z}_{i_1})} F_{X \mid \boldsymbol{Z = Z_l}} (X_{i_2})\right ), 
\end{equation*}
with the remaining $\xi$'s being defined analogously. Introduce the (random) quantities
\begin{equation*}
    L_{X,2}^{(l)} = \frac{1}{\eta k_n^2} \left | \sum_{\{i_1, i_2, i_3 \} \in \mathcal{C}_l } \mathbb{E} \left [ \xi_{i_1, i_2}^{X} \mid \mathcal{F}_Z \right ] \right | \quad \text{for} \quad l \in [\chi (\mathcal{G}_n^2)],
\end{equation*}
and let $L_{Y,2}^{(l)}, L_{X,3}^{(l)}, L_{Y,3}^{(l)}$ be defined analogously. Making use of \eqref{equation: xi decomposition} we obtain that
\begin{align}
& \int_0^\infty \mathbb{P} \left ( \frac{1}{\eta k_n^2} \left | \sum_{\{i_1, i_2, i_3 \} \in \mathcal{C}_l }  \hat{\mathbb{K}}_{X_{i_2}, Y_{i_2}, X_{i_3}, Y_{i_3}}^{\boldsymbol{Z}_{i_1}} -  \tilde{\mathbb{K}}_{X_{i_2}, Y_{i_2}, X_{i_3}, Y_{i_3}}^{\boldsymbol{Z}_{i_1}} \right | > t \mid \mathcal{F}_Z \right ) \mathrm{d}t \label{equation: Q integral bound} \\
& \quad \leq  4 L_{X,2}^{(l)} +  \int_{4 L_{X,2}^{(l)}}^\infty \mathbb{P} \left ( \frac{1}{\eta k_n^2} \left | \sum_{\{i_1, i_2, i_3 \} \in \mathcal{C}_l } \xi_{i_1, i_2}^{X} - \mathbb{E} \left [ \xi_{i_1, i_2}^{X} \mid \mathcal{F}_Z \right ] \right | > \frac{t}{4} - L_{X,2}^{(l)} \mid \mathcal{F}_Z \right ) \mathrm{d}t \nonumber \\
& \quad \quad + 4 L_{Y,2}^{(l)} +  \int_{4 L_{Y,2}^{(l)}}^\infty \mathbb{P} \left ( \frac{1}{\eta k_n^2} \left | \sum_{\{i_1, i_2, i_3 \} \in \mathcal{C}_l } \xi_{i_1, i_2}^{Y} - \mathbb{E} \left [ \xi_{i_1, i_2}^{Y} \mid \mathcal{F}_Z \right ] \right | > \frac{t}{4} - L_{Y,2}^{(l)} \mid \mathcal{F}_Z \right ) \mathrm{d}t \nonumber \\
& \quad \quad +  4 L_{X,3}^{(l)} +  \int_{4 L_{X,3}^{(l)}}^\infty \mathbb{P} \left ( \frac{1}{\eta k_n^2} \left | \sum_{\{i_1, i_2, i_3 \} \in \mathcal{C}_l } \xi_{i_1, i_3}^{X} - \mathbb{E} \left [ \xi_{i_1, i_3}^{X} \mid \mathcal{F}_Z \right ] \right | > \frac{t}{4} - L_{X,3}^{(l)} \mid \mathcal{F}_Z \right ) \mathrm{d}t \nonumber \\
& \quad \quad + 4 L_{Y,3}^{(l)} +  \int_{4 L_{Y,3}^{(l)}}^\infty \mathbb{P} \left ( \frac{1}{\eta k_n^2} \left | \sum_{\{i_1, i_2, i_3 \} \in \mathcal{C}_l } \xi_{i_1, i_3}^{Y} - \mathbb{E} \left [ \xi_{i_1, i_3}^{Y} \mid \mathcal{F}_Z \right ] \right | > \frac{t}{4} - L_{Y,3}^{(l)} \mid \mathcal{F}_Z \right ) \mathrm{d}t. \nonumber
\end{align}
For any $l \in [\chi (\mathcal{G}_n^2)]$ the $\hat{U}$'s and $\tilde{U}$'s with $\{i_1, i_2, i_3\}$'s indexed on the same $\mathcal{C}_l$ are conditionally independent given $\mathcal{F}_Z$. Consequently, the same holds for the $\xi$'s. Since the kernel $K$ is assumed to be bounded the $\xi$'s are likewise bounded, and consequently by Hoeffding's inequality we obtain that 
\begin{align*}
& \int_{4 L_{X,2}^{(l)}}^\infty \mathbb{P} \left ( \frac{1}{\eta k_n^2} \left | \sum_{\{i_1, i_2, i_3 \} \in \mathcal{C}_l } \xi_{i_1, i_2}^{X} - \mathbb{E} \left [ \xi_{i_1, i_2}^{X} \mid \mathcal{F}_Z \right ] \right | > \frac{t}{4} - L_{X,2}^{(l)} \mid \mathcal{F}_Z \right ) \mathrm{d}t \\
& \quad \leq \int_0^\infty \exp \left ( - C_1 \frac{t^2 \eta^2 k_n^4}{\left | \mathcal{C}_l \right |}\right ) \mathrm{d} t \lesssim \frac{\sqrt{\left | \mathcal{C}_l \right |}}{\eta k_n^2}
\end{align*}
where $C_1$ is an absolute constant depending on $\sup_{u,u'} |K(u,u')|$. Due to the $\xi$'s being bounded we further have that
\begin{equation}
    L_{X,2}^{(l)} \lesssim \frac{\left | \mathcal{C}_l \right |}{\eta k_n^2}
    \label{equation: L bound}
\end{equation}
independently of $\mathcal{F}_Z$. The same bounds hold for the remaining terms in \eqref{equation: Q integral bound}. By Vizing's theorem we have that $\chi(\mathcal{G}_n^2)$ is bounded from above by one plus the maximum degree of $\mathcal{G}_n^2$, which must be no larger than $k_n^2$. Moreover each $\mathcal{C}_l$ must have cardinality no larger than $\mathcal{O}(\eta / k_n^2)$. Combining these facts, we obtain that
\begin{equation}
    \mathbb{E} \left [ \left | \hat{V}_{1,1} - \tilde{V}_{1,1} \right | \right ] \lesssim \sum_{1 \leq l \leq [\chi(\mathcal{G}_n^2)]} \left \{ \frac{\sqrt{\left | \mathcal{C}_l \right |}}{\eta k_n^2} + \frac{\left | \mathcal{C}_l \right |}{\eta k_n^2} \right \} \leq \frac{k_n}{\sqrt{\eta}} + \frac{1}{k_n^4} \lesssim \frac{k_n}{\sqrt{n}} + \frac{1}{k_n^4},
    \label{equation: V hat V tilde bound}
\end{equation}
where the final inequality holds because it is assumed that $\eta \asymp n$ in the statement of Theorem~\ref{theorem: rate of convergence}. For the second term in \eqref{equation: Q conditioning bound} by arguments analogous to \eqref{equation: xi decomposition} we have that for any $\{ i_1, i_2, i_3 \}$ up to some absolute constant 
\begin{align}
& \left | \tilde{\mathbb{K}}_{X_{i_2}, Y_{i_2}, X_{i_3}, Y_{i_3}}^{\boldsymbol{Z}_{i_1}} - \check{\mathbb{K}}_{X_{i_2}, Y_{i_2}, X_{i_3}, Y_{i_3}}^{\boldsymbol{Z}_{i_1}}  \right | \label{equation: K triangle bound} \\
& \quad \lesssim \left | \frac{1}{k_n} \sum_{l \in \mathcal{N}(\boldsymbol{Z}_{i_1})} F_{X \mid \boldsymbol{Z = Z_l}} \left ( X_{i_2} \right ) - F_{X \mid \boldsymbol{Z = Z_{i_1}}} \left ( X_{i_2} \right )  \right | + \left | \frac{1}{k_n} \sum_{l \in \mathcal{N}(\boldsymbol{Z}_{i_1})} F_{Y \mid \boldsymbol{Z = Z_l}} \left ( Y_{i_2} \right ) - F_{Y \mid \boldsymbol{Z = Z_{i_1}}} \left ( Y_{i_2} \right )  \right | \nonumber \\
& \quad + \left | \frac{1}{k_n} \sum_{l \in \mathcal{N}(\boldsymbol{Z}_{i_1})} F_{X \mid \boldsymbol{Z = Z_l}} \left ( X_{i_3} \right ) - F_{X \mid \boldsymbol{Z = Z_{i_1}}} \left ( X_{i_3} \right )  \right | + \left | \frac{1}{k_n} \sum_{l \in \mathcal{N}(\boldsymbol{Z}_{i_1})} F_{Y \mid \boldsymbol{Z = Z_l}} \left ( Y_{i_3} \right ) - F_{Y \mid \boldsymbol{Z = Z_{i_1}}} \left ( Y_{i_3} \right )  \right |. \nonumber
\end{align}
Then, by Assumption~\ref{assumption: Lipschitz marginals} we have that
\begin{align}
& \mathbb{E} \left [ \left | \frac{1}{k_n} \sum_{l \in \mathcal{N} (\boldsymbol{Z}_{i_1})} F_{X \mid \boldsymbol{Z = Z_l}} \left ( X_{i_2} \right ) - F_{X \mid \boldsymbol{Z = Z_{i_1}}} \left ( X_{i_2} \right )  \right | \right ] \nonumber \\
& \quad \leq \frac{1}{k_n} \sum_{l \in \mathcal{N}(\boldsymbol{Z}_{i_1})} \mathbb{E} \left [ \sup_{x \in \mathbb{R}} \left | F_{X \mid \boldsymbol{Z = Z_l}} \left ( x \right ) - F_{X \mid \boldsymbol{Z = Z_{i_1}}} \left ( x \right ) \right | \right ] \lesssim \frac{1}{k_n} \sum_{l \in \mathcal{N} (\boldsymbol{Z}_{i_1})} \mathbb{E} \left [ \left \| \boldsymbol{Z}_{i_1} - \boldsymbol{Z}_l \right \|_2^\beta \right ]. 
\label{equation: Z distance  bound}
\end{align}
Since Assumption~\ref{assumption: Weibull tails} matches the conditions needed to invoke Lemma~\ref{lemma: knn expectation bound} we obtain 
\begin{align*}
\frac{1}{k_n} \sum_{l \in \mathcal{N}} \mathbb{E} \left [ \left \| \boldsymbol{Z}_{i_1} - \boldsymbol{Z}_l \right \|_2^\beta \right ]  \lesssim \left(\frac{k_n}{\eta}\right)^{\beta / d_0} + \frac{(\log \eta)^{\beta / \alpha}}{\eta^2} \lesssim \left(\frac{k_n}{n}\right)^{\beta / d_0} + \frac{(\log n)^{\beta / \alpha}}{n^2}, 
\end{align*}
and by the triangle inequality we obtain that that 
\begin{equation}
    \mathbb{E}[|\tilde{V}_{1,1} - \check{V}_{1,1}|] \lesssim \left(\frac{k_n}{n}\right)^{\beta / d_0} + \frac{(\log n)^{\beta / \alpha}}{n^2}. 
    \label{equation: V tilde V check bound}
\end{equation}
For the final term in \eqref{equation: K triangle bound} putting 
\begin{equation*}
    L_K^{(l)} = \left | \frac{1}{\eta k_n^2} \sum_{\{i_1, i_2, i_3 \} \in \mathcal{C}_l} \mathbb{E} \left [ \check{\mathbb{K}}_{X_{i_2}, Y_{i_2}, X_{i_3}, Y_{i_3}}^{\boldsymbol{Z}_{i_1}} \mid \mathcal{F}_Z \right ] - V_{1,1}^* \right |
\end{equation*}
we have that
\begin{align*}
& \mathbb{E} \left [ \left | \check{V}_{1,1} - V^*_{1,1}\right | \right ] \\
&\quad \lesssim \sum_{l \leq 1 \leq \chi (\mathcal{G}_n^2)} \mathbb{E} \left [ \left | \frac{1}{\eta k_n^2} \sum_{\{i_1, i_2, i_3 \} \in \mathcal{C}_l} \check{\mathbb{K}}_{X_{i_2}, Y_{i_2}, X_{i_3}, Y_{i_3}}^{\boldsymbol{Z}_{i_1}} - V_{1,1}^* \right | \right ] \\ 
& \quad = \sum_{l \leq 1 \leq \chi (\mathcal{G}_n^2)} \mathbb{E} \left [ \int_0^\infty \mathbb{P} \left ( \left | \frac{1}{\eta k_n^2} \sum_{\{i_1, i_2, i_3 \}} \check{\mathbb{K}}_{X_{i_2}, Y_{i_2}, X_{i_3}, Y_{i_3}}^{\boldsymbol{Z}_{i_1}} - \mathbb{E} \left [ \check{\mathbb{K}}_{X_{i_2}, Y_{i_2}, X_{i_3}, Y_{i_3}}^{\boldsymbol{Z}_{i_1}} \mid \mathcal{F}_Z \right ] \right | > t - L_K^{(l)} \mid \mathcal{F}_Z \right ) \mathrm{d}t \right ] \\
& \quad \leq \sum_{l \leq 1 \leq \chi (\mathcal{G}_n^2)} \Big \{  \mathbb{E} \left [ L_K^{(l)} \right ] + \leq \int_0^\infty \exp \left ( - C_2 \frac{t^2 \eta^2 k_n^4}{\left | \mathcal{C}_l \right |}\right ) \mathrm{d} t \Big \},
\end{align*}
where the final line holds, for some absolute $C_2 > 0$ since $\check{\mathbb{K}}$'s with $\{i_1, i_2, i_3\}$'s indexed on the same $\mathcal{C}_l$ are conditionally independent given $\mathcal{F}_Z$ and since $K$ is bounded the $\check{\mathbb{K}}$'s are likewise bounded hence one may apply Hoeffding's inequality. A bound of the same order as \eqref{equation: L bound} holds for the $L_K$. Hence, arguing as in \eqref{equation: V hat V tilde bound} we finally obtain that 
\begin{equation}
    \mathbb{E} \left [ \left | \check{V}_{1,1} - V^*_{1,1}\right | \right ] \lesssim \frac{k_n}{\sqrt{n}} + \frac{1}{k_n^4}.
    \label{equation: V check V* bound}
\end{equation}
Therefore combining \eqref{equation: V hat V tilde bound}, \eqref{equation: V tilde V check bound}, and \eqref{equation: V check V* bound} Markov's inequality readily yields that $|\hat{V}_{1,1} - V^*_{1,1}| = \mathcal{O}_\mathbb{P} ( k_n n^{-1/2} +  k_n^{-4} + \left(k_n / n\right)^{\beta / d_0} + (\log n)^{\beta / \alpha} n^{-2} ) $. Next we establish the rate of convergence of the second double sum in $\hat{T}_1$, namely
\begin{equation*}
\hat{V}_{1,2} \coloneq \frac{1}{2(n-\eta)} \sum_{i_1 =1}^{n-\eta} \frac{2}{k_n(k_n-1)} \sum_{\substack{i_2, i_3 \in \mathcal{N} (\boldsymbol{Z}_{i_1}') \\ i_2 < i_3}} \hat{\mathbb{K}}_{X_{i_2}, Y_{i_2}, X_{i_3}, Y_{i_3}}^{\boldsymbol{Z}_{i_1}'},
\end{equation*}
to the term 
\begin{equation*}
V^*_{1,2} \coloneq \frac{1}{2} \int \mathbb{E}_{U_1, U_2 \sim \mathcal{C}_{X,Y \mid \boldsymbol{Z} = \boldsymbol{z}}} \left [ K \left ( U_1, U_2 \right ) \right ] \mathrm{d} \mathbb{P}_{\boldsymbol{Z}'} (\boldsymbol{z}). 
\end{equation*}
Let $\tilde{V}_{1,2}$ and $\check{V}_{1,2}$ be defined analogously to $\check{V}_{1,1}$ and $\tilde{V}_{1,1}$. We once again have that $|\hat{V}_{1,2} - V^*_{1,2}| \leq |\hat{V}_{1,2} - \tilde{V}_{1,2}| + |\tilde{V} - \check{V}_{1,2}| + |\check{V}_{1,2} - V^*_{1,2}|$. The first and third terms may be handled identically to \eqref{equation: V hat V tilde bound} and \eqref{equation: V check V* bound}. The second term can be stochastically bounded using Corollary~\ref{corollary: Y knn expectation bound} in place of Lemma~\ref{lemma: knn expectation bound}. Therefore $|\hat{V}_{1,2} - V^*_{1,2}| = \mathcal{O}_\mathbb{P} ( k_n n^{-1/2} +  k_n^{-4} + \left(k_n / n\right)^{\beta / d_0} + (\log n)^{\beta / \alpha} n^{-2} ) $, and the same stochastic bound holds for the difference between $\hat{T}_1$ and the first term in \eqref{equation: Q expansion}. Identical arguments yield the same rate of convergence for $\hat{T}_2$ and $\hat{T}_3$ to the second and third terms in \eqref{equation: Q expansion}. This completes the proof. 
\end{proof}

\section{Data generating mechanisms in numerical experiments} \label{section: data generating mechanisms}

\paragraph{Common setup and notation (defaults).}
We consider a single changepoint at location $\tau$ with total length $n=\eta+n - \eta$.
Segment~1 corresponds to $t\le \tau$ and segment~2 corresponds to $t>\tau$.
Throughout, we denote the pre- and post-change distributions by $P_{\mathrm{pre}}$ and $P_{\mathrm{post}}$, and we label a scenario as a
\emph{mechanism change} when $P_{\mathrm{pre}}(Y\mid X,Z)\neq P_{\mathrm{post}}(Y\mid X,Z)$ (and as a \emph{null} otherwise). Unless otherwise stated with mutually independent noise terms and independence of noise from covariates:
\[
\varepsilon_t^x \perp (Z_t,\varepsilon_t^y),\qquad \varepsilon_t^y \perp (X_t,Z_t),\qquad
\{(\varepsilon_t^x,\varepsilon_t^y)\}_{t=1}^n\ \text{i.i.d.}
\]
Base marginals are standard Gaussian unless specified:
\[
Z_t\sim\mathcal N(0,1),\qquad \varepsilon_t^x\sim\mathcal N(0,\sigma_x^2),\qquad \varepsilon_t^y\sim\mathcal N(0,\sigma_y^2).
\]
When a scenario uses a driver generated from $Z$, we take the default linear confounding structure
\[
X_t = Z_t + \varepsilon_t^x,
\]
and when not otherwise specified we use the default linear response mean
\[
Y_t = 0.5\,Z_t + 0.6\,X_t + \varepsilon_t^y.
\]
Unless stated otherwise, coefficients and nonlinearities appear exactly as written in the scenario definitions. For mixture/tail scenarios that require moment control , we follow the stated procedure of
\emph{global} standardization on the post segment to enforce mean $0$ and standard deviation $\sigma_y$.
All expectations used for normalization are approximated by the corresponding empirical mean over the segment on which the
normalization is defined (matching the implementation).
In the i.i.d.\ synthetic suite, we set the driver noise and observation noise to
$\sigma_x \coloneqq 0.10$ and $\sigma_y \coloneqq 0.10$.

\subsection{Positive Controls (True Changes)} \label{section: positive controls}
\subsubsection{PMB — Positive Markov blanket and edge structure}
\paragraph{\texttt{PMB01\_EDGE\_ON\_LINEAR}} 
\textbf{Summary:} Edge appears: pre $Y\leftarrow Z$, post $Y\leftarrow X+Z$.
\textbf{Expected:} Change.
\textbf{Why:} This is the canonical Markov-blanket expansion scenario: $Y \indep X \mid Z$ holds in segment~1,
but the appearance of a direct $X\to Y$ effect in segment~2 breaks this conditional independence.
It serves as the simplest sanity-check that a method can detect a clear, linear mechanism change.
\begin{equation}
Y_t=
\begin{cases}
Z_t + \varepsilon^y_t, & t\le \tau,\\
X_t + Z_t + \varepsilon^y_t, & t>\tau,
\end{cases}
\qquad \varepsilon^y_t\sim\mathcal{N}(0,\sigma_y^2).
\end{equation}

\paragraph{\texttt{PMB02\_EDGE\_ON\_NONLIN}}
\textbf{Summary: } Edge appears with nonlinear baseline: pre $Y=f_Z(Z)$, post adds $\alpha f_X(X)$.
\textbf{Expected:} Change.
\textbf{Why:} This tests whether methods can detect a mechanism change in the conditional mean under nonlinear confounding
(from $Z$) and a nonlinear causal effect (from $X$). It is designed to be challenging for purely linear residualization,
since the post-change dependence enters through a nonlinear link.
Define
\begin{equation}
g(Z_t)=\sin(1.0\cdot Z_t),\qquad f(X_t)=0.6\,\tanh(1.0\cdot X_t).
\end{equation}
Then
\begin{equation}
Y_t=
\begin{cases}
\sin(Z_t)+\varepsilon^y_t, & t\le \tau,\\
\sin(Z_t)+0.6\,\tanh(X_t)+\varepsilon^y_t, & t>\tau,
\end{cases}
\qquad \varepsilon^y_t\sim\mathcal{N}(0,\sigma_y^2).
\end{equation}

\paragraph{\texttt{PMB03\_EDGE\_ON\_NEWDRV}}
\textbf{Summary: } Multi-$X$: new driver becomes causal post-cp (new parent in $X$-set).
\textbf{Expected:} Change.
\textbf{Why:} A new causal parent emerges from a set of correlated candidates, so detection requires distinguishing
genuine mechanism changes from correlation structure in the covariates.
This is especially informative for localization: ideally the method should attribute the change specifically to $X^{(2)}$.
Let $m\ge 2$, $Z_t\sim\mathcal{N}(0,1)$ and for $j=1,\dots,m$,
\begin{equation}
X^{(j)}_t = Z_t + \varepsilon^{x,j}_t,\quad \varepsilon^{x,j}_t\sim\mathcal{N}(0,\sigma_x^2).
\end{equation}
With $(\alpha,\beta)=(0.6,0.5)$:
\begin{equation}
Y_t=
\begin{cases}
0.6\,X^{(1)}_t + 0.5\,Z_t + \varepsilon^y_t, & t\le \tau,\\
0.6\,X^{(1)}_t + 0.6\,X^{(2)}_t + 0.5\,Z_t + \varepsilon^y_t, & t>\tau,
\end{cases}
\quad \varepsilon^y_t\sim\mathcal{N}(0,\sigma_y^2).
\end{equation}

\paragraph{\texttt{PMB04\_EDGE\_OFF}}
\textbf{Summary: } Edge removed: pre depends on $X$, post drops $X$.
\textbf{Expected:} Change.
\textbf{Why:} This is the symmetric counterpart to the canonical “edge turns on” scenario: $Y \not\!\perp X \mid Z$ holds in segment~1,
but the direct $X\to Y$ link \emph{disappears} in segment~2, restoring conditional independence.
It verifies that a detector is not implicitly one-sided and can identify the loss of a causal driver, not only its emergence.
Define
\begin{equation}
g(Z_t)=\sin(1.0\cdot Z_t),\qquad f(X_t)=\tanh(1.0\cdot X_t).
\end{equation}
Then
\begin{equation}
Y_t=
\begin{cases}
\sin(Z_t)+0.6\,\tanh(X_t)+\varepsilon^y_t, & t\le \tau,\\
\sin(Z_t)+\varepsilon^y_t, & t>\tau,
\end{cases}
\qquad \varepsilon^y_t\sim\mathcal{N}(0,\sigma_y^2).
\end{equation}

\paragraph{\texttt{PMB05\_EDGE\_ON\_COLLINEAR\_X}}
\textbf{Summary: } Edge-on under severe collinearity among $X$’s (tests driver identifiability).
\textbf{Expected:} Change.
\textbf{Why:} This tests Markov-blanket expansion under \emph{severe multicollinearity}: many covariates are strongly correlated through shared latent factors,
so a method must isolate the designated driver’s conditional effect rather than reacting to correlated “competitor” features.
It is especially informative for localization and robustness, since feature correlation can dilute conditional dependence signals and cause unstable attribution.
Let $F_t,Z_t\sim\mathcal N(0,1)$ independent, and for $j=1,\dots,m$ define correlated covariates
\begin{equation}
X^{(j)}_t = \rho_f F_t + \rho_z Z_t + \varepsilon^{x,j}_t,\qquad \varepsilon^{x,j}_t\sim\mathcal N(0,\sigma_x^2),
\end{equation}
with $\rho_f,\rho_z>0$ chosen so that $\mathrm{Corr}(X^{(j)},X^{(k)})$ is high for $j\neq k$.
Using $X_t \equiv X^{(1)}_t$ as the designated driver and a linear baseline in $Z$, we generate
\begin{equation}
Y_t=
\begin{cases}
0.5\,Z_t+\varepsilon^y_t, & t\le \tau,\\
0.5\,Z_t+0.6\,X_t+\varepsilon^y_t, & t>\tau,
\end{cases}
\qquad \varepsilon^y_t\sim\mathcal{N}(0,\sigma_y^2).
\end{equation}

\subsubsection{PEF — Effect changes (sign/strength/sparsity/modulation)}

\paragraph{\texttt{PEF01\_SIGN\_FLIP}} 
\textbf{Summary:} Sign flip: pre $+\alpha X$, post $-\alpha X$ (conditional on $Z$).
\textbf{Expected:} Change.
\textbf{Why:} This is a canonical structural break where the direction of the driver effect reverses while the Markov blanket stays fixed.
It checks whether a method detects qualitative changes in conditional dependence (including sign) rather than only changes in marginal scale.
\begin{equation}
Y_t=
\begin{cases}
0.5\,Z_t + 0.6\,X_t + \varepsilon^y_t, & t\le \tau,\\
0.5\,Z_t - 0.6\,X_t + \varepsilon^y_t, & t> \tau,
\end{cases}
\qquad \varepsilon^y_t\sim\mathcal{N}(0,\sigma_y^2).
\end{equation}

\paragraph{\texttt{PEF02\_STRENGTH\_SHIFT}} 
\textbf{Summary:} Magnitude change: $\alpha_1\rightarrow\alpha_2$ (same sign).
\textbf{Expected:} Change.
\textbf{Why:} The driver effect magnitude changes while preserving the functional form and conditioning structure, yielding a more subtle break than edge on/off.
This evaluates sensitivity to changes in effect size under otherwise stable confounding and noise.
\begin{equation}
Y_t=
\begin{cases}
0.5\,Z_t + 0.3\,X_t + \varepsilon^y_t, & t\le \tau,\\
0.5\,Z_t + 0.9\,X_t + \varepsilon^y_t, & t> \tau,
\end{cases}
\qquad \varepsilon^y_t\sim\mathcal{N}(0,\sigma_y^2).
\end{equation}

\paragraph{\texttt{PEF03\_STRENGTH\_SHIFT\_MULTIZ}} 
\textbf{Summary:} Strength change with multi-dim $Z$ (coefficients scaled to keep total confounding roughly constant).
\textbf{Expected:} Change.
\textbf{Why:} This examines effect-size breaks under high-dimensional confounding where conditioning becomes intrinsically harder as $d_z$ increases.
The $1/\sqrt{d_z}$ scaling keeps the overall confounding signal approximately constant, isolating dimensionality rather than simply increasing nuisance strength.
Let $Z_t\in\mathbb{R}^{d_z}$ with $Z_t\sim\mathcal{N}(0,I)$ and define
\begin{equation}
\gamma=\frac{0.5}{\sqrt{d_z}}\mathbf{1},\qquad \beta=\frac{0.4}{\sqrt{d_z}}\mathbf{1}.
\end{equation}
Set
\begin{equation}
X_t = Z_t^\top \gamma + \varepsilon^x_t,\quad \varepsilon^x_t\sim\mathcal{N}(0,\sigma_x^2),\qquad
g(Z_t)=Z_t^\top \beta.
\end{equation}
Then
\begin{equation}
Y_t=
\begin{cases}
g(Z_t) + 0.3\,X_t + \varepsilon^y_t, & t\le \tau,\\
g(Z_t) + 0.9\,X_t + \varepsilon^y_t, & t> \tau,
\end{cases}
\qquad \varepsilon^y_t\sim\mathcal{N}(0,\sigma_y^2).
\end{equation}

\paragraph{\texttt{PEF04\_STRENGTH\_SHIFT\_DISTRACTORS}} 
\textbf{Summary:} Strength change with distractor $Z$ dimensions (only a subset truly confounds).
\textbf{Expected:} Change.
\textbf{Why:} Only a subset of the conditioning variables are causally relevant while the remainder are irrelevant distractors, reflecting practical over-conditioning.
This checks whether methods degrade when conditioning includes nuisance covariates that should ideally be ignored.
Split $Z_t=(Z_t^{(s)},Z_t^{(n)})$ with $d_s$ signal dims and $d_n$ distractors, and let only $Z^{(s)}$ enter the DGP:
\begin{equation}
X_t=(Z_t^{(s)})^\top \gamma + \varepsilon^x_t,\qquad
g(Z_t)=(Z_t^{(s)})^\top \beta,
\end{equation}
with $\gamma=\tfrac{0.5}{\sqrt{d_s}}\mathbf{1}$ and $\beta=\tfrac{0.4}{\sqrt{d_s}}\mathbf{1}$.
Then
\begin{equation}
Y_t=
\begin{cases}
g(Z_t) + 0.3\,X_t + \varepsilon^y_t, & t\le \tau,\\
g(Z_t) + 0.9\,X_t + \varepsilon^y_t, & t> \tau,
\end{cases}
\qquad \varepsilon^y_t\sim\mathcal{N}(0,\sigma_y^2),
\end{equation}
while presenting the full $Z_t$ (signal+distractors) as the observed conditioning set.

\paragraph{\texttt{PEF05\_STRENGTH\_SHIFT\_MULTIZ\_NOSCALE}} 
\textbf{Summary:} Strength change with multi-dim $Z$ (unscaled coefficients; confounding grows with $d_z$).
\textbf{Expected:} Change.
\textbf{Why:} This is a harsher version of \texttt{PEF03}: without normalization, the confounding contribution grows with dimension, making the conditional signal harder to isolate.
It probes robustness in the realistic regime where nuisance strength increases with the size of the adjustment set.
Same as \texttt{PEF03} but use unscaled coefficients
\begin{equation}
\gamma=0.5\,\mathbf{1},\qquad \beta=0.4\,\mathbf{1},
\end{equation}
and keep the same effect shift $0.3\to 0.9$ in the equations above.

\paragraph{\texttt{PEF06\_SIGN\_FLIP\_GAUSS}}
\textbf{Summary:} Sign flip in an explicitly linear-Gaussian setting: pre $+\alpha X$, post $-\alpha X$.
\textbf{Expected:} Change.
\textbf{Why:} This is the analytically clean counterpart to \texttt{PEF01}, constructed so that $(X,Z,Y)$ are jointly Gaussian under linear links and Gaussian noise.
It serves as a sanity-check that methods behave well in the regime where partial-correlation / Gaussian-copula reasoning is exact.
\begin{equation}
Z_t\sim\mathcal{N}(0,1),\qquad X_t = Z_t + \varepsilon^x_t,\qquad \varepsilon^x_t\sim\mathcal{N}(0,\sigma_x^2),
\end{equation}
and
\begin{equation}
Y_t=
\begin{cases}
0.5\,Z_t + 0.6\,X_t + \varepsilon^y_t, & t\le \tau,\\
0.5\,Z_t - 0.6\,X_t + \varepsilon^y_t, & t> \tau,
\end{cases}
\qquad \varepsilon^y_t\sim\mathcal{N}(0,\sigma_y^2).
\end{equation}

\paragraph{\texttt{PEF07\_SPARSE\_CHANGE\_CORR\_X}} 
\textbf{Summary:} Sparse driver set changes: which $X_j$ affects $Y$ shifts post-cp.
\textbf{Expected:} Change.
\textbf{Why:} The identity of the active causal driver changes post-split while covariates remain strongly correlated, so detection requires distinguishing true parent-set changes from correlation structure.
It checks whether methods can localize the changepoint to the correct driver feature under multicollinearity and sparse support.
Let $F_t,Z_t\sim\mathcal N(0,1)$ independent and for $j=1,\dots,m$ define correlated covariates
\begin{equation}
X^{(j)}_t = 0.8\,F_t + 0.6\,Z_t + \varepsilon^{x,j}_t,\qquad \varepsilon^{x,j}_t\sim\mathcal N(0,\sigma_x^2),
\end{equation}
and take $X^{(1)}_t$ as the designated driver for reporting.
Choose two indices $j_{\mathrm{pre}}\neq j_{\mathrm{post}}$ (e.g.\ $1$ and $2$). Then
\begin{equation}
Y_t=
\begin{cases}
0.5\,Z_t + 0.6\,X^{(j_{\mathrm{pre}})}_t + \varepsilon^y_t, & t\le \tau,\\
0.5\,Z_t + 0.6\,X^{(j_{\mathrm{post}})}_t + \varepsilon^y_t, & t> \tau,
\end{cases}
\qquad \varepsilon^y_t\sim\mathcal{N}(0,\sigma_y^2).
\end{equation}

\paragraph{\texttt{PEF08\_POISSON\_SLOPE}}
\textbf{Summary:} Rate shift in a discrete stochastic multiplicative coefficient.
\textbf{Expected:} Change.
\textbf{Why:} The causal mechanism changes from $Y = Z_1 X + \varepsilon$ to $Y = Z_2 X + \varepsilon$, where $Z$ is a discrete Poisson random variable.
In segment~1, the coupling strength is drawn from $\text{Poisson}(\lambda=0.5)$. In segment~2, the rate increases to $\lambda=5$.
This introduces a discrete, integer-valued slope coefficient that varies per sample. The change manifests as an increase in the average effect size accompanied by a specific form of heteroskedasticity where the variance of $Y$ scales with the discrete intensity of the coupling.

\subsubsection{PNL — Nonlinear mechanism changes}

\paragraph{\texttt{PNL01\_SHAPE\_CHANGE}} 
\textbf{Summary:} Functional form change: driver link switches from monotone/saturating to oscillatory/non-monotone.
\textbf{Expected:} Change.
\textbf{Why:} The dependence of $Y$ on $X$ changes in \emph{shape} rather than just a coefficient, so $p(Y\mid X,Z)$ is different across segments even if marginal scales are similar.
This checks whether a method detects genuine nonlinear mechanism changes beyond correlation or variance shifts.
\begin{equation}
Y_t=
\begin{cases}
0.5\,Z_t + 1.0\,\tanh(1.5\,X_t) + \varepsilon^y_t, & t\le \tau,\\
0.5\,Z_t + 1.0\,\cos(2.0\,X_t) + \varepsilon^y_t, & t> \tau,
\end{cases}
\qquad \varepsilon^y_t\sim\mathcal{N}(0,\sigma_y^2).
\end{equation}

\paragraph{\texttt{PNL02\_INTERACTION\_EDGE\_ON\_XZ}} 
\textbf{Summary:} Interaction appears: post adds an $X\times h(Z)$ term (state-dependent effect).
\textbf{Expected:} Change.
\textbf{Why:} This introduces a mechanism change in which the effect of the driver becomes \emph{heterogeneous} across confounder states, not just a global slope shift.
It tests whether a detector is sensitive to changes in nonlinear conditional dependence geometry that are missed by purely additive models.
Define
\begin{equation}
h(Z_t)=\tanh(1.0\cdot Z_t).
\end{equation}
Then
\begin{equation}
Y_t=
\begin{cases}
0.5\,Z_t + 0.6\,X_t + \varepsilon^y_t, & t\le \tau,\\
0.5\,Z_t + 0.6\,X_t + 0.6\,X_t\,\tanh(Z_t) + \varepsilon^y_t, & t> \tau,
\end{cases}
\qquad \varepsilon^y_t\sim\mathcal{N}(0,\sigma_y^2).
\end{equation}

\paragraph{\texttt{PNL03\_TAIL\_GATED\_EDGE\_ON}} 
\textbf{Summary:} Tail-activated coupling: dependence turns on mainly for large $|X|$.
\textbf{Expected:} Change.
\textbf{Why:} The driver effect is concentrated in the tails of $X$, so global linear summaries may show little change even though the conditional mechanism differs.
This tests sensitivity to changepoints in tail dependence / state-dependent coupling, which are common in financial and risk regimes.
Let $a_t=|X_t|$ and let $\theta_q = Q_q(a_t)$ be the empirical $q$-quantile over the post segment (default $q=0.6$).
Define the smooth gate with sharpness $k=10$:
\begin{equation}
s(a_t)=\frac{1}{1+\exp\{-10(a_t-\theta_q)\}}.
\end{equation}
Then
\begin{equation}
Y_t=
\begin{cases}
0.5\,Z_t + \varepsilon^y_t, & t\le \tau,\\
0.5\,Z_t + 0.6\,s(|X_t|)\,X_t + \varepsilon^y_t, & t> \tau,
\end{cases}
\qquad \varepsilon^y_t\sim\mathcal{N}(0,\sigma_y^2).
\end{equation}

\paragraph{\texttt{PNL04\_QUADRATIC\_FLIP}}
\textbf{Summary:} Sign inversion of the quadratic dependence (Convex $\to$ Concave).
\textbf{Expected:} Change.
\textbf{Why:} The conditional variance is invariant. The change is isolated to the functional form of the conditional mean $\mathbb{E}[Y|X]$.
In segment~1, $Y$ depends on $X$ via a convex quadratic function:
\begin{equation}
Y_t \;=\; \beta X_t^2 + f_Z(Z_t) + \varepsilon_t, \quad \beta = 3.0.
\end{equation}
In segment~2, the dependence flips to concave ($\beta = -3.0$).
This creates a drastic topological change in the conditional mean structure (from a U-shaped valley to an inverted U-shaped hill) while maintaining identical marginal distributions for the covariates and error terms.

\paragraph{\texttt{PNL05\_HIGH\_FREQ\_SINE}}
\textbf{Summary:} Disappearance of a high-frequency sinusoidal dependence.
\textbf{Expected:} Change.
\textbf{Why:} This scenario is designed to challenge distance-based correlation metrics (e.g., dCor, HSIC).
In segment~1, $Y$ depends on $X$ via a high-frequency sine wave:
\begin{equation}
Y_t \;=\; \sin(8\pi X_t) + f_Z(Z_t) + \varepsilon_t.
\end{equation}
In segment~2, the dependence vanishes ($Y_t = f_Z(Z_t) + \varepsilon_t$).
For finite sample sizes ($N=400$), the high-frequency oscillations of segment~1 effectively average out in the integral calculations of global distance metrics, often yielding a distance correlation near zero (falsely suggesting independence). A local method, however, should distinguish the structured manifold of segment~1 from the unstructured noise of segment~2.

\subsubsection{PNM — Conditional noise-shape changes}

\paragraph{\texttt{PNM01\_COND\_SKEW}} 
\textbf{Summary:} Change in conditional skewness with invariant conditional mean and variance.
\textbf{Expected:} Change.
\textbf{Why:} The conditional mean and variance are strictly matched to the baseline (Gaussian) for all $X$. The change is isolated to the dependence of the noise \textit{skewness} on $X$.
In segment~1, $\varepsilon_t \sim \mathcal{N}(0, \sigma_0^2)$.
In segment~2, $\varepsilon_t$ follows a standardized Lognormal distribution where the shape parameter $\sigma_t$ depends on $|X_t|$. Specifically, $\sigma_t = 0.1 + 2.0|X_t|$.
We generate $\varepsilon_t$ via pointwise standardization:
\begin{equation}
\varepsilon_t \;=\; \sigma_0 \frac{L_t - \mu(\sigma_t)}{\sqrt{v(\sigma_t)}}, 
\quad \text{where } L_t \sim \text{Lognormal}(0, \sigma_t^2),
\end{equation}
with theoretical moments $\mu(s)=e^{s^2/2}$ and $v(s)=(e^{s^2}-1)e^{s^2}$.
For $X_t \approx 0$, the distribution is nearly Gaussian. As $|X_t|$ increases, the distribution becomes highly right-skewed while maintaining exact mean zero and unit variance. In this scenario we override the default mean model by setting $\alpha=0$ and $f_Z\equiv 0$, i.e.\ $Y_t=\varepsilon_t$, so any detected change is attributable purely to the $X$-dependent \emph{skewness} of the noise.
\paragraph{\texttt{PNM02\_COND\_TAILS}} 
\textbf{Summary:} Change in conditional kurtosis (Symmetric Lognormal) with invariant conditional mean, variance, and skewness.
\textbf{Expected:} Change.
\textbf{Why:} This scenario is the structural mirror of \texttt{PNM01}. It uses the same $X$-dependent shape mechanism ($\sigma_t = 0.1 + 2.0|X_t|$), but applies a random sign flip to strictly enforce symmetry ($\text{Skew}=0$). The change is isolated purely to \textit{conditional kurtosis}.
In segment~1, $\varepsilon_t \sim \mathcal{N}(0, \sigma_0^2)$.
In segment~2, $\varepsilon_t$ follows a standardized \textit{Symmetric Lognormal} distribution:
\begin{equation}
\varepsilon_t \;=\; \sigma_0 \frac{S_t \cdot e^{\sigma_t Z_t}}{\sqrt{v(\sigma_t)}}, \quad \text{where } Z_t \sim \mathcal{N}(0, 1),\; S_t \sim \text{Rademacher}(\pm 1),
\end{equation}
and $v(\sigma) = e^{2\sigma^2}$.
As $|X_t|$ increases, $\sigma_t$ grows, causing the kurtosis to explode (creating a sharp central peak and extremely heavy tails) while the distribution remains perfectly symmetric and variance-matched. In this scenario we also set $\alpha=0$ and $f_Z\equiv 0$ (so $Y_t=\varepsilon_t$), ensuring the only mechanism change comes from the $X$-dependent \emph{kurtosis/heavy tails} of the symmetric noise law.

\paragraph{\texttt{PNM03\_COND\_MIXTURE}} 
\textbf{Summary:} Conditional bimodality shift with global moment matching.
\textbf{Expected:} Change.
\textbf{Why:} The unconditional mean and variance are matched to the baseline. The change manifests as an $X$-dependent transition from unimodality to bimodality.
In segment~1, $\varepsilon_t \sim \mathcal{N}(0, \sigma_0^2)$.
In segment~2, we define a logistic mixing weight $w_t = \sigma(\eta X_t)$ with slope $\eta=5.0$ and generate a raw mixture of two Gaussians separated by $6\sigma$:
\begin{equation}
U_t \sim w_t \mathcal{N}(-3, 0.5^2) + (1-w_t) \mathcal{N}(+3, 0.5^2).
\end{equation}
The entire segment $\{U_t\}$ is then globally standardized to have sample mean $0$ and standard deviation $\sigma_0$.
Unlike the previous two scenarios, this global standardization preserves the conditional mean structure $\mathbb{E}[Y|X]$, allowing the distribution to smoothly morph from a left-mode to a right-mode as $X$ varies, creating a strong topological signal.

\subsubsection{PVR — Volatility / variance regime changes}
\paragraph{\texttt{PVR01\_COND\_HETSKED}} 
\textbf{Summary:} Change from homoskedasticity to symmetric quadratic heteroskedasticity with exact global variance matching.
\textbf{Expected:} Change.
\textbf{Why:} The conditional mean is invariant. The change isolates the conditional variance structure: segment~1 has constant variance, while segment~2 has variance dependent on $X^2$.
In segment~1, $\varepsilon_t \sim \mathcal{N}(0, \sigma_0^2)$.
In segment~2, $\varepsilon_t$ follows a heteroskedastic Gaussian distribution driven by a quadratic variance function:
\begin{equation}
\varepsilon_t \;=\; \sigma_0 \cdot \xi_t \cdot \sqrt{(1-\theta) + \theta X_t^2}, \quad \text{where } \xi_t \sim \mathcal{N}(0,1).
\end{equation}
We set $\theta=0.99$. This specific parametrization ensures that the global variance remains invariant: since $\mathbb{E}[X_t^2]=1$ (standardized inputs), the expected variance is $\sigma_0^2 [(1-\theta) + \theta(1)] = \sigma_0^2$.
However, the \textit{conditional} variance varies drastically: at $X_t \approx 0$, the noise is silenced ($\text{Var} \approx 0.01\sigma_0^2$), whereas at $|X_t| \approx 2$, the noise explodes ($\text{Var} \approx 4.0\sigma_0^2$), creating a distinct U-shaped volatility profile.


\paragraph{\texttt{PVR02\_VOL\_CLUSTER}} 
\textbf{Summary:} Volatility clustering regime: post residuals exhibit ARCH(1)-type dynamics.
\textbf{Expected:} Change.
\textbf{Why:} The conditional mean mechanism is unchanged across segments, but the \emph{second-order} structure changes:
segment~2 violates i.i.d.\ residual assumptions by introducing time-varying conditional variance.
This checks whether detectors can identify regime changes driven by volatility clustering (common in financial time series)
without requiring a shift in the instantaneous conditional mean.

Let $\xi_t\sim\mathcal{N}(0,1)$ and define the residual process by
\begin{equation}
\varepsilon_t =
\begin{cases}
0.10\,\xi_t, & t\le \tau,\\
\sigma_t\,\xi_t, & t> \tau,
\end{cases}
\qquad
\sigma_t^2 = 0.2 + 0.6\,\varepsilon_{t-1}^2 \quad (t>\tau).
\end{equation}
Then for all $t$,
\begin{equation}
Y_t = 0.5\,Z_t + 0.6\,X_t + \varepsilon_t .
\end{equation}

\paragraph{\texttt{PVR03\_GLOBAL\_NOISE\_SCALE}} 
\textbf{Summary:} Global variance / SNR regime: noise scale increases post-split.
\textbf{Expected:} Change.
\textbf{Why:} Although the structural mean term is unchanged, a large increase in noise variance reduces the effective signal-to-noise ratio and weakens conditional dependence.
Under a copula/partial-correlation view, the conditional dependence parameter changes with $\sigma$, so this constitutes a genuine regime change in $p(Y\mid X,Z)$.
\begin{equation}
Y_t = 0.5\,Z_t + 0.6\,X_t + \varepsilon_t,
\quad
\varepsilon_t\sim
\begin{cases}
\mathcal{N}(0,0.10^2), & t\le \tau,\\
\mathcal{N}(0,(0.50)^2), & t>\tau.
\end{cases}
\end{equation}

\subsubsection{PSM — Smooth / gradual regime transitions}

\paragraph{\texttt{PSM01\_SMOOTH\_TRANSITION}} 
\textbf{Summary:} Gradual edge-on: the $X\to Y$ effect is mixed in smoothly around the changepoint index $\tau$ over a window of width $W$.
\textbf{Expected:} Change.
\textbf{Why:} Many real regime shifts are not instantaneous; dependence often ramps up over time rather than appearing at a single index.
This scenario checks whether a detector can still identify a mechanism change when the transition is smooth and the “changepoint” is diffuse.

Let $Z_t\sim\mathcal{N}(0,1)$ and generate $X_t$ from $Z_t$ via the standard confounded design in the suite.
Define the nonlinear baseline and effect (default choices in the generator)
\begin{equation}
f_Z(Z_t)=\sin(1.0\cdot Z_t),
\qquad
f_X(X_t)=0.6\,\tanh(1.0\cdot X_t).
\end{equation}
Let $W\ge 1$ be the transition width and set $\kappa = \max(W/6,1)$.
Define the smooth mixing weight (logistic ramp)
\begin{equation}
w_t \;=\; \frac{1}{1+\exp\left\{-(t-\tau)/\kappa\right\}}
\;=\;
\frac{1}{1+\exp\left\{-6(t-\tau)/W\right\}}
\quad (\text{when } W\ge 6).
\end{equation}
Then the observation model is
\begin{equation}
Y_t \;=\; f_Z(Z_t)\;+\; w_t\, f_X(X_t)\;+\;\varepsilon^y_t,
\qquad \varepsilon^y_t\sim\mathcal{N}(0,\sigma_y^2).
\end{equation}

\subsection{Negative Controls (Nulls)} \label{section: negative controls}
\subsubsection{NCL — Baseline nulls}

\paragraph{\texttt{NCL01\_BASE\_NULL}} 
\textbf{Summary:} Stationary null: both segments $Y\leftarrow f(Z)$ (no $X$ effect).
\textbf{Expected:} Null.
\textbf{Why:} This is the fundamental Type-I error control: the full data-generating process is stationary across the split and $Y\indep X\mid Z$ holds in both segments.
It checks that the end-to-end pipeline (including calibration) does not spuriously detect changepoints under complete stability.
\begin{equation}
Y_t \;=\; 0.5\,Z_t + \varepsilon^y_t,\qquad \varepsilon^y_t\sim\mathcal{N}(0,\sigma_y^2),
\quad \text{for both } t\le \tau \text{ and } t>\tau,
\end{equation}
with the default confounded driver $X_t = Z_t + \varepsilon^x_t$ generated but not entering $Y$.

\paragraph{\texttt{NCL02\_MULTIX\_ALIGN}} 
\textbf{Summary:} Multi-$X$ null aligned to PMB03 world: correlated candidates but no mechanism change.
\textbf{Expected:} Null.
\textbf{Why:} This mirrors the multivariate-candidate construction of \texttt{PMB03\_EDGE\_ON\_NEWDRV} while removing the changepoint, so any rejection indicates sensitivity to correlation structure rather than a true parent-set change.
It is a key robustness check for implementations that handle multi-feature $X$ and a designated driver index.
Let $m\ge 2$, $Z_t\sim\mathcal{N}(0,1)$ and for $j=1,\dots,m$,
\begin{equation}
X^{(j)}_t = Z_t + \varepsilon^{x,j}_t,\quad \varepsilon^{x,j}_t\sim\mathcal{N}(0,\sigma_x^2).
\end{equation}
Then, for all $t$,
\begin{equation}
Y_t = 0.6\,X^{(1)}_t + 0.5\,Z_t + \varepsilon^y_t,
\qquad \varepsilon^y_t\sim\mathcal{N}(0,\sigma_y^2),
\end{equation}
with no change across the split (and the designated driver set consistently with the PMB03-aligned world).

\subsubsection{NIV — Invariance checks}

\paragraph{\texttt{NIV01\_GLOBAL\_RESCALE}} 
\textbf{Summary:} Unit change: multiply $(X,Y,Z)$ post by a constant; copula invariant.
\textbf{Expected:} Null.
\textbf{Why:} This simulates a pure change of measurement units, which should not create a changepoint in the underlying conditional dependence structure.
It checks that methods are invariant to global positive rescaling and do not spuriously reject due to normalization or scale sensitivity.
Let $(X_t,Y_t,Z_t)$ be generated under the default stationary mechanism, and apply a post-split rescaling by $\lambda=10$:
\begin{equation}
(X_t,Y_t,Z_t)\leftarrow
\begin{cases}
(X_t,Y_t,Z_t), & t\le \tau,\\
(10\,X_t,\;10\,Y_t,\;10\,Z_t), & t>\tau.
\end{cases}
\end{equation}

\paragraph{\texttt{NIV02\_Y\_MONO\_TRANSFORM}}
\textbf{Summary:} Strictly monotone transform of $Y$ post-cp (rank-invariance sanity).
\textbf{Expected:} Null.
\textbf{Why:} Copula-based (rank-based) dependence is invariant to strictly increasing transformations of $Y$, so the conditional copula $C_{Y,X\mid Z}$ should be unchanged.
This checks whether a method is truly targeting copula/conditional dependence rather than being sensitive to Euclidean scaling or marginal shapes of $Y$.
Let $(X_t,Y_t,Z_t)$ be generated under a fixed stationary mechanism, and for $t>\tau$ apply a strictly increasing map $\phi$:
\begin{equation}
Y_t \leftarrow
\begin{cases}
Y_t, & t\le \tau,\\
\phi(Y_t), & t>\tau,
\end{cases}
\qquad \text{with $\phi$ strictly increasing (e.g.\ $\phi(y)=\mathrm{asinh}(y)$).}
\end{equation}

\paragraph{\texttt{NIV03\_X\_MONO\_GIVEN\_Z}}
\textbf{Summary:} Strictly monotone transform of $X$ conditional on $Z$ post-cp.
\textbf{Expected:} Null.
\textbf{Why:} A strictly increasing transform of $X$ (within each confounder state) preserves the conditional rank structure and hence should not change the conditional copula $C_{Y,X\mid Z}$.
This checks whether the method depends on the conditional order information of $X$ given $Z$, rather than the raw scale or marginal distribution of $X$.
Let $(X_t,Y_t,Z_t)$ be generated under a fixed stationary mechanism, and for $t>\tau$ apply a $Z$-dependent monotone map
\begin{equation}
X_t \leftarrow
\begin{cases}
X_t, & t\le \tau,\\
a(Z_t)\,\phi(X_t) + b(Z_t), & t>\tau,
\end{cases}
\qquad \text{where $a(Z_t)>0$ and $\phi$ is strictly increasing.}
\end{equation}

\subsubsection{NMD — Marginal drift controls}

\paragraph{\texttt{NMD01\_Z\_LOC\_SCALE}} 
\textbf{Summary:} Shift/scale $Z$ marginals; keep the conditional structure stable.
\textbf{Expected:} Null.
\textbf{Why:} The confounder distribution changes across segments, altering $p(Z)$ (and hence potentially $p(X,Z)$), but the structural equations for $X\mid Z$ and $Y\mid Z$ remain unchanged.
This tests whether methods incorrectly interpret marginal confounder drift as a mechanism change when the conditioning set includes $Z$.
\begin{equation}
Z_t\sim
\begin{cases}
\mathcal{N}(0,1), & t\le \tau,\\
\mathcal{N}(0.5,\,1.6^2), & t>\tau,
\end{cases}
\qquad
X_t = Z_t+\varepsilon^x_t,\;\; \varepsilon^x_t\sim\mathcal{N}(0,\sigma_x^2),
\end{equation}
and in both segments
\begin{equation}
Y_t = 0.5\,Z_t + \varepsilon^y_t,\qquad \varepsilon^y_t\sim\mathcal{N}(0,\sigma_y^2).
\end{equation}

\paragraph{\texttt{NMD02\_X\_MEAN\_SHIFT}} 
\textbf{Summary:} Shift $X$ marginal mean across segments (constructed null).
\textbf{Expected:} Null.
\textbf{Why:} The marginal distribution of the driver changes substantially, but the response mechanism is constructed so that $Y\indep X\mid Z$ holds in both segments.
This checks whether a method spuriously rejects under covariate shift when the target conditional mechanism is invariant.
Let $Z_t\sim\mathcal{N}(0,1)$ and define
\begin{equation}
X_t =
\begin{cases}
Z_t + \varepsilon^x_t, & t\le \tau,\\
Z_t + 2.0 + \varepsilon^x_t, & t>\tau,
\end{cases}
\qquad \varepsilon^x_t\sim\mathcal{N}(0,\sigma_x^2),
\end{equation}
and generate $Y$ with no driver effect in both segments:
\begin{equation}
Y_t = 0.5\,Z_t + \varepsilon^y_t,\qquad \varepsilon^y_t\sim\mathcal{N}(0,\sigma_y^2).
\end{equation}

\paragraph{\texttt{NMD03\_Y\_TREND}} 
\textbf{Summary:} Add a deterministic linear trend to $Y$ (mechanism constant).
\textbf{Expected:} Null.
\textbf{Why:} A deterministic time trend changes the marginal distribution of $Y$ across segments, but the stochastic mechanism linking $Y$ to $(X,Z)$ is unchanged.
This tests whether a detector confuses deterministic nonstationarity with a changepoint in the conditional mechanism.
Let $t=0,1,\dots,n-1$ and set
\begin{equation}
Y_t = 0.5\,X_t + 0.02\,t + \varepsilon^y_t,\qquad \varepsilon^y_t\sim\mathcal{N}(0,\sigma_y^2),
\end{equation}
with the same equation applying for both $t\le \tau$ and $t>\tau$ (no structural change).

\paragraph{\texttt{NMD04\_Y\_SEASON}} 
\textbf{Summary:} Add strong deterministic seasonality to $Y$ (mechanism constant).
\textbf{Expected:} Null.
\textbf{Why:} A periodic component makes segment-wise marginals differ depending on phase coverage, yet the underlying conditional mechanism is unchanged.
This checks robustness to deterministic seasonal structure that can induce apparent distribution shifts without any causal break.
With period $50$ and amplitude $2.0$, define
\begin{equation}
Y_t = 0.5\,X_t + 2.0\,\sin\!\left(\frac{2\pi t}{50}\right) + \varepsilon^y_t,
\qquad \varepsilon^y_t\sim\mathcal{N}(0,\sigma_y^2),
\end{equation}
with no change across the split.

\subsubsection{NNS — Noise-law / tail-shape controls}

\paragraph{\texttt{NNS01\_NOISE\_LAW\_SHIFT}} 
\textbf{Summary:} Gaussian $\to$ Laplace noise post (variance matched); mechanism unchanged.
\textbf{Expected:} Null.
\textbf{Why:} The marginal noise distribution changes shape while variance is matched, but the conditional mean mechanism is identical in both segments.
This tests whether a method overreacts to noise-law differences when the target is stability of $p(Y\mid X,Z)$ under variance control.
Let $X_t = Z_t+\varepsilon^x_t$ with $\varepsilon^x_t\sim\mathcal{N}(0,\sigma_x^2)$ and define
\begin{equation}
Y_t = 0.5\,Z_t + \varepsilon^y_t,
\end{equation}
where
\begin{equation}
\varepsilon^y_t\sim
\begin{cases}
\mathcal{N}(0,\sigma_y^2), & t\le \tau,\\
\mathrm{Laplace}(0,b), & t>\tau,
\end{cases}
\qquad b=\frac{\sigma_y}{\sqrt{2}}\;\;(\mathrm{Var}=\sigma_y^2).
\end{equation}

\paragraph{\texttt{NNS02\_TAILS\_SHIFT}} 
\textbf{Summary:} Gaussian $\to$ Student-$t$ noise post (variance matched); mechanism unchanged.
\textbf{Expected:} Null.
\textbf{Why:} The residual distribution becomes heavy-tailed post-split while its variance is matched, creating a tail-risk nuisance without altering the driver mechanism.
This tests whether methods confuse tail-shape differences with true mechanism changes when first/second moments are controlled.
Define a stationary linear mechanism
\begin{equation}
Y_t = 0.5\,X_t + \varepsilon_t,
\end{equation}
with
\begin{equation}
\varepsilon_t\sim
\begin{cases}
\mathcal{N}(0,\sigma_y^2), & t\le \tau,\\
c\,t_{3}, & t>\tau,
\end{cases}
\qquad c=\frac{\sigma_y}{\sqrt{\mathrm{Var}(t_3)}}=\sigma_y\sqrt{\frac{1}{3}},
\end{equation}
since $\mathrm{Var}(t_\nu)=\nu/(\nu-2)$ for $\nu>2$ (so $\mathrm{Var}(t_3)=3$), yielding $\mathrm{sd}(c\,t_3)=\sigma_y$.

\subsubsection{NCF — Covariate/confounding drift controls}


\paragraph{\texttt{NCF01\_XGZ\_DRIFT}} 
\textbf{Summary:} Severe location and scale drift in $X\mid Z$ while $Y\mid Z$ is invariant.
\textbf{Expected:} Null.
\textbf{Why:} This induces a strong covariate shift in the driver distribution $p(X\mid Z)$ without altering the response mechanism, testing whether a detector spuriously flags a changepoint due to feature drift.
Let $Z_t\sim\mathcal{N}(0,1)$ and $\varepsilon^x_t\sim\mathcal{N}(0,\sigma_x^2)$. We define the confounded driver as:
\begin{equation}
X_t =
\begin{cases}
\sin(Z_t) + \varepsilon^x_t, & t\le \tau,\\
5.0 \cdot \big(\sin(Z_t) + \varepsilon^x_t\big) + 10.0, & t>\tau.
\end{cases}
\end{equation}
In both segments, the outcome is generated independently of $X$ as $Y_t = 0.5\,Z_t + \varepsilon^y_t$, ensuring that $Y\indep X\mid Z$ holds throughout. 

\paragraph{\texttt{NCF02\_Z\_COV\_SHIFT}} 
\textbf{Summary:} Change $\mathrm{Cov}(Z)$ across segments with $Y\mid X,Z$ unchanged.
\textbf{Expected:} Null.
\textbf{Why:} The dependence structure within the confounder vector drifts, altering $p(Z)$ and $p(X,Z)$, but the structural equation for $Y$ given $(X,Z)$ remains the same.
This tests whether a method mistakes shifts in feature dependence (within $Z$) for a mechanism change when conditioning on the full observed $Z$.
Let $Z_t\in\mathbb{R}^{d_z}$ be Gaussian with equicorrelation covariance
\begin{equation}
Z_t \sim \mathcal{N}(0,\Sigma_\rho),\qquad (\Sigma_\rho)_{ij}=
\begin{cases}
1, & i=j,\\
\rho, & i\neq j,
\end{cases}
\end{equation}
with
\begin{equation}
\rho =
\begin{cases}
\rho_{\mathrm{pre}}, & t\le \tau,\\
\rho_{\mathrm{post}}, & t>\tau.
\end{cases}
\end{equation}
Generate the driver from the (unchanged) confounding structure
\begin{equation}
X_t = \frac{1}{\sqrt{d_z}}\mathbf{1}^\top Z_t + \varepsilon^x_t,\qquad \varepsilon^x_t\sim\mathcal{N}(0,\sigma_x^2),
\end{equation}
and keep the response mechanism fixed for all $t$:
\begin{equation}
Y_t = \frac{0.5}{\sqrt{d_z}}\mathbf{1}^\top Z_t + \varepsilon^y_t,\qquad \varepsilon^y_t\sim\mathcal{N}(0,\sigma_y^2).
\end{equation}

\paragraph{\texttt{NCF03\_XGZ\_DRIFT\_COPULA}} 
\textbf{Summary:} Copula drift in $X\mid Z$ with perfectly matched marginals; $\alpha=0$ (hard null).
\textbf{Expected:} Null.
\textbf{Why:} The conditional dependence (copula) between $X$ and $Z$ is significantly altered post-split, but the marginal distributions remain strictly invariant. $Y$ is generated exclusively from $Z$, ensuring $p(Y\mid X,Z)$ is unchanged and $Y \indep X \mid Z$ holds globally.
Let $Z_t\sim\mathcal{N}(0,1)$ and $\varepsilon^x_t\sim\mathcal{N}(0,1)$. We generate $X_t$ using a shifting Gaussian copula structure:
\begin{equation}
X_t =
\begin{cases}
\rho_1 Z_t + \sqrt{1-\rho_1^2} \,\varepsilon^x_t, & t\le \tau,\\
\rho_2 Z_t + \sqrt{1-\rho_2^2} \,\varepsilon^x_t, & t>\tau.
\end{cases}
\end{equation}
Setting $\rho_1 = 0.2$ and $\rho_2 = 0.8$ induces a severe drift in the joint distribution $p(X, Z)$, while mathematically guaranteeing that the marginal distribution $p(X)$ remains standard normal $\mathcal{N}(0,1)$ throughout. The outcome is generated independently of $X$ as $Y_t = 0.5\,Z_t + \varepsilon^y_t$. 
This constitutes a stringent null test: it isolates the algorithm's sensitivity to structural shifts in the covariate space, penalizing methods that suffer from intra-neighborhood leakage unless properly calibrated.

\paragraph{\texttt{NCF04\_FZ\_DRIFT\_ONLY}} 
\textbf{Summary:} Drift in $f(Z)$ only (constructed null; no $X\to Y$ change).
\textbf{Expected:} Null.
\textbf{Why:} The deterministic $Z$-only component in $Y$ changes across segments, but the $X\to Y$ mechanism is unchanged (and can be set to zero), so $p(Y\mid X,Z)$ is invariant with respect to the driver.
This checks whether a method properly conditions out $Z$ rather than reacting to changes in the marginal behavior of $Y$ induced by $Z$.
Let $Z_t\sim\mathcal{N}(0,1)$ and define
\begin{equation}
Y_t =
\begin{cases}
\sin(Z_t)+\varepsilon^y_t, & t\le \tau,\\
\tanh(Z_t)+\varepsilon^y_t, & t>\tau,
\end{cases}
\qquad \varepsilon^y_t\sim\mathcal{N}(0,\sigma_y^2),
\end{equation}
with the driver generated as usual (e.g.\ $X_t=Z_t+\varepsilon^x_t$) but not entering $Y$.

\paragraph{\texttt{NCF05\_DISCRETE\_Z}} 
\textbf{Summary:} Switch $Z$ to a discrete/mixture-type confounder with drifting mixture weights.
\textbf{Expected:} Null.
\textbf{Why:} The distribution of the confounder changes in a discrete/mixture manner, altering $p(Z)$ and thus the marginal distributions of $(X,Y)$, but the conditional mechanism given $Z$ is unchanged.
This checks whether methods remain calibrated when $Z$ is non-Gaussian and the confounder composition drifts across segments.
Let $Z_t\in\{0,1\}$ with
\begin{equation}
\mathbb{P}(Z_t=1)=
\begin{cases}
p_{\mathrm{pre}}, & t\le \tau,\\
p_{\mathrm{post}}, & t>\tau,
\end{cases}
\end{equation}
and generate (for all $t$) a confounded driver and response with fixed structural equations
\begin{equation}
X_t = 1.0\,Z_t + \varepsilon^x_t,\qquad
Y_t = 0.5\,Z_t + \varepsilon^y_t,\qquad
\varepsilon^x_t\sim\mathcal{N}(0,\sigma_x^2),\;\varepsilon^y_t\sim\mathcal{N}(0,\sigma_y^2).
\end{equation}

\subsubsection{NDR — Implementation robustness}

\paragraph{\texttt{NDR01\_FEATURE\_PERMUTE}} 
\textbf{Summary:} Permute non-driver feature columns at $cp$ (indexing/pipeline robustness).
\textbf{Expected:} Null.
\textbf{Why:} The underlying data-generating mechanism is unchanged up to column order, so any detection indicates an implementation bug or an unintended dependence on feature ordering.
This is a simple but important check that the pipeline uses the designated driver index correctly and is robust to irrelevant feature permutations.
Let $X_t\in\mathbb{R}^m$ be a multi-feature covariate vector with designated driver in column $0$.
Draw $(X_t,Y_t,Z_t)$ from a stationary mechanism for all $t$, but for $t>\tau$ apply a permutation $\pi$ to the non-driver columns:
\begin{equation}
X^{(0)}_t \leftarrow X^{(0)}_t,\qquad
(X^{(1)}_t,\dots,X^{(m-1)}_t)\leftarrow (X^{(\pi(1))}_t,\dots,X^{(\pi(m-1))}_t),
\end{equation}
leaving $Y_t$ and $Z_t$ unchanged.

\subsubsection{NPO — Partial observability / latent confounding}

\paragraph{\texttt{NPO01\_LATENT\_Z\_STABLE}} 
\textbf{Summary:} Only $Z_{\mathrm{obs}}$ is provided; $Z_{\mathrm{hid}}$ is unobserved; regime stable.
\textbf{Expected:} Null.
\textbf{Why:} This tests robustness to partial observability: the true confounder is multidimensional but the method only receives a subset, creating residual confounding that can inflate dependence without any changepoint.
Because the regime is stable across segments, any detection reflects sensitivity to unobserved confounding rather than a true mechanism change.
Let the true confounder be $Z_t=(Z^{\mathrm{obs}}_t, Z^{\mathrm{hid}}_t)$ with independent components
\begin{equation}
Z^{\mathrm{obs}}_t\sim\mathcal{N}(0,1),\qquad Z^{\mathrm{hid}}_t\sim\mathcal{N}(0,1),
\end{equation}
and generate
\begin{equation}
X_t = 1.0\,Z^{\mathrm{obs}}_t + 1.0\,Z^{\mathrm{hid}}_t + \varepsilon^x_t,\qquad \varepsilon^x_t\sim\mathcal{N}(0,\sigma_x^2),
\end{equation}
\begin{equation}
Y_t = 0.5\,Z^{\mathrm{obs}}_t + 0.5\,Z^{\mathrm{hid}}_t + \varepsilon^y_t,\qquad \varepsilon^y_t\sim\mathcal{N}(0,\sigma_y^2),
\end{equation}
for all $t$ (no change at $\tau$). The method is only given $Z^{\mathrm{obs}}_t$ as the observed conditioning variable.

\section{Computational resources}
\label{section:compute-resources}

All experiments are run on CPU only; no GPU acceleration is used. The final
reproduction runs were launched on machines with Intel(R)
Xeon(R) Gold 6252N CPUs at 2.30GHz, with 96 CPU cores and 252 GB of memory available. Wall-clock times are measured by the code around each experiment runner or major compute stage and are summarised automatically. 

For the synthetic benchmark, the logs include both the total benchmark time and per-stage timings for permutation null sampling and AUROC Monte Carlo computation. For the real-data experiments, the logs include the data-pull time and changepoint-scan time, including the permutation p-value computations at selected peaks. For appendix sensitivity
experiments, each script logs its full wall-clock runtime. Parallel execution is
utilised at scan-level, or permutation-level parallelism where applicable.

The main synthetic benchmark uses 49 scenarios, seven methods, 50 independent
replicates, 499 permutation samples per p-value, and 500 Monte Carlo replicates
for AUROC estimation at sample size $n=400$. The real-data analysis uses the
proposed 499 permutations and the window sizes reported in the corresponding experiment descriptions.

\begin{table*}[!t]
\centering
\small
\setlength{\tabcolsep}{4pt}
\renewcommand{\arraystretch}{1.12}
\begin{tabularx}{\textwidth}{@{} p{0.25\textwidth} X p{0.13\textwidth} p{0.16\textwidth} @{}}
\toprule
\textbf{Experiment} & \textbf{Workload} & \textbf{Wall-clock} & \textbf{Parallelism} \\
\midrule
Synthetic benchmark & 49 scenarios, 7 methods, $R=50$ independent replicates, $B=499$ permutations, $M=500$ AUROC Monte Carlo replicates, and $n=400$ observations & 4.18 h & 96 workers \\
Real-world changepoint scans & 6 scans using $\hat{Q}_{X,Y,Z}^{(\Lambda)}$, $B=499$ permutations, and p-values computed for up to 20 selected peaks per scan & 4.3 min & 96 scan workers; 1 permutation worker \\
\bottomrule
\end{tabularx}
\caption{Compute resources used for the reported experiments. All experiments were run on CPU only on Intel(R) Xeon(R) Gold 6252N CPUs at 2.30GHz, with 96 CPU cores available and 251.5 GB RAM; no GPU acceleration was used. The reported wall-clock timings cover the corresponding experiment runner and include data generation or loading, statistic computation, and permutation or Monte Carlo loops where applicable.}
\label{tab:compute_resources}
\end{table*}

\end{document}